\documentclass[10pt]{scrartcl}

\pdfoutput=1

\usepackage[a4paper, total={16cm, 24cm}]{geometry}

\usepackage{amsmath,amssymb}
\usepackage{microtype}
\usepackage[pagebackref]{hyperref}
\hypersetup{
	breaklinks=true,
	colorlinks,
	citecolor=green!45!black,
	linkcolor=red!60!black,
	pagebackref
}

\usepackage{enumitem}
\usepackage{booktabs}
\usepackage{subcaption}
\usepackage{tikz}
\usepackage{algorithm}
\usepackage[noend]{algorithmic}
\usepackage[round]{natbib}
\usetikzlibrary{decorations.pathreplacing, decorations.pathmorphing, positioning, calc}
\numberwithin{equation}{section}

\newcommand{\bfx}{\boldsymbol{x}}

\newcommand{\vv}{\boldsymbol{v}}
\newcommand{\bbR}{\mathbb{R}_{\geq 0}}
\newcommand{\bbN}{\mathbb{N}}

\newcommand{\conv}{\mathrm{conv}}

\newcommand{\half}{\mathrm{half}}
\newcommand{\lefty}{\mathrm{left}}
\newcommand{\calB}{\mathcal{B}}
\newcommand{\calC}{\mathcal{C}}

\newcommand{\shouter}{{s}}
\newcommand{\shoutleft}{{s}_{\lefty}}
\newcommand{\chooser}{{c}}
\newcommand{\Lumpy}{\mathsf{Lumpy}}

\newcommand{\Egal}{\textit{Egal}}
\newcommand{\MMS}{\text{MMS}}
\renewcommand*{\ge}{\geqslant}
\renewcommand*{\geq}{\geqslant}
\renewcommand*{\le}{\leqslant}
\renewcommand*{\leq}{\leqslant}
\renewcommand{\epsilon}{\varepsilon}

\usepackage{amsthm}
\newtheorem{theorem}{Theorem}[section]
\newtheorem{lemma}[theorem]{Lemma}
\newtheorem{proposition}[theorem]{Proposition}
\theoremstyle{definition}
\newtheorem{definition}[theorem]{Definition}
\newtheorem{example}[theorem]{Example}

\usepackage[nameinlink]{cleveref}

\usepackage{authblk}

\begin{document}

\title{Almost Envy-Free Allocations \\ with Connected Bundles\thanks{D.~Peters was supported by ERC grant 639945 (ACCORD). A.~Igarashi is supported by KAKENHI (Grant-in-Aid for JSPS Fellows, No.~18J00997) Japan. While working on this paper, W.~S.~Zwicker was supported by the Oliver Smithies Visiting Fellowship at Balliol College, Oxford. M.~Flammini and C.~Vinci are supported by the Italian MIUR PRIN 2017 Project ALGADIMAR “Algorithms, Games, and Digital Markets”.}}

\date{\vspace{-35pt}}

\author[1]{Vittorio Bil\`o}
\affil[1]{University of Salento, Lecce, Italy}
\author[2]{Ioannis Caragiannis}
\affil[2]{Aarhus University, Aarhus, Denmark}
\author[3]{Michele Flammini}
\affil[3]{Gran Sasso Science Institute, L'Aquila, Italy}
\author[4]{Ayumi Igarashi}
\affil[4]{National Institute of Informatics, Tokyo, Japan}
\author[5]{Gianpiero~Monaco}
\affil[5]{University of L'Aquila, L'Aquila, Italy}
\author[6]{Dominik Peters}
\affil[6]{University of Toronto, Toronto, Canada}
\author[3]{Cosimo Vinci}
\author[7]{William S. Zwicker}
\affil[7]{Union College, Schenectady, USA}

\maketitle

\begin{abstract}
We study the existence of allocations of indivisible goods that are envy-free up to one good (EF1), under the additional constraint that each bundle needs to be connected in an underlying item graph. 
If the graph is a path and the utility functions are monotonic over bundles, we show the existence of EF1 allocations for at most four agents, and the existence of EF2 allocations for any number of agents; our proofs involve discrete analogues of the Stromquist's moving-knife protocol and the Su--Simmons argument based on Sperner's lemma. For identical utilities, we provide a polynomial-time algorithm that computes an EF1 allocation for any number of agents.
For the case of two agents, we characterize the class of graphs that guarantee the existence of EF1 allocations as those whose biconnected 
components are arranged in a path; this property can be checked in linear time.
\end{abstract}

\section{Introduction}\label{sec:intro}

A famous literature considers the problem of \emph{cake-cutting} \citep{BT96a,Robertson1998,Proc15a}. There, a divisible heterogeneous resource (a \emph{cake}, usually formalized as the interval $[0,1]$) needs to be divided among $n$ agents. Each agent has a valuation function over subsets of the cake, usually formalized as an atomless measure over $[0,1]$. The aim is to partition the cake into $n$ pieces, and allocate each piece to one agent, in a ``fair'' way. By fair, we will mean that the allocation is \emph{envy-free}: no agent thinks that another agent's piece is more valuable than her own.

When there are two agents, the classic procedure of cut-and-choose can produce an envy-free division: a knife is moved from left to right, until an agent shouts to indicate that she thinks the pieces to either side are equally valuable. The other agent then picks one of the pieces, leaving the remainder for the shouter. As is easy to see, the result is an envy-free allocation. For three or more agents, finding an envy-free division has turned out to be much trickier. An early result by \citet{Dubins1961} used Lyapunov's Theorem and measure-theoretic techniques to show, non-constructively, that an envy-free allocation always exists. However, as \citet{Stromquist1980} memorably writes, ``their result depends on a liberal definition of a `piece' of cake, in which the possible pieces form an entire $\sigma$-algebra of subsets. A player
who only hopes for a modest interval of cake may be presented instead with a countable union of crumbs.'' In many applications of resource allocation (such as land division, or the allocation of time slots), agents have little use for a severely disconnected piece of cake.

\citet{Stromquist1980} himself offered a solution, and gave a new non-constructive argument (using topology) which proved that there always exists an envy-free division of the cake into \emph{intervals}. Forest Simmons later observed that the proof could be simplified by using Sperner's lemma, and this technique was subsequently presented in a paper by \citet{Su1999}. For the three-agent case, \citet{Stromquist1980} also presented an appealing moving-knife procedure that more directly yields a connected envy-free allocation. For $n\ge 4$ agents, no explicit procedures are known to produce a connected envy-free allocation (i.e., an allocation where the cake is cut in exactly $n-1$ places). However, for $n=4$, several moving-knife procedures exist that only need a few cuts; for example, the Brams--Taylor--Zwicker \citeyearpar{BramsTaylorZwicker1992} procedure requires 11 cuts, and a protocol of \citet{Barbanel2004} requires 5 cuts.

In many applications, the resources to be allocated are not infinitely divisible, and we face the problem of allocating \emph{indivisible goods}. Most of the literature on indivisible goods has not assumed any kind of structure on the item space, in contrast to the rich structure of the interval $[0,1]$ in cake-cutting. Thus, there has been little attention on minimizing the number of ``cuts'' required in an allocation. However, when the items have a spatial or temporal structure, this consideration is important.

In this paper, we study the allocation of items that are arranged on a \emph{path} or other structure, and impose the requirement that only \emph{connected} subsets of items may be allocated to the agents. Formally, we work in the model of \citet{Bouveret2017}, who assume that the items form the vertex set of a graph $G$, and a bundle is connected if it induces a connected subgraph of $G$. For example, such connectivity requirements are encountered in allocation problems in which the items correspond to indivisible pieces of an underlying region of Euclidean space (such as plots of land), and each allocated bundle (i.e., a collection of pieces) must be a connected portion of the space. We are most interested in the case when $G$ is a path, on which connectivity requirements are natural when the items are time slots, for example. In practice, connectivity may not be a hard constraint, and agents may find bundles with few connected components acceptable, but we will not consider such relaxations.

In the work of \citet{Bouveret2017}, it became apparent that techniques from cake-cutting can be usefully ported to achieve good connected allocations in the indivisible case. For example, moving-knife procedures that achieve proportionality in cake-cutting have analogues that produce allocations that satisfy the \emph{maximin share guarantee} \citep{Budi11a}.%
\footnote{Another paper by \citet{Suksompong2017} works in the same model, and also found that procedures for proportionality and other concepts can be applied to the indivisible setting.}

Do envy-free procedures for cake-cutting also translate to the indivisible case? Of course, in general, it is impossible to achieve envy-freeness with indivisibilities (consider two agents and a single desirable item), but we can look for approximations. A relaxation of envy-freeness that has been very influential recently is envy-freeness \emph{up to one good} (EF1), introduced by \citet{Budi11a}. It requires that an agent's envy towards another bundle vanishes if we remove some item from the envied bundle. In the setting without connectivity constraints and with additive valuations, the maximum Nash welfare solution satisfies EF1, as does a simple round-robin procedure \citep{CKM+16a}. The well-known envy-graph algorithm \citep{LMMS04a} also guarantees EF1. However, none of these procedures respects connectivity constraints.

When items are arranged on a path, we prove that connected EF1 allocations exist when there are two, three, or four agents. As was necessary in cake-cutting, we use successively more complicated tools to establish these existence results.
For two agents, there is a discrete analogue of cut-and-choose that satisfies EF1. In that procedure, a knife moves across the path, and an agent shouts when the knife reaches what we call a \emph{lumpy tie}, that is when the bundles to either side of the knife have equal value \emph{up to one item}. For three agents, we design an algorithm mirroring Stromquist's moving-knife procedure which guarantees EF1. For four agents, we show that Sperner's lemma can be used to prove that an EF1 allocation exists, via a technique inspired by the Simmons--Su approach, and an appropriately triangulated simplex of connected partitions of the path. For five or more agents, we were not able to establish the existence of EF1 allocations on a path, but we can show (again via Sperner's lemma) that EF2 allocations exist, strengthening a prior result of \citet{Suksompong2017}. We also show that if all agents have the same valuation function over bundles, then an egalitarian-welfare-optimal allocation, after suitably reallocating some items, is EF1.

These existence results require only that agents' valuations are monotonic (they need not be additive), and in addition,  ensure that the constructed allocation satisfies the maximin share guarantee (see Appendix~\ref{sec:mms}). Moreover, the fairness guarantee of our algorithms is slightly stronger than the standard notion of EF1: in the returned allocations, envy can be avoided by removing just an \emph{outer} item -- one whose removal leaves the envied bundle connected. Computationally speaking, all our existence results are constructive in the weak sense that an EF1 allocation can be found by iterating through all $O(m^n)$ connected allocation (this stands in contrast to cake-cutting where we cannot iterate through all possibilities). While we know of no faster algorithms to obtain an EF1 or EF2 allocation in the cases where we appeal to Sperner's lemma, our other procedures (for two or three agents, or for identical valuations) can all be implemented efficiently to produce a fair allocation in polynomial time. We summarize our results concerning paths in Table \ref{table}.

\begin{table*}[t]
	\centering
	\begin{tabular}{llll}
		\toprule
		& $\#$ of agents & EF1 & EF2 \\
		\midrule
		Existence
		& $n=2$   & \checkmark~(Thm.~\ref{lemma:cutchoose-path}) & \checkmark\\
		& $n=3$ & \checkmark~(Thm.~\ref{thm:EF1-3agents})& \checkmark\\
		& $n=4$ & \checkmark~(Thm.~\ref{thm:EF1-4agents}) & \checkmark\\
		& $n \geq 5$ & open problem & \checkmark~(Thm.~\ref{thm:EF2})\\
		& $n$ agents, identical valuations & \checkmark~(Thm.~\ref{thm:ef1-identical}) & \checkmark \\
		\midrule
		Complexity
		& $n=2$ & $O(\log m)$~\citep{Oh2018} & $O(\log m)$\\
		& $n=3$ & $O(m)$~(Thm.~\ref{thm:EF1-3agents}) &  $O(m)$ \\
		& $n \geq 4$ &  $O(m^n)$& $O(m^n)$\\
		& $n$ agents, identical valuations & $O(mn)$~(Thm.~\ref{thm:ef1-identical}) & $O(mn)$ \\
		\bottomrule
	\end{tabular}
	\caption{
		Overview of our results for paths. Here, $n$ denotes the number of agents and $m$ the number of items. Agents' valuations are assumed to be monotone. The mark $\checkmark$ represents that a connected allocation satisfying the corresponding fairness notion exists. When no reference is given, the result follows from other results in the table. 
	}
	\vspace{-2pt}
	\label{table}
\end{table*}

In simultaneous and independent work, \citet{Oh2018} designed protocols to find EF1 allocations in the setting without connectivity constraints, aiming for low \emph{query complexity}. They found that adapting cake-cutting protocols to the setting of indivisible items arranged on a path is an especially potent way to achieve low query complexity. This led them to also study a discrete version of the cut-and-choose protocol which achieves connected EF1 allocations for two agents, and they found an alternative proof that an EF1 allocation on a path always exists with identical valuations. They also present a discrete analogue of the Selfridge--Conway procedure which, for three agents with additive valuations, produces an allocation of a path into bundles that have a constant number of connected components. However, they do not study connected allocations on graphs that are not paths, and they do not consider the case of (non-identical) general valuations with more than two agents.

A recurring theme in our algorithms is the specific way that the moving knives from cake-cutting are rendered in the discrete setting. While one might expect knives to be placed over the edges of the path, and `move' from edge to edge, we find that this movement is too `fast' to ensure EF1 (see also footnote~\ref{footnote:weak-ef1} regarding EF2). Instead, our knives alternate between hovering over edges and items. When a knife hovers over an item, we imagine the knife's blade to be `thick': the knife \emph{covers} the item, and agents then pretend that the covered item does not exist. These intermediate steps are useful, since they can tell us that envy will vanish if we hide an item from a bundle.

What about graphs $G$ other than paths? Our existential 
results for paths immediately generalize to traceable graphs (those that contain a Hamiltonian path), since we can run the algorithms pretending that the graph only consists of the Hamiltonian path. For the two-agent case, we completely characterize the class of graphs that guarantee the existence of EF1 allocations: Our discrete cut-and-choose protocol can be shown to work on all graphs $G$ that admit a \emph{bipolar numbering}, which exists if and only if the biconnected components (blocks) of $G$ can be arranged in a path. By constructing counterexamples, we prove that no graph failing this condition (for example, a star) guarantees EF1, even for identical, additive, binary valuations. For the case of three or more agents, it is a challenging open problem to characterize the class of graphs guaranteeing EF1 (or even to find an infinite class of non-traceable graphs that guarantees EF1). 

\section{Preliminaries}\label{sec:prem}
For each natural number $s \in \bbN$, write $[s]=\{1,2,\ldots,s\}$. Let $N=[n]$ be a finite set of \emph{agents} and $G=(V,E)$ be an undirected finite graph, where $V=\{v_1,v_2,\ldots,v_m\}$. 
The vertices in $V$ correspond to  \emph{items}.
A subset $I$ of $V$ is \emph{connected} if it induces a connected subgraph of $G$.
We write $\calC(V)$ for the set of connected subsets of $V$.
We call a set $I \in \calC(V)$ a (connected) \emph{bundle}.
Each agent $i \in N$ has a \emph{valuation function} $u_i: \calC(V) \rightarrow \mathbb R$ over connected bundles, which we will always assume to be \emph{monotonic}, that is, $X \subseteq Y$ implies $u_i(X) \leq u_i(Y)$.
We also assume that $u_i(\emptyset)=0$ for each $i \in N$.
Monotonicity implies that items are \emph{goods}; we do not consider bads (or chores) in this paper.
We say that an agent $i \in N$ \emph{weakly prefers} bundle $X$ to bundle $Y$ if $u_i(X) \geq u_i(Y)$.%
\footnote{Our arguments only operate based on agents' ordinal preferences over bundles, and the (cardinal) valuation functions are only used for notational convenience. One exception, perhaps, is in Algorithm~\ref{alg:leximin-ef1} where we calculate a leximin allocation, but the algorithm can be applied after choosing an arbitrary utility function consistent with the ordinal preferences.}
A (connected) \emph{allocation}~$A : N \to \calC(V)$ assigns each agent $i\in N$ a connected bundle $A(i) \in \calC(V)$ such that each item occurs in exactly one bundle, i.e., $\bigcup_{i\in N} A(i) = V$ and $A(i) \cap A(j) = \emptyset$ when $i\neq j$. We often write $I^i$ for the bundle $A(i)$ assigned to agent $i$.

We say that the agents have \emph{identical valuations} if, for all $i,j\in N$ and every bundle $I\in \calC(V)$, we have $u_i(I)=u_j(I)$. A valuation function $u_i$ is \emph{additive} if $u_i(I) = \sum_{v\in I} u_i(\{v\})$ for each bundle $I\in \calC(V)$. Many examples in this paper will use identical additive valuations, and will take $G$ to be a path. In this case, we use a shorthand to specify these examples; the meaning of this notation should be clear. For example, we write ``2--1--3--1'' to denote an instance with four items $v_1,v_2,v_3,v_4$ arranged on a path, and where $u_i(\{v_1\}) =2$, \dots, $u_i(\{v_4\}) = 1$ for each $i$. For such an instance, an allocation will be written as a tuple, e.g., (2, 1--3--1) denoting an allocation allocating bundles $\{v_1\}$ and $\{v_2, v_3, v_4\}$, noting that with identical valuations it does not usually matter which agent receives which bundle.

An allocation $A$ is \emph{envy-free} if $u_i(A(i)) \ge u_i(A(j))$ for every pair $i,j\in N$ of agents, that is, if every agent thinks that their bundle is at least as good as any other bundle in the allocation.
It is well-known that an envy-free allocation may not exist (consider two agents and one good).
The main fairness notion that we study is a version of \emph{envy-freeness up to one good} (EF1), a relaxation of envy-freeness introduced by \citet{Budi11a}, adapted to the model with connectivity constraints.
This property states that an agent $i$ will not envy another agent $j$ after we remove some item from $j$'s bundle. Since we only allow connected bundles in our set-up, we may only remove an item from $A(j)$ if removal of this item leaves the bundle connected.

\begin{definition}[EF1: envy-freeness up to one \emph{outer} good]
	An allocation $A$ satisfies \emph{EF1} if, for any pair $i,j \in N$ of agents, either $A(j) = \emptyset$ or there is a good $v \in A(j)$ such that $A(j) \setminus \{v\}$ is connected and $u_i(A(i)) \ge u_i(A(j)\setminus \{v\})$.
\end{definition}

In the instance 2--1--3--1 for two agents, the allocation (2--1, 3--1) is EF1, since the left agent's envy can be eliminated by removing the item of value 3 from the right-hand bundle. However, the allocation (2, 1--3--1) fails to be EF1 according to our definition, since eliminating either outer good of the right bundle does not prevent envy.%
\footnote{This example shows that our definition is strictly stronger than the standard definition of EF1 without connectivity constraints. In the instance 2--1--3--1, considered without connectivity constraints, the allocation (2, 1--3--1) does satisfy EF1 since in the standard setting we are allowed to remove the middle item (with value 3) of the right bundle.}

\begin{definition}
	A graph $G$ \emph{guarantees EF1} for $n$ agents if, for all possible monotonic valuations for $n$ agents, there exists some connected allocation that is EF1. A graph $G$ guarantees EF1 for $n$ agents and a restricted class of valuations if, for all allowed valuations, a connected EF1 allocation exists.
\end{definition}

For reasoning about EF1 allocations, let us introduce a few shorthands. Given an allocation $A$ we will say that $i\in N$ \emph{does not envy $j\in N$ up to $v$} if $u_i(A(i)) \ge u_i(A(j)\setminus \{v\})$.
The \emph{up-to-one valuation} $u^-_i: \calC(V) \rightarrow \bbR$ of agent $i\in N$ is defined, for every $I \in \calC(V)$, as
\begin{equation}
\label{eq:up-to-one-valuation}
u^-_i(I) :=
\begin{cases}
0 & \text{if $I = \emptyset$,} \\
\min \big\{ u_i(I\setminus\{v\}) : v\in I \text{ such that } I\setminus\{v\}\text{ is connected} \big\} & \text{if $I \neq \emptyset$.}
\end{cases}
\end{equation}
Thus, an allocation $A$ satisfies EF1 if and only if $u_i(A(i)) \ge u_i^-(A(j))$ for any pair $i,j\in N$ of agents.

As we show in the appendix in Example~\ref{ex:EFX}, allocations satisfying a strengthened version of EF1 called envy-freeness up to the \emph{least} good (EFX) \citep{CKM+16a} may not exist on a path.

Given an ordered sequence of the vertices $P=(v_1,v_2,\ldots,v_m)$, and $j,k\in [m]$ with $j\leq k$, we write $P(v_j,v_k)$ for the subsequence from $v_j$ to $v_k$, so
$P(v_j,v_k)=(v_j,v_{j+1},\ldots, v_{k-1},v_{k})$. With a little abuse of notation, we often identify a subsequence $P(v_j,v_k)$ with the bundle of the corresponding vertices. 
Let $L(v_j)=P(v_1,v_{j-1})$ be the subsequence of vertices strictly left of $v_j$ and $R(v_j)=P(v_{j+1},v_{m})$ be the subsequence of vertices strictly right of $v_j$. 
When graph $G$ is a path, we always implicitly assume that its vertices $v_1,v_2,\ldots,v_m$ are numbered from left to right according to the order they appear along the path, so that the set of the edges of $G$ is $\{\{v_j,v_{j+1}\} : 1 \leq j < m\}$. Each connected bundle in the path clearly corresponds to a subpath or subsequence of the vertices.
A \emph{Hamiltonian path} of a graph $G$ is a path that visits all the vertices of the graph exactly once. A graph is \emph{traceable} if it contains a Hamiltonian path.

\section{EF1 existence for two agents}\label{sec:two}
In cake-cutting for two agents, the standard way of obtaining an envy-free allocation is the cut-and-choose protocol: Alice divides the cake into two equally-valued pieces, and Bob selects the piece he prefers; the other piece goes to Alice. The same strategy almost works in the indivisible case when items form a path; the problem is that Alice might not be able to divide the items into two exactly equal pieces. Instead, we ask Alice to divide the items into pieces that are equally valued ``up to one good''. The formal version is as follows. For a sequence of vertices $P=(v_1,v_2,\ldots,v_m)$ and an agent $i$, we say that $v_j$ is the \emph{lumpy tie} over $P$ for agent $i$ if $j$ is the smallest index such that
\begin{equation}
\label{eq:def-lumpy-tie-1}
u_i(L(v_j) \cup \{ v_j \}) \ge u_i(R(v_j)) \quad \text{and}\quad u_i(R(v_j) \cup \{v_j\}) \ge u_i(L(v_j)).
\end{equation}
For example, when $i$ has additive valuations 1--3--2--1--3--1, then the third item (of value 2) is the lumpy tie for $i$, since $1+3+2 \ge 1 + 3 + 1$ and $2+1+3+1 \ge 1 + 3$. The lumpy tie always exists: taking $j$ to be the smallest index such that $u_i(L(v_j) \cup \{ v_j \}) \ge u_i(R(v_j))$ (which exists as the inequality holds for $j=m$ by monotonicity), the first part of \eqref{eq:def-lumpy-tie-1} holds. If $j = 1$, the second part of \eqref{eq:def-lumpy-tie-1} is immediate by monotonicity. If $j > 1$, then since $j$ is minimal, we have $u_i(L(v_j)) = u_i(L(v_{j-1}) \cup \{ v_{j-1} \}) < u_i(R(v_{j-1})) = u_i(R(v_j) \cup \{ v_j \})$ as required.

Using lumpy ties, our discrete version of the cut-and-choose protocol is specified as follows.

\begin{definition}
	The \textbf{discrete cut-and-choose protocol for $n=2$ agents} on a sequence $P=(v_1,v_2,\ldots,v_m)$ proceeds as follows:
	\begin{itemize}
		\item \textit{Step 1.} Alice selects her lumpy tie $v_j$ over $(v_1,v_2,\ldots,v_m)$.
		\item \textit{Step 2.} Bob chooses a weakly preferred bundle among $L(v_j)$ and $R(v_j)$.
		\item \textit{Step 3.} Alice receives the bundle of all the remaining vertices, including $v_j$.
	\end{itemize}
\end{definition}

Intuitively, the protocol allows Alice to select an item $v_j$ that she will receive for sure, with the advice that the two pieces to either side of $v_j$ should have almost equal value to her. Then, Bob is allowed to choose which side of $v_j$ he wishes to receive. In our example with valuations 1--3--2--1--3--1, Alice selects the lumpy tie of value 2, then Bob chooses the bundle 1--3--1 to the right and receives it, and Alice receives the bundle 1--3--2. The result is EF1.
This is true in general, and also if valuations are not identical.

\begin{proposition}\label{lemma:cutchoose-path}
	When $G$ is a path and there are $n=2$ agents, the discrete cut-and-choose protocol yields an EF1 allocation.
\end{proposition}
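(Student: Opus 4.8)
The plan is to fix notation and then check the two directions of potential envy directly from the definitions. Write $a$ for Alice and $b$ for Bob, let $v_j$ be Alice's lumpy tie over $P$, and let $B$ be the bundle Bob selects in Step~2, so $B\in\{L(v_j),R(v_j)\}$ with $u_b(B)\ge u_b(B')$, where $B'$ denotes the other of the two bundles. By Step~3, Alice receives $A(a)=B'\cup\{v_j\}$, which is the subpath $P(v_1,v_j)$ if $B=R(v_j)$ and the subpath $P(v_j,v_m)$ if $B=L(v_j)$; in either case $v_j$ is an \emph{endpoint} of this subpath, so $A(a)\setminus\{v_j\}$ is connected, i.e., $v_j$ is an outer good of $A(a)$.

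First I would check that Alice does not envy Bob (not even up to one good). If $B=R(v_j)$, the first inequality of~\eqref{eq:def-lumpy-tie-1} gives $u_a(A(a))=u_a(L(v_j)\cup\{v_j\})\ge u_a(R(v_j))=u_a(B)=u_a(A(b))$; if $B=L(v_j)$, the second inequality of~\eqref{eq:def-lumpy-tie-1} gives $u_a(A(a))=u_a(R(v_j)\cup\{v_j\})\ge u_a(L(v_j))=u_a(A(b))$. Either way Alice weakly prefers her own bundle, so the EF1 condition holds trivially for the pair $(a,b)$ (if $A(b)\neq\emptyset$, removing any outer good from $A(b)$ only decreases its value by monotonicity).

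Next I would check that Bob does not envy Alice up to the outer good $v_j$. We have $A(a)\setminus\{v_j\}=B'$, and Bob picked $B$ as a weakly preferred bundle, so $u_b(A(b))=u_b(B)\ge u_b(B')=u_b(A(a)\setminus\{v_j\})$. Since we already observed that $v_j$ is an outer good of $A(a)$, this is exactly the statement that Bob does not envy Alice up to one outer good. Combined with the previous paragraph, the allocation is EF1 (in fact in the strengthened ``outer good'' sense).

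I do not expect a real obstacle: the argument is a short unwinding of the lumpy-tie definition and of Bob's choice rule, and the only mild subtlety is the bookkeeping that the removed item $v_j$ lies at the boundary of Alice's subpath, which is needed both for the connectivity requirement in the EF1 definition and for the stronger ``outer'' version. (Existence and well-definedness of the lumpy tie is already established in the discussion preceding the protocol.) It may also be worth remarking that envy can only run from Bob to Alice, and that it is removed by deleting the single item $v_j$.
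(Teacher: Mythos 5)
Your proof is correct and follows essentially the same argument as the paper's: Alice's non-envy comes directly from the two lumpy-tie inequalities~\eqref{eq:def-lumpy-tie-1}, and Bob's envy is eliminated by removing $v_j$, which he declined in favor of his preferred side. The only difference is that you spell out explicitly that $v_j$ is an endpoint of Alice's subpath (so its removal preserves connectivity, as required by the paper's ``outer good'' version of EF1), a detail the paper leaves implicit.
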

\begin{proof}
	Clearly, the protocol returns a connected allocation.
	The returned allocation satisfies EF1: Bob does not envy Alice up to item $v_j$, since Bob receives his preferred bundle among $L(v_j)$ and $R(v_j)$. Also, by \eqref{eq:def-lumpy-tie-1}, Alice does not envy Bob, since Alice either receives the bundle $L(v_j) \cup \{ v_j \}$ which she weakly prefers to Bob's bundle $R(v_j)$, or she receives the bundle $R(v_j) \cup \{ v_j \}$, which she weakly prefers to Bob's bundle $L(v_{j})$.
\end{proof}

Proposition~\ref{lemma:cutchoose-path} implies that an EF1 allocation always exists on a path. Hence, an EF1 allocation exists for every traceable graph $G$: simply use the discrete cut-and-choose protocol on a Hamiltonian path of $G$. In fact, the discrete cut-and-choose protocol works on a broader class of graphs: We only need to require that the vertices of the graph can be numbered in a way that the allocation resulting from the discrete cut-and-choose protocol is guaranteed to be connected. Since the protocol always partitions the items into an initial and a terminal segment of the sequence, such a numbering needs to satisfy the following property.

\begin{definition}\label{def:quasi-Hamiltonian}
	A \emph{bipolar numbering} of a graph $G$ is an ordering $(v_1, v_2, \dots ,$ $v_m)$ of its vertices such that for all $j \in [n]$, the sets $L(v_j) \cup \{ v_j \}$ and $R(v_j) \cup \{ v_j \}$ are connected in $G$.
\end{definition}

In a slightly different context, bipolar numberings are known as $st$-numberings and turn out to be useful in algorithms for testing planarity and for graph drawing \citep{Lempel1967,Even1976,Tarjan1986}.
The more common (equivalent) definition is phrased to say that a numbering is bipolar if, for every $j\in [n]$, the vertex $v_j$ has a neighbor that appears earlier in the sequence, and a neighbor that appears later in the sequence.  

Clearly, every traceable graph has a bipolar numbering, since we can just use a Hamiltonian path. However, there are also non-traceable graphs that admit a bipolar numbering. Figure~\ref{fig:qHgraph} shows some examples.

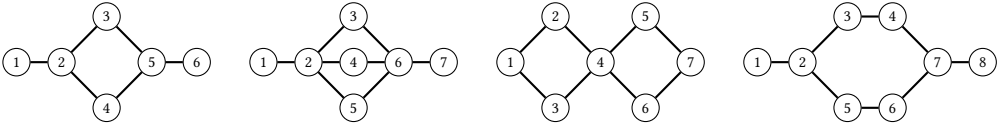
\begin{figure*}[ht]
	\centering
		\begin{minipage}{0.3\textwidth}
			\begin{tikzpicture}[scale=0.9, transform shape]
			\node[draw, circle](1) at (0,0) {1};
			\node[draw, circle](2) at (1,0) {2};
			\node[draw, circle](3) at (2,1) {3};
			\node[draw, circle](4) at (2,-1) {4};
			\node[draw, circle](5) at (3,0) {5};
			\node[draw, circle](6) at (4,0) {6};
			\draw[-, >=latex,thick] (1)--(2) (2)--(3) (2)--(4) (3)--(5) (4)--(5) (5)--(6);	
			\end{tikzpicture}
		\end{minipage}
		\begin{minipage}{0.3\textwidth}
			\begin{tikzpicture}[scale=0.9, transform shape]
			\node[draw, circle](1) at (0,0) {1};
			\node[draw, circle](2) at (1,0) {2};
			\node[draw, circle](3) at (2,1) {3};
			\node[draw, circle](4) at (2,0) {4};
			\node[draw, circle](5) at (2,-1) {5};
			\node[draw, circle](6) at (3,0) {6};
			\node[draw, circle](7) at (4,0) {7};
			\draw[-, >=latex,thick] (1)--(2) (2)--(3) (2)--(4) (2)--(5) (3)--(6) (4)--(6) (5)--(6) (6)--(7);	
			\end{tikzpicture}
		\end{minipage}
		\begin{minipage}{0.3\textwidth}
			\begin{tikzpicture}[scale=0.8, transform shape]
			\node[draw, circle](1) at (0,0) {1};
			\node[draw, circle](2) at (1,0) {2};
			\node[draw, circle](3) at (2,1) {3};
			\node[draw, circle](4) at (3,1) {4};
			\node[draw, circle](5) at (2,-1) {5};
			\node[draw, circle](6) at (3,-1) {6};
			\node[draw, circle](7) at (4,0) {7};
			\node[draw, circle](8) at (5,0) {8};
			
			\draw[-, >=latex,thick] (1)--(2) (2)--(3) (3)--(4) (2)--(5) (5)--(6) (4)--(7) (6)--(7) (7)--(8);	
			\end{tikzpicture}
		\end{minipage}
	\caption{Non-traceable graphs with bipolar numberings.
		\label{fig:qHgraph}
	}
\end{figure*}

\begin{proposition}\label{lemma:cutchoose}
	When there are $n=2$ agents, then the discrete cut-and-choose protocol run on a bipolar numbering of $G$ yields an EF1 allocation.
\end{proposition}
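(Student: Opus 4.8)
The plan is to follow the proof of Proposition~\ref{lemma:cutchoose-path} almost verbatim; the only genuinely new ingredient is that a bipolar numbering supplies all the connectivity we need to run the protocol and to conclude that its output is a connected allocation. So the first step is to record the key structural fact: if $(v_1,\dots,v_m)$ is a bipolar numbering of $G$, then for every $k\in[m]$ both the prefix $\{v_1,\dots,v_k\}$ and the suffix $\{v_k,\dots,v_m\}$ induce connected subgraphs of $G$. With Definition~\ref{def:quasi-Hamiltonian} this is essentially immediate: $\{v_1,\dots,v_k\}$ is exactly $L(v_k)\cup\{v_k\}$ and $\{v_k,\dots,v_m\}$ is exactly $R(v_k)\cup\{v_k\}$, both of which are connected by assumption. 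In particular, for any index $j$ all four bundles $L(v_j)$, $L(v_j)\cup\{v_j\}$, $R(v_j)$, $R(v_j)\cup\{v_j\}$ lie in $\calC(V)$: for $L(v_j)$ apply the prefix fact to $\{v_1,\dots,v_{j-1}\}$, for $R(v_j)$ the suffix fact to $\{v_{j+1},\dots,v_m\}$, and the empty set arising when $j=1$ or $j=m$ causes no trouble.

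Next I would check that the protocol is well defined on a bipolar numbering. The lumpy tie $v_j$ is defined through~\eqref{eq:def-lumpy-tie-1}, which only ever evaluates $u_i$ on the four bundles just listed, so the definition makes sense; moreover the existence argument given just before Proposition~\ref{lemma:cutchoose-path} goes through unchanged, since it uses only monotonicity of $u_i$ together with the set identities $L(v_j)=L(v_{j-1})\cup\{v_{j-1}\}$ and $R(v_{j-1})=R(v_j)\cup\{v_j\}$, which hold for any ordering of the vertices. In Step~2 Bob chooses between the two bundles $L(v_j)$ and $R(v_j)$, both of which are connected by the structural fact, and in Step~3 Alice receives the complementary bundle together with $v_j$, namely $R(v_j)\cup\{v_j\}$ or $L(v_j)\cup\{v_j\}$, again connected by Definition~\ref{def:quasi-Hamiltonian}. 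Hence the allocation returned is a connected allocation.

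It remains to verify EF1, and here the argument is identical to that in the proof of Proposition~\ref{lemma:cutchoose-path}. Bob does not envy Alice up to the single item $v_j$: Alice's bundle minus $v_j$ is precisely the one of $L(v_j),R(v_j)$ that Bob did not take, this removal leaves her bundle connected, and Bob weakly prefers his own bundle to it by his choice in Step~2. Alice does not envy Bob at all: by~\eqref{eq:def-lumpy-tie-1}, she weakly prefers $L(v_j)\cup\{v_j\}$ to $R(v_j)$ and $R(v_j)\cup\{v_j\}$ to $L(v_j)$, so whichever side Bob selects, the bundle Alice keeps is weakly preferred by her to Bob's. Thus the allocation is EF1.

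I do not expect a real obstacle in this proof; the one thing worth being careful about is the connectivity bookkeeping — making sure that every set on which we invoke a valuation $u_i$ (when defining the lumpy tie, when Bob chooses, and in the final assignment to Alice) genuinely belongs to $\calC(V)$. This is exactly what the prefix/suffix-connectivity property of a bipolar numbering guarantees. (Had one instead started from the ``earlier/later neighbour'' formulation of a bipolar numbering, the only extra work would be a short induction showing that each prefix and each suffix is connected, using that, e.g., every vertex other than $v_1$ has a lower-indexed neighbour.)
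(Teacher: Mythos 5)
Your proof is correct and follows the same route as the paper: observe that a bipolar numbering makes every prefix and suffix connected, note that the protocol only ever produces such bundles, and then reuse the EF1 argument from Proposition~\ref{lemma:cutchoose-path} verbatim. The extra bookkeeping you do to confirm that the lumpy tie is well defined is a reasonable elaboration of what the paper leaves implicit, but it is not a different approach.
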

\begin{proof}
	The discrete cut-and-choose protocol returns an allocation whose bundles are either initial or terminal segments of the ordered sequence $(v_1, v_2, \dots , v_m)$. By definition of a bipolar numbering, such an allocation is connected, and it is EF1 by the same argument as in Proposition~\ref{lemma:cutchoose-path}.
\end{proof}

It is clear that the discrete cut-and-choose protocol cannot be extended to graphs other than those admitting a bipolar numbering. However, it could be that a different protocol is able to produce EF1 allocations on other graphs. In the remainder of this section, we prove that this is not the case: for $n=2$ agents, a connected graph $G$ guarantees the existence of an EF1 allocation if and only if it admits a bipolar numbering. This completely characterizes the class of graphs that guarantee EF1 existence in the two-agent case.\footnote{Note that no non-trivial disconnected graph guarantees EF1 for two agents: If $G$ is disconnected, take a connected component $C$ with at least two vertices. Let both agents have additive valuations that value each item in $C$ at 1, and value items outside of $C$ at 0. Then, in a connected allocation, all items in $C$ must go to a single agent, since the other agent needs to receive items from another connected component. This induces envy in the other agent that is not bounded by one good.}

For a different number of agents, the class of graphs guaranteeing an EF1 allocation will be different. In particular, the star with three leaves does not guarantee an EF1 allocation for two agents (as it does not have a bipolar numbering, see below), but one can check that this star does guarantee an EF1 allocation for three or more agents (see Example \ref{ex:EF1:nonHamiltonian} in the appendix).

\subsection{Characterization of graphs guaranteeing EF1 for two agents}
Based on a known characterization of graphs admitting a bipolar numbering,
we characterize this class in terms of forbidden substructures. We then show that these forbidden structures are also forbidden for EF1: if a graph contains such a structure, we can exhibit an additive valuation profile for which no EF1 allocation exists.

As a simple example, consider the star with three leaves, which is the smallest connected graph that does not have a bipolar numbering. 
\[  
\scalebox{0.68}{
	\begin{tikzpicture}[scale=1, transform shape]
		\node[draw, circle](1) at (0,0) {};
		\node[draw, circle](2) at (-1,-0.7) {};
		\node[draw, circle](3) at (0,-1.2) {};
		\node[draw, circle](4) at (1,-0.7) {};
		\draw[-, >=latex,thick] (1)--(2) (1)--(3) (1)--(4);	
\end{tikzpicture}}
\]
Take two agents with identical additive valuations that value each item at 1. Any connected allocation must allocate three items to one agent, and a single item to the other agent. Then the latter agent envies the former agent, even up to one good. This star is an example of a forbidden substructure called a trident, which takes one of two forms, illustrated in Figure~\ref{fig:trident}.
\begin{definition}
A graph $G$ \textit{contains a trident} if either
\begin{enumerate}
	\item[(a)] there is a vertex $s$ whose removal from $G$ leaves three or more connected components (a \emph{type I} trident), or
	\item[(b)] there are subgraphs $C, P_1, P_2, P_3$ of $G$ such that (i)
	 $P_1,P_2,P_3$ are vertex-disjoint, (ii) each $P_i$ contains at least two vertices, (iii) $C$ has exactly one \emph{contact vertex} $s_i$ in common with $P_i$, $i=1,2,3$, and (iv) for $i=1,2,3$, removal of vertex $s_i$ from $G$ disconnects $P_i \setminus \{ s_i \}$ from $C \setminus \{ s_i \}$ (hence from the other two $P_j$) in $G$ (a \emph{type II} trident).
\end{enumerate}
\end{definition}

	\begin{figure*}[ht]
		\centering
		\begin{subfigure}[b]{0.45\textwidth}
			\centering
			\begin{tikzpicture}[scale=0.8, transform shape]
			\def \radius {2cm}
			\node[draw, circle,fill=black!80, inner sep=0.75mm](node0) at (0,0) {};
			
			\node(node1) at ({60}:\radius) {};
			\node(node2) at ({120}:\radius) {};
			\node(node3) at ({180}:\radius) {};
			\node(node4) at ({240}:\radius) {};
			\node(node5) at ({300}:\radius) {};
			\node(node6) at ({360}:\radius) {};
			
			\draw[decorate, decoration={bent}] (node0) --({60}:\radius) (node0) --({120}:\radius) (node0) --({180}:\radius) (node0) --({240}:\radius) (node0) --({300}:\radius) (node0) --({360}:\radius);
			\draw[decorate, decoration={random steps,segment length=5mm}] ({60}:\radius) -- ({120}:\radius) ({180}:\radius) -- ({240}:\radius) ({300}:\radius) -- ({360}:\radius);
			
			
			\end{tikzpicture}
		\end{subfigure}
		\quad
		\begin{subfigure}[b]{0.45\textwidth}
			\centering
			\begin{tikzpicture}[scale=0.4, transform shape,
			zig/.style={decorate, decoration={zigzag, amplitude=0.05mm}}]
			
			\draw[zig] (1,0.2) arc (180:90:2cm);
			\draw[zig] (1,-0.2) arc (180:360:2cm);
			\draw[zig] (3.1,2.2) arc (90:0:2cm);
			
			\node[draw, circle,fill=black!80](1) at (1,0) {};
			\node[draw, circle,fill=black!80](2) at (3.1,2.2) {};
			\node[draw, circle,fill=black!80](3) at (5.1,0) {};
			\node at (3,0) {\huge $C$};
			
			\begin{scope}[shift={(-4.1,0)}]
			\draw[zig] (1,0) arc (180:0:2cm);
			\draw[zig] (1,0) arc (180:360:2cm);
			\node at (3,0) {\huge $P_1$};
			\end{scope}
			
			\begin{scope}[shift={(4.2,0)}]
			\draw[zig] (1,0) arc (180:0:2cm);
			\draw[zig] (1,0) arc (180:360:2cm);
			\node at (3,0) {\huge $P_3$};
			\end{scope}
			
			\begin{scope}[shift={(0,4.2)}]
			\draw[zig] (1,0) arc (180:0:2cm);
			\draw[zig] (1,0) arc (180:360:2cm);
			\node at (3,0) {\huge $P_2$};
			\end{scope}
			
			\end{tikzpicture}
		\end{subfigure}
		\caption{A type I trident (left) and a type II trident (right).}
		\label{fig:trident}
	\end{figure*}

\noindent
We will prove that a graph $G$ fails to admit a bipolar numbering, and fails to guarantee EF1 for two agents, if and only  if $G$ contains a trident.   To reason about these structures, it is useful to consider the standard concept of the \emph{block decomposition} of a graph \citep[see, e.g., the textbook][Sec.~5.2]{Bondy:2008}. 

\begin{definition}
	A \emph{decomposition} of a graph $G=(V,E)$ is a family $\{F_1,F_2,\ldots,F_t\}$ of edge-disjoint subgraphs of $G$ such that $\bigcup^t_{i=1}E(F_i)=E$ where $E(F_i)$ is the set of edges of $F_i$.
	A vertex is called a \emph{cut vertex} of a graph $G$ if removing it increases the number of connected components of $G$. A graph $G$ is \emph{biconnected} if $G$ is connected and does not have a cut vertex. A \emph{block} of $G$ is a maximal biconnected subgraph of $G$.
\end{definition}

Equivalently, a block of a graph $G$ can be defined as a maximal subgraph of $G$ where each pair of vertices lie on a common cycle \citep{Bondy:2008}. Given a connected graph $G$, we define a bipartite graph $B(G)$ with bipartition $(\calB,S)$, where $\calB$ is the set of blocks of $G$ and $S$ is the set of cut vertices of $G$; a block $B$ and a cut vertex $v$ are adjacent in $B(G)$ if and only if $B$ includes $v$. Since every cycle of a graph is included in some block, the graph $B(G)$ is a tree:

\begin{lemma}[e.g., \citealp{Bondy:2008}, Prop.~5.3]
	Let $G$ be a connected graph. Then
	\begin{itemize}
		\item any two blocks of $G$ have at most one cut vertex in common;
		\item the set of blocks forms a decomposition of $G$; and
		\item the graph $B(G)$ is a tree.
	\end{itemize}
\end{lemma}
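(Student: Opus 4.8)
The plan is to prove the three assertions from the standard structure theory of blocks, using as the single workhorse the observation that \emph{the union of two biconnected subgraphs sharing at least two vertices is again biconnected}. So the first step would be to record this auxiliary fact: if $H_1,H_2$ are biconnected subgraphs with $|V(H_1)\cap V(H_2)|\ge 2$, then $H_1\cup H_2$ is connected (the $H_i$ share a vertex) and has no cut vertex, because for every vertex $x$ the subgraphs $H_1-x$ and $H_2-x$ are each connected and still share some vertex of $V(H_1)\cap V(H_2)$ distinct from $x$, so $(H_1\cup H_2)-x$ is connected. The only thing requiring care here is the degenerate blocks $K_1$ and $K_2$, for which one checks directly that $H-x$ stays connected; I expect this bookkeeping around degenerate blocks to be the sole real friction in the whole argument.

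Granting the auxiliary fact, the first bullet is quick: if two distinct blocks $B_1,B_2$ shared two distinct vertices, then $B_1\cup B_2$ would be a biconnected subgraph properly containing $B_1$ (since two distinct maximal biconnected subgraphs cannot be nested, $B_2\not\subseteq B_1$), contradicting maximality of $B_1$. Hence any two blocks share at most one vertex, and in particular at most one cut vertex. The decomposition bullet then follows immediately: edge-disjointness holds because two blocks sharing an edge would share both of its endpoints, contradicting the first bullet; and the blocks cover $E$ because any single edge, viewed as a copy of $K_2$, is a biconnected subgraph and hence lies in a maximal one, i.e., in some block. So each edge belongs to exactly one block.

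For the last bullet, $B(G)$ is connected because $G$ is: between any two blocks one walks along a path of $G$, which visits an alternating sequence of blocks and cut vertices. To see that $B(G)$ is acyclic, suppose it had a cycle; being bipartite, this cycle reads $B_1,s_1,B_2,s_2,\dots,B_k,s_k$ with $k\ge 2$, the $B_i$ distinct blocks, the $s_i$ distinct cut vertices, and $s_i\in V(B_i)\cap V(B_{i+1})$ (indices mod $k$). Set $H=B_1\cup\cdots\cup B_k$. Then $H$ is connected, and for any vertex $x$ the pieces $B_1-x,\dots,B_k-x$ are each connected while consecutive pieces $B_i-x$ and $B_{i+1}-x$ still meet in $s_i$ whenever $s_i\ne x$ — which holds for all but at most one index, since the $s_i$ are distinct — so the ``meeting graph'' of the pieces is a $k$-cycle with at most one edge deleted, hence connected, and therefore $H-x$ is connected. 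Thus $H$ is a biconnected subgraph properly containing $B_1$ (it contains $B_2\ne B_1$), contradicting maximality of $B_1$. Hence $B(G)$ is connected and acyclic, i.e., a tree.

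Overall, the proof is essentially a single idea — ``overlapping biconnected pieces merge'' — applied three times; the main (mild) obstacle is simply handling the $K_1$/$K_2$ degenerate blocks uniformly so that none of the connectivity-after-vertex-deletion claims fails on a technicality.
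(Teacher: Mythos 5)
The paper does not prove this lemma at all --- it is quoted from Bondy and Murty's textbook as a known fact --- so there is no in-paper argument to compare yours against. Your proof is the standard one and is essentially correct: the merging lemma (two biconnected subgraphs sharing at least two vertices have a biconnected union) does indeed deliver the first bullet, edge-disjointness and coverage for the second, and acyclicity of $B(G)$ via the union $H=B_1\cup\cdots\cup B_k$ around a putative cycle; the ``cycle minus at most one edge'' observation for the meeting graph of the pieces $B_i-x$ is the right way to handle the single index with $s_i=x$.

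One step deserves to be made explicit rather than waved at. Your connectivity argument for $B(G)$ walks a $G$-path from a vertex of one block to a vertex of another and reads off ``an alternating sequence of blocks and cut vertices,'' but this silently uses the fact that a vertex lying in two distinct blocks is a cut vertex of $G$ --- otherwise the shared vertices along the walk need not be vertices of $B(G)$ at all, and the walk does not project to a walk in $B(G)$. This fact is true and follows from your own workhorse: if $v\in B_1\cap B_2$ with $B_1\neq B_2$ and $v$ is not a cut vertex, pick neighbors $u_1$ of $v$ in $B_1$ and $u_2$ in $B_2$; a $u_1$--$u_2$ path in $G-v$ together with the edges $vu_1$ and $vu_2$ yields a cycle $C$ through $v$, which is biconnected and shares two vertices with each of $B_1$ and $B_2$, forcing $C\subseteq B_1$ and $C\subseteq B_2$ by maximality and hence $|V(B_1)\cap V(B_2)|\geq 2$, contradicting your first bullet. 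Add that paragraph (and the brief check that in a connected graph on at least two vertices every block contains an edge, so $B_i-x$ is never empty) and the proof is complete.
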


Thus, for a connected graph $G$, we call $B(G)$ the \emph{block tree} of $G$. It turns out that $G$ admits a bipolar numbering if and only if $B(G)$ is a path. For example, the graphs shown in Figure~\ref{fig:qHgraph} all have their blocks arranged in a path (so that $B(G)$ is a path), as shown in Figure~\ref{fig:qHgraph-blocks}.

\begin{figure*}[ht]
	\centering
		\begin{minipage}{0.3\textwidth}
			\begin{tikzpicture}[scale=0.8, transform shape]
			\draw[line width=0.4mm,gray,dotted] (0.5,0) ellipse (28pt and 20pt);
			\draw[line width=0.4mm,gray,dotted] (2,0) ellipse (40pt and 40pt);
			\draw[line width=0.4mm,gray,dotted] (3.5,0) ellipse (28pt and 20pt);
			
			\node[draw, circle](1) at (0,0) {1};
			\node[draw, circle](2) at (1,0) {2};
			\node[draw, circle](3) at (2,1) {3};
			\node[draw, circle](4) at (2,-1) {4};
			\node[draw, circle](5) at (3,0) {5};
			\node[draw, circle](6) at (4,0) {6};
			\draw[-, >=latex,thick] (1)--(2) (2)--(3) (2)--(4) (3)--(5) (4)--(5) (5)--(6);	
			\end{tikzpicture}
		\end{minipage}
		\begin{minipage}{0.3\textwidth}
			\begin{tikzpicture}[scale=0.8, transform shape]
			\draw[line width=0.4mm,gray,dotted] (0.5,0) ellipse (28pt and 20pt);
			\draw[line width=0.4mm,gray,dotted] (2,0) ellipse (40pt and 40pt);
			\draw[line width=0.4mm,gray,dotted] (3.5,0) ellipse (28pt and 20pt);
			
			\node[draw, circle](1) at (0,0) {1};
			\node[draw, circle](2) at (1,0) {2};
			\node[draw, circle](3) at (2,1) {3};
			\node[draw, circle](4) at (2,0) {4};
			\node[draw, circle](5) at (2,-1) {5};
			\node[draw, circle](6) at (3,0) {6};
			\node[draw, circle](7) at (4,0) {7};
			\draw[-, >=latex,thick] (1)--(2) (2)--(3) (2)--(4) (2)--(5) (3)--(6) (4)--(6) (5)--(6) (6)--(7);	
			\end{tikzpicture}
		\end{minipage}
		\begin{minipage}{0.3\textwidth}
			\begin{tikzpicture}[scale=0.8, transform shape]
			\draw[line width=0.4mm,gray,dotted] (0.5,0) ellipse (28pt and 20pt);
			\draw[line width=0.4mm,gray,dotted] (2.5,0) ellipse (54pt and 42pt);
			\draw[line width=0.4mm,gray,dotted] (4.5,0) ellipse (28pt and 20pt);
			
			\node[draw, circle](1) at (0,0) {1};
			\node[draw, circle](2) at (1,0) {2};
			\node[draw, circle](3) at (2,1) {3};
			\node[draw, circle](4) at (3,1) {4};
			\node[draw, circle](5) at (2,-1) {5};
			\node[draw, circle](6) at (3,-1) {6};
			\node[draw, circle](7) at (4,0) {7};
			\node[draw, circle](8) at (5,0) {8};
			
			\draw[-, >=latex,thick] (1)--(2) (2)--(3) (3)--(4) (2)--(5) (5)--(6) (4)--(7) (6)--(7) (7)--(8);	
			\end{tikzpicture}
		\end{minipage}
	\caption{Block decompositions of the graphs in Figure \ref{fig:qHgraph}.}
	\label{fig:qHgraph-blocks}
\end{figure*}
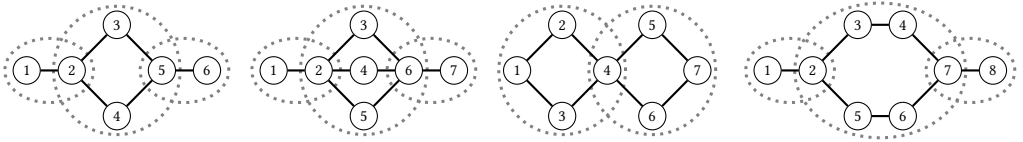

\begin{lemma}\label{lemma:block-trees-quasi-Hamiltonian}
	A graph $G$ admits a bipolar numbering if its block tree $B(G)$ is a path.
\end{lemma}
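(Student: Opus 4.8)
The plan is to build a bipolar numbering directly, by gluing together \emph{$st$-numberings} of the individual blocks along the path $B(G)$. The key external tool is the classical theorem of Lempel, Even, and Cederbaum: every biconnected graph on at least two vertices admits, for any ordered pair $(u,w)$ of distinct vertices, a numbering $u=x_1,x_2,\dots,x_p=w$ of its vertices in which each internal vertex $x_\ell$ (with $1<\ell<p$) has a neighbor among $x_1,\dots,x_{\ell-1}$ and a neighbor among $x_{\ell+1},\dots,x_p$. (When $u,w$ are non-adjacent, apply this to the graph obtained by adding the edge $\{u,w\}$, which remains biconnected, and observe that this extra edge joins only $x_1$ and $x_p$, so it is irrelevant to the requirement on internal vertices; when the block is a single edge, the claim is trivial.) Such a numbering is a bipolar numbering of that graph, and in addition $x_1$ has a higher-numbered neighbor and $x_p$ has a lower-numbered neighbor.

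First I would dispose of the degenerate cases: if $G$ is a single vertex, or $B(G)$ consists of a single block, a bipolar numbering exists trivially or by the theorem above. Otherwise write $B(G)$ as the path $B_1 - c_1 - B_2 - c_2 - \dots - c_{k-1} - B_k$ with $k\ge 2$, the $B_i$ being the blocks and the $c_i$ the cut vertices, so $c_i$ is the unique common vertex of $B_i$ and $B_{i+1}$. A short argument from the fact that $B(G)$ is a tree shows these are the only overlaps: $V(B_i)\cap V(B_j)=\{c_i\}$ when $j=i+1$ and $=\emptyset$ when $|i-j|\ge 2$, while $V=\bigcup_i V(B_i)$ since $G$ is connected with at least two vertices. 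Now pick any $s\in V(B_1)\setminus\{c_1\}$ and $t\in V(B_k)\setminus\{c_{k-1}\}$, take an $st$-numbering $\sigma_1$ of $B_1$ from $s$ to $c_1$, an $st$-numbering $\sigma_i$ of $B_i$ from $c_{i-1}$ to $c_i$ for $2\le i\le k-1$, and an $st$-numbering $\sigma_k$ of $B_k$ from $c_{k-1}$ to $t$, and define the numbering of $V$ as the concatenation of $\sigma_1$, then $\sigma_2$ with its first vertex $c_1$ deleted, then $\sigma_3$ with its first vertex $c_2$ deleted, and so on. By the overlap analysis this enumerates $V$ exactly once, and each $\sigma_i$ occupies a contiguous run of positions in its original internal order, immediately preceded by $c_{i-1}$.

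It then remains to check the bipolar property; equivalently, that every vertex other than the first has a lower-numbered neighbor and every vertex other than the last has a higher-numbered neighbor (which by an easy induction forces every prefix and every suffix of the sequence to be connected). I split into three cases: a vertex internal to some $\sigma_i$ keeps the lower and higher neighbors granted by the $st$-numbering of $B_i$, since that run of positions preserves the internal order of $\sigma_i$ and is preceded by $c_{i-1}$; a cut vertex $c_i$ is the sink of $\sigma_i$, hence has a $B_i$-neighbor before it, and the source of $\sigma_{i+1}$, hence has a $B_{i+1}$-neighbor after it; and the two endpoints $s,t$ are covered by the last sentence of the $st$-numbering property together with connectivity of $G$. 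The only place demanding care is this last verification combined with degenerate blocks — making sure a cut vertex really acquires neighbors on both sides even when one of its blocks is a single edge, and that the deletions in the concatenation neither drop nor duplicate a vertex — but these are routine once the block-path structure is in hand. I expect no genuinely hard step: the whole argument is the reduction to the block-path picture plus a black-box appeal to the Lempel–Even–Cederbaum $st$-numbering.
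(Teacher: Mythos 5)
Your proof is correct, and it reaches the conclusion by a genuinely different route from the paper. The paper's proof is a two-line appeal to the global form of the Lempel--Even--Cederbaum criterion: a graph admits a bipolar numbering if adding a single edge $\{s,t\}$ makes it biconnected; one then picks $s$ and $t$ in the two leaf blocks of the path $B(G)$ and asserts that $G+\{s,t\}$ is biconnected (a fact the paper does not verify in detail -- it amounts to checking that the new edge reconnects the two sides of every cut vertex $c_i$). You instead apply Lempel--Even--Cederbaum \emph{locally}, to each biconnected block with a prescribed source and sink, and then concatenate the resulting $st$-numberings along the block path, identifying the sink of each block's numbering with the source of the next at the shared cut vertex. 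Your version is longer and requires the bookkeeping you flag (the overlap analysis $V(B_i)\cap V(B_{i+1})=\{c_i\}$, disjointness of non-consecutive blocks, and the degenerate single-edge blocks), but in exchange it is fully constructive -- it produces the numbering explicitly and makes visible exactly where the path structure of $B(G)$ is used -- whereas the paper's argument is purely existential and delegates the construction to the cited algorithm of Even and Tarjan. All the steps you call routine really are routine: in particular, the observation that the source (resp.\ sink) of each block's $st$-numbering automatically has a later (resp.\ earlier) neighbor within its own block is what makes every cut vertex acquire neighbors on both sides after concatenation, including when an adjacent block is a bridge.
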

\begin{proof}
	\citet{Lempel1967} show that $G$ admits a bipolar numbering if there are $s,t \in V$ such that adding an edge $\{s,t\}$ to $G$ makes it biconnected. If $B(G)$ is a path, let $B_1$ and $B_2$ be the leaf blocks at the ends of the path $B(G)$. Take any $s\in B_1$ and $t\in B_2$. If we add the edge $\{s,t\}$ to $G$, the graph becomes biconnected. Hence, $G$ admits a bipolar numbering.
\end{proof}

There is a linear-time algorithm based on depth-first search to construct a bipolar numbering for any biconnected graph \citep{Even1976,Tarjan1986},  and one can also calculate the block tree $B(G)$ of a given graph in linear time \citep{Hopcroft1973}. Thus, in linear time, we can compute a bipolar numbering of a graph or report that none exists. Clearly, given a bipolar numbering, the discrete cut-and-choose protocol can also be run in linear time.

Next, we show that if $B(G)$ is not a path, then $G$ cannot guarantee EF1. The proof constructs explicit counter-examples, which have a very simple structure. We say that additive valuations $u_i$ are \emph{binary} if $u_i(\{ v \}) \in \{0,1\}$ for every $v\in V$.

\begin{lemma}\label{lemma:EF1-block-trees}
	Let $G$ be a connected graph.  
\begin{itemize}
	\item If the block tree $B(G)$ of $G$ is \emph{not} a path, then $G$ contains a trident.
	\item If $G$ contains a trident, then there exist identical, additive, binary valuations over $G$ for two agents such that no connected allocation is EF1.	
\end{itemize}
\end{lemma}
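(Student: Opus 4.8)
The plan is a case analysis on the block tree $B(G)$. Since $B(G)$ is a tree that is not a path, it has a vertex of degree at least~$3$, and since $B(G)$ is bipartite between blocks and cut vertices, this vertex is either (A)~a cut vertex $s$ of $G$ lying in at least three blocks, or (B)~a block $C$ of $G$ containing at least three cut vertices of $G$. In either case I exhibit identical additive binary valuations --- specified by a set $S\subseteq V$ of items each worth~$1$, all others worth~$0$, with common valuation $u$ and up-to-one version $u^-$ --- for which every connected allocation $(A,B)$ into two bundles fails EF1. The set $S$ will have size at most~$6$, so the strategy is to force a value gap of at least~$2$ between the two bundles; no single deletion can repair such a gap, and in fact the owner of the cheaper bundle will envy the owner of the other even up to one outer good, i.e.\ $u^-$ of the richer bundle will exceed $u$ of the poorer one. (Here I use that $u^-(X)\ge u(X)-1$ for every nonempty $X$, since deleting one item costs at most~$1$.)

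For case~(A): as $s$ lies in at least three blocks, $G-s$ has at least three connected components; pick three of them $D_1,D_2,D_3$, a vertex $a_i\in D_i$ for each $i$, and set $S=\{s,a_1,a_2,a_3\}$. In any connected allocation one bundle, say $B$, avoids $s$, so $B$ lies in a single component of $G-s$ and $u(B)\le 1$. The other bundle $A$ contains $s$ and at least two of $a_1,a_2,a_3$ (since $B$ meets at most one $D_i$), so $u(A)\ge 3$ and $u^-(A)\ge 2>u(B)$: not EF1.

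Case~(B) is the crux, and the obstacle is that a block is biconnected and can therefore be split into two connected pieces quite flexibly; putting value on interior vertices of $C$ would allow a near-balanced split. The remedy is to let the only valuable vertices of $C$ be its three cut vertices $s_1,s_2,s_3$, together with one vertex from each of the three ``branches'' they cut off. Precisely: for each $i$, since $s_i$ is a cut vertex and $C\setminus\{s_i\}$ is connected and nonempty, there is a nonempty union $Q_i$ of components of $G-s_i$ that is disjoint from $C$; one checks that $Q_1,Q_2,Q_3$ are pairwise disjoint and disjoint from $C$, and that $Q_i$ is joined to the rest of $G$ only through $s_i$. Choose any $q_i\in Q_i$ and put $S=\{s_1,s_2,s_3,q_1,q_2,q_3\}$, so $|S|=6$. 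The analysis uses two easy facts: (i)~a connected $X$ with $X\cap C=\emptyset$ lies in a single component of $G-C$, hence inside some $Q_i$ or disjoint from all of them, so $u(X)\le 1$; and (ii)~if $s_i\notin X$ but $X$ meets $C\setminus\{s_i\}$, then $X\cap Q_i=\emptyset$, as a path inside $X$ from $Q_i$ to $C$ would have to pass through $s_i$.

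Now take any connected allocation $(A,B)$. If some bundle, say $B$, misses $C$, then $u(B)\le 1$ by~(i), while $A$ contains $s_1,s_2,s_3$ and at least two of $q_1,q_2,q_3$, so $u(A)\ge 5$ and $u^-(A)\ge 4>u(B)$. Otherwise both bundles meet $C$, and then by~(ii) each branch $Q_i$ --- in particular the item $q_i$ --- goes to whichever bundle contains $s_i$; writing $I_X=\{i:s_i\in X\}$ we get $u(A)=2|I_A|$, $u(B)=2|I_B|$, and $|I_A|+|I_B|=3$, so after possibly swapping $A$ and $B$ we have $u(A)\ge 4$ and $u(B)\le 2$, whence $u^-(A)\ge 3>u(B)$. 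Either way $(A,B)$ is not EF1. I expect the genuine difficulty to be precisely this construction in case~(B): designing a valuation that defeats the flexibility of a biconnected block, which the above achieves by making the value $C$ contributes to a bundle equal the number of the $s_i$ it holds, and by localizing each branch's single unit of value so that it can only be collected together with the corresponding cut vertex.
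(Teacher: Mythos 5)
Your proof is correct and takes essentially the same approach as the paper: the same case split on whether the degree-$\geq 3$ vertex of $B(G)$ is a cut vertex or a block, the same choice of four (resp.\ six) unit-valued items (one per branch hanging off the relevant cut vertices, plus the cut vertices themselves), and the same counting forcing a value gap of at least two that no single deletion can close. The only differences are cosmetic, e.g.\ your cleaner bookkeeping via $I_A$, $I_B$ in case~(B) versus the paper's argument through the components of $G\setminus\{s_1,s_2\}$.
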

\begin{proof}
If $B(G)$ is not a path, then it contains a vertex with at least three neighbors, and thus either
\begin{itemize}
\item[(a)]\label{case:cut} there is a cut vertex $s$ adjacent to three blocks $B_1$, $B_2$, and $B_3$; or
\item[(b)]\label{case:block} there is a block $B$ adjacent to three different cut vertices $s_1$, $s_2$, and $s_3$.
\end{itemize}
Note that in both cases, all blocks contain at least two vertices each, as maximality guarantees that a block in a connected graph $G$ never consists of a single vertex, unless $G$ itself has only one vertex. 
Thus, in case (a), $G$ contains a type I trident.
In case (b), the cut vertices $s_1$, $s_2$, and $s_3$ serve as the contact vertices in the earlier definition of type II tridents and are adjacent to blocks that serve as the subgraphs $P_1$, $P_2$, and $P_3$. This proves the first part. 

To prove the second part, we construct identical additive valuations that do not admit an EF1 allocation. If $G$ contains a type I trident, let $s$ be the corresponding cut vertex, and choose vertices $v_1 , v_2, v_3$ from each of three different connected components that remain after $s$ is deleted from $G$. 
The two agents have utility $1$ for each of $s$, $v_1$, $v_2$, and $v_3$, and $0$ for the remaining vertices. 
Now take any connected allocation $(I_1, I_2)$.
One of the bundles, say $I_1$, includes the cut vertex $s$. 
Then $I_2$ can contain at most one of the vertices $v_1$, $v_2$, $v_3$, since $I_2$ is connected and does not contain $s$ yet any path between distinct $v_i$ and $v_j$ goes through $s$.
Hence $u_i(I_2) \le 1$. Now, the bundle $I_1$ contains $s$ and at least two of $v_1$, $v_2$, $v_3$, so $u_i(I_1) \ge 3$.
Thus, the allocation is not EF1.

Suppose $G$ contains a type II trident consisting of subgraphs $C, P_1, P_2, P_3$ with contact vertices $s_1, s_2, s_3$.  Then for $i=1,2,3$ choose a vertex $v_i \neq s_i$ from $P_i$. 
The two agents have utility $1$ for each of $s_1$, $s_2$, $s_3$, $v_1$, $v_2$, and $v_3$, and $0$ for the remaining vertices. 
Now take any connected allocation $(I_1, I_2)$.
One of the bundles, say $I_1$, contains at least two contact vertices $s_i$ and the other contains at most one contact vertex $s_i$. 
Say that $s_1, s_2 \in I_1$.
Now, $G\setminus\{ s_1,s_2 \}$ has at least three connected components, and since $I_2$ is connected, it must be contained in one of these components.
But each component contains at most two vertices with utility 1, so $u_i(I_2) \le 2$. 
Since there are six vertices with utility 1 in total, $u_i(I_1) \ge 4$.
Thus, the allocation is not EF1.
\end{proof}

Combining these results, we obtain the promised characterization.

\begin{theorem}  \label{thm:EQ} The following conditions are equivalent for every connected graph $G$:
	\begin{enumerate}
		\item $G$ admits a bipolar numbering.
		\item $G$ guarantees EF1 for two agents.
		\item $G$ guarantees EF1 for two agents with identical, additive, binary valuations.
		\item $G$ does not contain a trident.
		\item The block tree $B(G)$ is a path.
	\end{enumerate}
\end{theorem}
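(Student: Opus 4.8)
The plan is to establish the theorem by proving a cycle of implications $(1) \Rightarrow (2) \Rightarrow (3) \Rightarrow (4) \Rightarrow (1)$, leaning on the lemmas already proved in the excerpt so that each individual arrow is short.

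First, $(1) \Rightarrow (2)$ is immediate from Proposition~\ref{lemma:cutchoose}: if $G$ admits a bipolar numbering, then the discrete cut-and-choose protocol run on that numbering produces a connected EF1 allocation for any monotonic valuation profile, so in particular $G$ guarantees EF1 for two agents. Next, $(2) \Rightarrow (3)$ is trivial, since identical additive binary valuations form a subclass of all monotonic valuations: if a connected EF1 allocation exists for every monotonic profile, it exists for every identical additive binary profile. The implication $(4) \Rightarrow (1)$ is exactly the content of Lemma~\ref{lemma:block-trees-quasi-Hamiltonian}: if the block tree $B(G)$ is a path, then $G$ admits a bipolar numbering.

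The remaining arrow, $(3) \Rightarrow (4)$, is the one that carries the real weight, and I would obtain it in contrapositive form: if $B(G)$ is not a path, then $G$ does not guarantee EF1 for two agents even under identical additive binary valuations. But this is precisely Lemma~\ref{lemma:EF1-block-trees}, which exhibits, for any connected $G$ whose block tree is not a path, an identical additive binary profile admitting no connected EF1 allocation. So the contrapositive of $(3) \Rightarrow (4)$ — namely ``$B(G)$ not a path $\Rightarrow$ $G$ does not guarantee EF1 for two agents with identical additive binary valuations'' — is exactly Lemma~\ref{lemma:EF1-block-trees}, and we are done.

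Assembling these: $(1) \Rightarrow (2) \Rightarrow (3) \Rightarrow (4) \Rightarrow (1)$, so all four conditions are equivalent. The only subtlety to flag is making sure the quantifier structure lines up — $(2)$ and $(3)$ differ only in the class of valuations quantified over, and the ``$\Rightarrow$'' between them uses that binary is a special case while the return trip $(3)\Rightarrow(4)\Rightarrow(1)\Rightarrow(2)$ recovers the stronger statement; there is no circularity because the counterexample lemma and the protocol lemma are genuinely independent. No real obstacle remains once the supporting lemmas are in hand; the proof is essentially a one-paragraph bookkeeping argument citing Proposition~\ref{lemma:cutchoose}, Lemma~\ref{lemma:block-trees-quasi-Hamiltonian}, and Lemma~\ref{lemma:EF1-block-trees}.
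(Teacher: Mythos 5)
Your proof is correct and is essentially identical to the paper's: the same cycle $(1)\Rightarrow(2)\Rightarrow(3)\Rightarrow(4)\Rightarrow(1)$, citing Proposition~\ref{lemma:cutchoose} for the first arrow, triviality for the second, the contrapositive via Lemma~\ref{lemma:EF1-block-trees} for the third, and Lemma~\ref{lemma:block-trees-quasi-Hamiltonian} for the last.
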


\begin{proof}
	The implication $(1) \Rightarrow (2)$ follows from Proposition \ref{lemma:cutchoose} which shows that the discrete cut-and-choose protocol yields a connected EF1 allocation when run on a bipolar numbering.
	The implication $(2) \Rightarrow (3)$ is immediate.
	The implications $(3) \Rightarrow (4)$ and $(4) \Rightarrow (5)$ follow from Lemma \ref{lemma:EF1-block-trees} which proves the contrapositives.
	Finally, $(5) \Rightarrow (1)$ follows from Lemma \ref{lemma:block-trees-quasi-Hamiltonian}.
\end{proof}

The equivalence $(2) \Leftrightarrow (3)$ is noteworthy and perhaps surprising: It is often easier to guarantee fairness when agents' valuations are identical, yet in terms of the graphs that guarantee EF1 for two agents, there is no difference between identical and non-identical valuations. Intriguingly, even for more than two agents, we do not know of a graph which guarantees EF1 for identical valuations, but fails it for non-identical valuations.

\section{EF1 existence for three agents: A moving-knife protocol}\label{sec:three}
We will now consider the case of three agents. \citet{Stromquist1980} designed a protocol that results in an envy-free contiguous allocation of a divisible cake. We now give a brief outline of the protocol, illustrated by Figure~\ref{fig:stromquist}. 

\begin{figure}[ht]
	\centering
	\begin{tikzpicture}
	\draw (-0.3,0) rectangle (5,1);
	
	\begin{scope}[shift={(0.2,0)}]
	\draw (3,1.07) -- (3,-0.4);
	\draw (2.92, -0.2) rectangle (3.08, -0.9);
	\node (arr) at (3.3, -0.8) {\footnotesize$\to$};
	\end{scope}
	\begin{scope}[shift={(-0.5,0)}]
	\draw (3,0.95) -- (3,-0.4);
	\draw (2.92, -0.2) rectangle (3.08, -0.9);
	\node (arr) at (3.3, -0.8) {\footnotesize$\to$};
	\end{scope}
	\begin{scope}[shift={(1.0,0)}]
	\draw (3,0.95) -- (3,-0.4);
	\draw (2.92, -0.2) rectangle (3.08, -0.9);
	\node (arr) at (3.3, -0.8) {\footnotesize$\to$};
	\end{scope}
	
	\draw[thick] (1,1.07) -- (1,-0.4);
	\draw[fill=gray!30] (0.92, -0.3) rectangle (1.08, -0.8);
	\draw[fill=white] (0.7, -0.2) rectangle (1.3, -0.3);
	\draw (0.92, -0.3) rectangle (1.08, -0.9);
	
	\node (arr1) at (1.3, -0.8) {\footnotesize$\to$};
	
	\draw [decorate,decoration={brace,amplitude=4pt},yshift=0pt](-0.3,1.1) -- (0.98,1.1) node [black,midway,yshift=10pt] {$L$};
	\draw [decorate,decoration={brace,amplitude=4pt},yshift=0pt](1.02,1.1) -- (3.18,1.1) node [black,midway,yshift=10pt] {$M$};
	\draw [decorate,decoration={brace,amplitude=4pt},yshift=0pt](3.22,1.1) -- (5,1.1) node [black,midway,yshift=10pt] {$R$};
	
	\end{tikzpicture}
	\caption{Stromquist's moving knife protocol}
	\label{fig:stromquist}
\end{figure}
A referee holds a sword over the cake. Each of the three agents holds their own knife over the portion of the cake to the right of the sword, positioning it so that this portion is divided into two pieces they judge to have the same value. Now, initially, the sword is at the left end of the cake. It starts moving at a constant speed from left to right, while the agents continuously move their knives to keep dividing the right-hand portion into equally-valued pieces. At some point (when the leftmost piece becomes valuable enough), one of the agents shouts ``cut'', and the cake will be cut twice: once by the sword, and once by the middle one of the three knives. Agents shout ``cut'' as soon as the left piece is a highest-valued piece among the three. The agent who shouts receives the left piece. The remaining agents each receive a piece containing their knife. The resulting allocation is envy-free, since the agent receiving the left piece prefers it to the other pieces, and the other agents who are not shouting receive at least half the value of the part of the cake to the right of the sword.

Let $G$ be a path, $P=(v_1,v_2,\ldots,v_m)$.
There are several difficulties in translating Stromquist's continuous procedure to the discrete setting for $G$.
First, agents need to divide the piece to the right of the sword in half, and this might not be possible exactly given indivisibilities; but this can be handled using our concept of lumpy ties from Section~\ref{sec:two}.
Next, when the sword moves one item to the right, the lumpy ties of the agents may need to jump several items to the right, for example, because the new member of the leftmost bundle is very valuable.
To ensure EF1, we will need to smoothen these jumps, so that the middle piece grows one item at a time.
Also, it will be helpful to have the sword move in half-steps: it alternates between being placed between items (so it cuts the edge between the items), and being placed over an item, in which case the sword covers the item and agents ignore that item.
Finally, while the sword covers an item, we will only terminate if at least \emph{two} agents shout to indicate that they prefer the leftmost piece; this will ensure that there is an agent who is flexible about which of the bundles they are assigned.
The algorithm moves in steps, and alternates between moving the sword, and updating the lumpy ties. 

In our formal description of the algorithm, we do not use swords and knives. Instead, we maintain three bundles $L$, $M$, and $R$ that can be seen as resulting from a certain configuration this cutting implements.
We also need a few definitions.
For a subsequence of vertices $P(v_s,v_r)=(v_s,v_{s+1},\ldots,v_r)$ and an agent $i$, recall that $v_j$ ($s\le j \le r$) is the \emph{lumpy tie} over $P(v_s,v_r)$ for $i$ if $j$ is the smallest index such that
\begin{equation}
\label{eq:def-lumpy-tie}
u_i(L(v_j) \cup \{ v_j \}) \ge u_i(R(v_j)) \quad \text{and}\quad u_i(R(v_j) \cup \{ v_j \}) \ge u_i(L(v_j)).
\end{equation}
Here, the definitions of $L(v_j)$ and $R(v_j)$ apply to the subsequence $P(v_s,v_r)$.
The lumpy tie always exists by the discussion after equation \eqref{eq:def-lumpy-tie-1}.
Each of the three agents has a lumpy tie over $P(v_s,v_r)$; a key concept for us is the \emph{median lumpy tie} which is the median of the lumpy ties of the three agents, where the median is taken with respect to the ordering of $P(v_s,v_r)$.
We say that $i \in N$ is a \emph{left agent} (respectively, a \emph{middle agent} or a \emph{right agent}) over $P(v_s,v_r)$ if the lumpy tie for $i$ appears strictly before (respectively, is equal to, or appears strictly after) the median lumpy tie.
Note that by definition of the median, there is at most one left agent, at most one right agent, and at least one middle agent. Suppose that the median lumpy tie over the subsequence $P(v_s,v_r)$ is $v_j$, and let $i$ be an agent. Then using the definitions of lumpy tie and left/right agents, we find that
\begin{equation}
\label{eq:left-right-lumpy-tie}
\begin{array}{l}
u_i(L(v_j)) \ge u_i(R(v_j) \cup \{ v_j \}) \quad\text{if $i$ is a left agent, and } \\
u_i(R(v_j)) \ge u_i(L(v_j) \cup \{ v_j \}) \quad\text{if $i$ is a right agent.}
\end{array}
\end{equation}
Given the median lumpy tie $v_j$ over $P(v_s,v_r)$, and a two-agent set $S=\{i,k\} \subseteq N$, we define $\Lumpy(S,v_j,P(v_s,v_r))$ to be the allocation of the items in $P(v_s,v_r)$ to $S$ such that
\begin{itemize}
	\item if $i$ is a left agent and $k$ is a right agent, then $i$ receives $L(v_j)$ and $k$ receives $R(v_j) \cup \{ v_j \}$;
	\item if $i$ is a middle agent, then agent $k$ receives $k$'s preferred bundle among $L(v_j)$ and $R(v_j)$, and agent $i$ receives the other bundle along with $v_j$.
\end{itemize}
Using \eqref{eq:def-lumpy-tie} and \eqref{eq:left-right-lumpy-tie}, we see that $\Lumpy(S,v_j,P(v_s,v_r))$ is an EF1 allocation:

\begin{lemma}[Median Lumpy Ties Lemma]\label{lem:lumpy}
	 Let $S=\{i,k\} \subseteq N$ and let $v_j$ be the median lumpy tie over $P(v_s,v_r)$. Then $\Lumpy(S, v_j,P(v_s,v_r))$ is an EF1 allocation of the items in $P(v_s,v_r)$ to $S$. Further, each agent in $S$ weakly prefers their bundle to $L(v_j)$ and $R(v_j)$.
\end{lemma}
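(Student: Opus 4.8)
The plan is to verify EF1 directly by checking the two relevant envy conditions in each of the two cases in the definition of $\Lumpy(S, v_j, P(v_s,v_r))$, using the lumpy-tie inequalities \eqref{eq:def-lumpy-tie} and the left/right-agent inequalities \eqref{eq:left-right-lumpy-tie}. Throughout, write $L = L(v_j)$ and $R = R(v_j)$ for the two pieces on either side of the median lumpy tie within the subsequence $P(v_s,v_r)$, so that the three candidate bundles are $L$, $R$, and a bundle of the form $L \cup \{v_j\}$ or $R \cup \{v_j\}$. Note first that since $v_j$ lies strictly between $v_s$ and $v_r$ (or is an endpoint, in which case the relevant side is empty and handled trivially by $u_i(\emptyset)=0$ and monotonicity), every bundle named is connected, and $L \cup \{v_j\}$ minus its outer item $v_j$ equals $L$, which is connected; symmetrically for $R \cup \{v_j\}$. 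So the ``outer good'' requirement in the EF1 definition is automatically met when the envied bundle contains $v_j$, and is vacuous otherwise since $L$ and $R$ contain no $v_j$.

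Now I would split into the two cases. In the first case, $i$ is a left agent and $k$ is a right agent, $i$ gets $L$ and $k$ gets $R \cup \{v_j\}$. For $i$: by the left-agent inequality in \eqref{eq:left-right-lumpy-tie}, $u_i(L) \ge u_i(R \cup \{v_j\})$, so $i$ does not even envy $k$ (no need to remove anything). For $k$: by the right-agent inequality in \eqref{eq:left-right-lumpy-tie}, $u_k(R) \ge u_k(L \cup \{v_j\}) \ge u_k(L)$ by monotonicity, and since $k$'s bundle $R \cup \{v_j\}$ contains $R$, monotonicity gives $u_k(R \cup \{v_j\}) \ge u_k(R) \ge u_k(L)$, so $k$ does not envy $i$'s bundle $L$ at all. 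In the second case, $i$ is a middle agent, so $v_j$ is exactly $i$'s lumpy tie, meaning \eqref{eq:def-lumpy-tie} holds for $i$: $u_i(L \cup \{v_j\}) \ge u_i(R)$ and $u_i(R \cup \{v_j\}) \ge u_i(L)$. Agent $k$ picks a weakly preferred bundle among $L$ and $R$ and $i$ gets the other one together with $v_j$; without loss of generality say $k$ takes $L$ and $i$ takes $R \cup \{v_j\}$ (the mirror subcase is symmetric). Then $k$ does not envy $i$ up to the outer item $v_j$: $u_k(L) \ge u_k(R)$ by $k$'s choice, and $R = (R \cup \{v_j\}) \setminus \{v_j\}$ is connected, so $k$ does not envy $i$ up to $v_j$. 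And $i$ does not envy $k$: $i$ holds $R \cup \{v_j\}$, and by the middle-agent instance of \eqref{eq:def-lumpy-tie}, $u_i(R \cup \{v_j\}) \ge u_i(L)$, which is exactly $k$'s bundle, so no removal is needed.

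For the ``furthermore'' clause, observe that in every case each agent's assigned bundle either is one of $L$, $R$ exactly, or is a superset ($L \cup \{v_j\}$ or $R \cup \{v_j\}$) of one of them. When the assigned bundle is $L$ or $R$ exactly we still need it to be weakly better than \emph{both} $L$ and $R$: for the left agent receiving $L$, the left-agent inequality plus monotonicity gives $u_i(L) \ge u_i(R \cup \{v_j\}) \ge u_i(R)$; symmetrically the right agent receiving $R \cup \{v_j\}$ has $u_k(R \cup \{v_j\}) \ge u_k(R)$ and $\ge u_k(L)$ as shown above; and the middle agent receiving, say, $R \cup \{v_j\}$ satisfies $u_i(R \cup \{v_j\}) \ge u_i(L)$ by \eqref{eq:def-lumpy-tie} and $\ge u_i(R)$ by monotonicity, while agent $k$, who got to choose, receives a bundle at least as good as the other of $L, R$ by choice, and trivially at least as good as itself. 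So in all cases the claim holds.

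I do not expect a genuine obstacle here; the statement is essentially a bookkeeping consequence of the defining inequalities. The one place requiring a moment's care is the connectivity/outer-good bookkeeping — confirming that whenever an agent must invoke EF1 rather than outright non-envy, the item removed ($v_j$) is genuinely an outer item of the envied bundle so that the remaining bundle stays connected — and the boundary case where $v_j$ is an endpoint of $P(v_s,v_r)$, which makes $L$ or $R$ empty; both are dispatched by monotonicity and $u_i(\emptyset)=0$. The other minor subtlety is making sure the symmetric subcase in the middle-agent case (where $k$ picks $R$ instead of $L$) is handled by the same argument with $L$ and $R$ swapped, which it is, since \eqref{eq:def-lumpy-tie} is symmetric in the two inequalities.
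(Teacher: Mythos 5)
Your proof is correct and follows exactly the route the paper intends: the paper omits a detailed proof, asserting the lemma follows directly from \eqref{eq:def-lumpy-tie} and \eqref{eq:left-right-lumpy-tie}, and your case analysis is precisely the fleshed-out version of that argument, including the correct observation that $v_j$ is always an outer item of whichever bundle it joins.
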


The algorithm is specified in Definition~\ref{def:stromquist-algo}. 
It alternately moves a left pointer $\ell$ (in Steps 2 and 3) and a right pointer $r$ (in Step 4). It also maintains bundles $L$, $M$, and $R$ during the execution of the algorithm. 

\begin{definition}
	\label{def:stromquist-algo}
	The \textbf{discrete moving-knife protocol for $n=3$ agents} on a sequence $P=(v_1,v_2,\ldots,v_m)$ proceeds as follows. 
	We say that an agent $i\in N$ is a \emph{shouter} if $u_i(L) \ge u_i(M)$ and $u_i(L) \ge u_i(R)$.
	
	\begin{itemize}
		\item \emph{Step 1.} Initialize $\ell = 0$ and set $r$ so that $v_r$ is the median lumpy tie over the subsequence $P(v_2,v_m)$. Initialize $L=\emptyset$, $M=\{v_{2},v_{3},\ldots,v_{r-1}\}$, and $R=\{v_{r+1},v_{r+2},\ldots,v_{m}\}$.
		\item \emph{Step 2.} Add an additional item to $L$, i.e., set $\ell=\ell+1$ and $L=\{v_1,v_2,\ldots,v_\ell\}$. 
		If no agent shouts, go to Step 3. If some agent $\shoutleft$ shouts, $\shoutleft$ receives the left bundle $L$. Allocate the remaining items according to $\Lumpy(N\setminus \{\shoutleft\},v_r,P(v_{\ell+1},v_m))$.
		\item \emph{Step 3.} Delete the leftmost point of the middle bundle, i.e., set $M=\{v_{\ell+2},v_{\ell+3},\ldots,v_{r-1}\}$.
		If the number of shouters is smaller than two, go to Step 4.
		If at least two agents shout, we show (next page) that there is a shouter $\shouter$ who is a middle agent over $P(v_{\ell+1},v_m)$. Then, allocate $L$ to a shouter $\shoutleft$ distinct from $\shouter$. Let the agent $\chooser$ distinct from $\shouter$ and $\shoutleft$ choose his preferred bundle among $\{v_{\ell+1}\}\cup M$ and $\{v_r\}\cup R$. Agent $\shouter$ receives the other bundle.
		\item \emph{Step 4.}
		If $v_r$ is the median lumpy tie over $P(v_{\ell+2},v_m)$, directly move to the following cases (a)--(d).
		If $v_r$ is not the median lumpy tie over $P(v_{\ell+2},v_m)$, set $r=r+1$, $M=\{v_{\ell+2},v_{\ell+3},\ldots,v_{r-1}\}$, and $R=\{v_{r+1},v_{r+2},\ldots,v_{m}\}$; then, go to cases (a)--(d).
		\begin{enumerate}
			\item[(a)] If at least two agents shout, find a shouter $\shouter$ who did not shout at the previous step. If there is a shouter $\shoutleft$ who shouted at the previous step, $\shoutleft$ receives $L$; else, give $L$ to an arbitrary shouter $\shoutleft$ distinct from $\shouter$. The agent $\chooser$ distinct from $\shouter$ and $\shoutleft$ choose his preferred bundle among $\{v_{\ell+1}\}\cup M$ and $\{v_r\}\cup R$, breaking ties in favor of the former option. Agent $\shouter$ receives the other bundle.
			\item[(b)] If $v_r$ is the median lumpy tie over $P(v_{\ell+2},v_m)$ and only one agent $\shoutleft$ shouts, give $L \cup \{v_{\ell + 1} \}$ to $\shoutleft$ and allocate the rest according to $\Lumpy(N\setminus \{\shoutleft\},v_r,P(v_{\ell+2},v_m))$.
			\item[(c)] If $v_r$ is the median lumpy tie over $P(v_{\ell+2},v_m)$ but no agent shouts, go to Step 2.
			\item[(d)] Otherwise $v_r$ is not the median lumpy tie over $P(v_{\ell+2},v_m)$: Repeat Step 4.
		\end{enumerate}
	\end{itemize}
\end{definition}

\begin{theorem}
	\label{thm:EF1-3agents}
The moving-knife protocol finds an EF1 allocation for three agents and runs in $O(m)$ time, when $G$ is a path. 
\end{theorem}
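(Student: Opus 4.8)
The plan is to reduce to the case where $G$ is a path $P=(v_1,\dots,v_m)$ with vertices listed along a Hamiltonian path (which is available in the traceable case): every bundle produced by the protocol is a contiguous block of $P$ — an initial segment, or one of the two contiguous pieces into which the median lumpy tie splits a suffix — so all bundles are connected in $G$. For the running time I would observe that the left pointer $\ell$ only ever increases (in Step~2) and the right pointer $r$ only ever increases (in Step~4), so each pass through the loop either advances one of these two pointers or terminates; there are therefore $O(m)$ passes. Since the relevant subsequences $P(v_{\ell+1},v_m)$ and $P(v_{\ell+2},v_m)$ shrink monotonically from the left and each agent's lumpy tie over such a suffix moves monotonically to the right as the suffix shrinks, the median lumpy tie — and hence the partition of the agents into left/middle/right and each agent's shouter status — can be maintained incrementally in amortized $O(1)$ per pass, for $O(m)=O(|V|)$ total.

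For termination and correctness I would first set up a loop invariant describing the configuration at the start of each execution of Step~2: $L=\{v_1,\dots,v_\ell\}$, the item $v_r$ is the median lumpy tie over $P(v_{\ell+1},v_m)$, and $M=\{v_{\ell+1},\dots,v_{r-1}\}$, $R=\{v_{r+1},\dots,v_m\}$ are the two contiguous pieces of $P(v_{\ell+1},v_m)$ on either side of $v_r$; Steps~3 and~4 maintain an analogous invariant with $v_{\ell+1}$ additionally ``covered'', while always keeping $v_r$ equal to the median lumpy tie over the current suffix. The main point to check here is that Step~4's branches preserve this — in particular the interaction of $r=r+1$ with the two covered items and the ``repeat Step~4'' loop. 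Termination then follows because each return to Step~2 strictly increases $\ell$, and once $\ell$ is large enough that $M=R=\emptyset$ every agent shouts (as $u_i(L)\ge 0=u_i(M)=u_i(R)$), so the protocol halts.

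The heart of the proof is checking that each of the four terminating allocations (Step~2 with a shouter; Step~3 with at least two shouters; Step~4(a); Step~4(b)) is EF1, in fact in the strengthened ``outer item'' sense. In every case the bundle handed to a shouter is an initial segment of $P$; a shouter $\shoutleft$ does not envy anyone, because $L$ (or $L\cup\{v_\ell\}$) is weakly best among $L,M,R$ for $\shoutleft$ and every other agent's bundle is $M$ or $R$ together with at most one covered item, which is an outer item that may be removed. The remaining two agents split $P(v_{\ell+1},v_m)$ by (a variant of) $\Lumpy(\cdot,v_r,P(v_{\ell+1},v_m))$, so the Median Lumpy Ties Lemma (Lemma~\ref{lem:lumpy}), together with \eqref{eq:def-lumpy-tie} and \eqref{eq:left-right-lumpy-tie} and the fact that the agent $\shouter$ who is forced to take ``the other'' bundle is a middle agent, shows that $\shouter$ and $\chooser$ do not envy each other up to one outer item. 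Finally neither of these two agents envies the left bundle: the immediately preceding pass certifies that the slightly-smaller left bundle was not best for them, i.e.\ $u_i(L_{\mathrm{prev}})<\max\{u_i(M'),u_i(R)\}$ for the pre-Step-3 value $M'=\{v_{\ell+1},\dots,v_{r-1}\}$; combining this with $\shouter$ being a \emph{shouter} at the terminating step (so $u_{\shouter}(L)\ge u_{\shouter}(R)$) and a middle agent (so $u_{\shouter}(R(v_r)\cup\{v_r\})\ge u_{\shouter}(L(v_r))$) forces the max to be attained at $M'$, whence the bundle they actually receive is worth at least that much and the envy towards $L$ vanishes after deleting the one newest outer item of the current left bundle. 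The auxiliary facts invoked in the pseudocode — that among the shouters at Step~3 there is a middle agent over $P(v_{\ell+1},v_m)$, and that at Step~4(a) there is a shouter who did not shout at the previous pass (a counting argument, since at most one agent shouted there) — are proved from the shouting conditions, \eqref{eq:left-right-lumpy-tie} and the loop invariant.

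The main obstacle, I expect, is not a single idea but the bookkeeping: pinning down the loop invariant so that it survives every branch of Step~4 (especially the coordination between incrementing $r$, the two covered items, and the ``repeat Step~4'' loop), and, within the EF1 case analysis, handling the sub-cases where an agent is forced to take ``the other'' bundle — which may be $M$ or $R$ with the covered item $v_r$ added or missing — and verifying each time that removing exactly one outer item suffices. Making the reduction to a path fully rigorous (that every bundle is contiguous throughout the run, not only at termination) and confirming the amortized linear-time maintenance of the median lumpy tie are comparatively routine but still need care.
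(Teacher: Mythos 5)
Your proposal follows essentially the same route as the paper's proof: the same reduction to a Hamiltonian path, the same termination and $O(m)$ amortized running-time argument via monotone pointers and rightward-moving lumpy ties, and the same case analysis over the four terminating steps using the Median Lumpy Ties Lemma, including the two auxiliary facts (a middle agent among the Step-3 shouters, and a fresh shouter in Step 4(a)). The remaining work is only the bookkeeping you already identify; the approach is correct.
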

\begin{proof}
The algorithm is well-defined -- there is one place where this is not immediate: If two agents shout in Step 3, the algorithm description claims that there is a shouter who is a middle agent over the subsequence $P(v_{\ell+1},v_m)$. Suppose for the moment that there is a shouter $i$ who is a \emph{right} agent. Due to \eqref{eq:left-right-lumpy-tie}, we have $u_i(R) \ge u_i(\{v_{\ell+1}\} \cup M \cup \{v_r\})$. Since $i$ is a shouter, we have $u_i(L) \ge u_i(R)$, so $u_i(L) \ge u_i(\{v_{\ell+1}\} \cup M \cup \{v_r\})$. But $i$ did not shout in the previous Step 2 (when no-one shouted), so either $u_i(R) > u_i(L)$ or $u_i(\{v_{\ell+1}\} \cup M)  > u_i(L)$, and either case is a contradiction. Hence neither of the at least two shouters of Step 3 is a right agent, so at least one shouter is a middle agent, since there is at most one left agent.
	
The algorithm terminates and returns an allocation, since the bundle $L$ grows throughout the algorithm until eventually, at least two agents will think that $L$ is a best bundle and thus will shout and thereby terminate the algorithm. We will now consider every possible way that the algorithm could have terminated, and show that the resulting allocation is EF1. 
\smallskip

\noindent
\textbf{Step 2.}
\begin{itemize}[leftmargin=18pt]
	\item Agent $\shoutleft$ receives $L$ and does not envy the other agents (up to good $v_r$) since $\shoutleft$ is a shouter.
	\item An agent $i$ who is not a shouter does not envy $\shoutleft$ because $i$ prefers either $M$ or $R$ to $L$, and hence by Lemma \ref{lem:lumpy} receives a bundle preferred to $L$.
	\\ \smallskip
	Agent $i$ also does not envy the other agent $j \neq \shoutleft$ up to one good by Lemma \ref{lem:lumpy}.
	\item An agent $i \neq \shoutleft$ who is a shouter does not envy $\shoutleft$ up to one good: If this is the first time Step 2 was performed, then $L = \{v_1\}$, so $i$ does not envy $\shoutleft$ up to $v_1$. Otherwise, the last step was an iteration of Step 4(c), where by definition of Step 4(c) no-one shouted. Since $i$ did not shout during Step 4(c), and Step 2 did not change the bundles $M$ and $R$, then $i$ strictly prefers either $M$ or $R$ to the left bundle $L\setminus \{v_{\ell}\}$ of Step 4(c). By Lemma \ref{lem:lumpy}, agent $i$ gets a bundle at least as good as $M$ or $R$. Thus, $i$ does not envy $\shoutleft$ up to $v_\ell$.
	\\ \smallskip
	Also by Lemma \ref{lem:lumpy}, agent $i$ does not envy the other agent $j \neq \shoutleft$ up to one good.
\end{itemize}

\noindent
\textbf{Step 3.}
\begin{itemize}[leftmargin=18pt]
	\item Agent $\shoutleft$ receives $L$ and, because $\shoutleft$ shouted, does not envy the bundle $\{v_{\ell+1}\} \cup M$ up to good $v_{\ell+1}$, and does not envy the bundle $\{v_r\} \cup R$ up to good $v_r$.
	\item Agent $\chooser$ gets his preferred bundle among $\{v_{\ell+1}\} \cup M$ and $\{v_r\} \cup R$, and so does not envy agent $\shouter$ who receives the other bundle. Further, agent $\chooser$ does not envy agent $\shoutleft$ since $\chooser$ did not shout at the last Step 2 (where no-one shouted), which, since bundle $L$ did not change in Step 3, means that $\chooser$ prefers either $\{v_{\ell+1}\} \cup M$ or $R$ to $L$, and hence also prefers his chosen bundle to $L$.
	\item Agent $\shouter$ is a middle agent, so the lumpy tie of $\shouter$ over $P(v_{\ell+1},v_m)$ is $v_r$, and hence by \eqref{eq:def-lumpy-tie},
	\begin{equation}
		u_{\shouter}(\{v_r\} \cup R) \ge u_{\shouter}(\{v_{\ell+1}\} \cup M). \label{eq:step3:lumpy-tie}
	\end{equation}
	Now, agent $\shouter$ did not shout at the preceding Step 2 (when no-one shouted). However, $\shouter$ \emph{does} shout after deleting $v_{\ell+1}$ from $M$. Since $L$ and $R$ have not changed, the reason $\shouter$ did not shout at Step 2 was that $L$ is worse than the middle bundle during Step 2, so
	\begin{equation}
		u_{\shouter}(\{v_{\ell+1}\} \cup M) > u_{\shouter}(L). \label{eq:step3:M-better-L}
	\end{equation}
	Combining \eqref{eq:step3:lumpy-tie} and \eqref{eq:step3:M-better-L}, we also have
	\[
	u_{\shouter}(\{v_r\} \cup R) > u_{\shouter}(L).
	\]
	 Since $\shouter$ receives either $\{v_{\ell+1}\} \cup M$ or $\{v_r\} \cup R$, agent $\shouter$ does not envy agent $\shoutleft$ receiving $L$.
	
	Finally, from \eqref{eq:step3:lumpy-tie}, agent $\shouter$ weakly prefers $\{v_r\} \cup R$ to $\{v_{\ell+1}\} \cup M$. Thus, if $\chooser$ picks $\{v_{\ell+1}\} \cup M$, then $\shouter$ does not envy $\chooser$. On the other hand, if $\chooser$ picks the bundle $\{v_r\} \cup R$, then $\shouter$ does not envy $\chooser$ up to good $v_r$: we have $u_\shouter(L) \ge u_\shouter(R)$ since $\shouter$ shouts, and so by \eqref{eq:step3:M-better-L}, also
	\[ u_{\shouter}(\{v_{\ell+1}\} \cup M) > u_{\shouter}(R). \]
\end{itemize}

\noindent
\textbf{Step 4(a).}
We first prove that if $i$ is a shouter who did not shout in the previous step, then
\begin{align}\label{eq:step4:new-shouter}
u_i(\{v_r\}\cup R) > u_i(L) \ge u_i(M).
\end{align}
In the previous step (which was either Step 3 or Step 4), the middle bundle was $M \setminus \{v_{r-1}\}$ and the right bundle was $\{v_r\} \cup R$. (While Step 4 allows for the possibility that the middle and right bundles are not changed in Step 4, this is not the case if we enter Step 4(a): if the bundles are unchanged and two agents shout, these agents already shouted in Step 3, contradicting that we did not terminate then.)
Since $i$ did not shout with the middle and right bundles of the previous step, we have
 \[
u_i(M \setminus \{v_{r-1}\}) > u_i(L) \quad\text{or}\quad u_i(\{v_r\}\cup R) > u_i(L).
\]
Since $i$ is a shouter, $u_i(L) \ge u_i(M)$, so that the first case is impossible by monotonicity. Hence $u_i(\{v_r\}\cup R) > u_i(L)$, showing \eqref{eq:step4:new-shouter}, when combined with $u_i(L) \ge u_i(M)$.

\begin{itemize}[leftmargin=18pt]
	\item Agent $\shoutleft$ receives $L$ and does not envy other agents up to one good like in Step 3.
	\item Agent $\chooser$ gets his preferred bundle among $\{v_{\ell+1}\} \cup M$ and $\{v_r\} \cup R$, and so does not envy agent $\shouter$ who receives the other bundle. Agent $\chooser$ also does not envy $\shoutleft$: If $\chooser$ is not a shouter, then $\chooser$ does not envy $\shoutleft$ because $\chooser$ prefers either $M$ or $R$ to $L$, and hence prefers his picked piece to $L$. If $\chooser$ is a shouter, then all three agents are shouters, and by choice of $\chooser$, this means that $\chooser$ was not a shouter at the previous step, when there was at most one shouter. By \eqref{eq:step4:new-shouter}, $u_{\chooser}(\{v_r\}\cup R) > u_{\chooser}(L)$, and hence
	\[
	\max \{u_{\chooser}(\{v_{\ell+1}\}\cup M),u_{\chooser}(\{v_r\}\cup R)\} \geq u_{\chooser}(L),
	\]
	so that $\chooser$ does not envy $\shoutleft$.
	\item Agent $\shouter$ does not envy others up to one good:
	\begin{itemize}
		\item Suppose agent $\chooser$ strictly prefers $\{v_r\}\cup R$ to $\{v_{\ell+1}\}\cup M$. Then agent $\chooser$'s lumpy tie over $P(v_{\ell+1},v_m)$ appears at or after $v_r$ by definition of the lumpy tie. As we argued before, the bundles $M$ and $R$ were changed in the execution of Step 4, and $r$ was increased by 1. Thus, $v_r$ appears strictly after the median lumpy tie over $P(v_{\ell+1},v_m)$. Thus, $\chooser$ is the right agent over $P(v_{\ell+1},v_m)$. Hence $\shouter$ is either a left or middle agent over $P(v_{\ell+1},v_m)$ since there is at most one right agent. Using \eqref{eq:def-lumpy-tie} or \eqref{eq:left-right-lumpy-tie}, this implies
		\begin{equation}
			u_\shouter(\{v_{\ell+1}\}\cup M) \ge u_\shouter(\{v_r\}\cup R),
			\label{eq:step4:shouter-lumpy}
		\end{equation}
		so that $\shouter$ does not envy $\chooser$.
		
		By definition of $\shouter$, agent $\shouter$ did not shout in the previous step. By \eqref{eq:step4:new-shouter}, $u_{\shouter}(\{v_r\}\cup R) \ge u_{\shouter}(L)$, so together with \eqref{eq:step4:shouter-lumpy}, we have $u_\shouter(\{v_{\ell+1}\}\cup M) \ge u_{\shouter}(L)$, so $\shouter$ does not envy $\shoutleft$.
		\item Suppose $\chooser$ weakly prefers $\{v_{\ell+1}\}\cup M$ to $\{v_r\}\cup R$. Then $\shouter$ receives the bundle $\{v_r\}\cup R$ (since $\chooser$ breaks ties in favor of $\{v_{\ell+1}\}\cup M$). By choice of $\shouter$, agent $\shouter$ did not shout at the last step. So by \eqref{eq:step4:new-shouter}, we have $u_\shouter(\{v_r\}\cup R) > u_\shouter(L)$ so that $\shouter$ does not envy $\shoutleft$, and also by \eqref{eq:step4:new-shouter}, we have $u_\shouter(\{v_r\}\cup R) > u_\shouter(M)$ so that $\shouter$ does not envy $c$ up to item $v_{\ell+1}$.
	\end{itemize}
\end{itemize}

\textbf{Step 4(b).}
\begin{itemize}
	\item Agent $\shoutleft$ gets $L \cup \{v_{\ell+1}\}$ and does not envy the other agents (up to good $v_r$) as $\shoutleft$ shouts.
	\item Any agent $i \neq \shoutleft$ is not a shouter, and thus prefers either $M$ or $R$ to $L$. Hence by Lemma \ref{lem:lumpy} receives a bundle preferred to $L$, and so does not envy $\shoutleft$ up to item $v_{\ell+1}$. \\ \smallskip
	Agent $i$ also does not envy the other agent $j \neq \shoutleft$ up to one good by Lemma \ref{lem:lumpy}.
\end{itemize}

Thus, the allocation returned by any of the steps satisfies EF1. Our algorithm can be implemented in $O(m)$ time:
Each of steps 2, 3, and 4 will be executed at most $m$ times (since $\ell$ and $r$ can only be incremented $m$ times).
The execution of each step takes constant time: In each step, we need to check which agents shout, and this can be done in a constant number of queries to agents' valuations; also, in Step 4 we need to calculate the lumpy ties of the agents, but this can be done in amortized constant time, since during the execution of the algorithm, the position of each agent's lumpy tie can only move to the right. 
Finally, when enough agents shout, we can clearly compute and return the final allocation in $O(m)$ time.
\end{proof}

\section{EF2 existence for any number of agents}\label{sec:EF2}
For two or three agents, we have seen algorithms that are guaranteed to find an EF1 allocation on a path (and on traceable graphs). Both algorithms were adaptations of procedures that identify envy-free divisions in the cake-cutting problem. For the case of four or more agents, we face a problem: there are no known procedures that find connected envy-free division in cake-cutting if the number of agents is larger than three. However, in the divisible setting, a non-constructive existence result is known: \citet{Su1999} proved, using Sperner's lemma, that for any number of agents, a connected envy-free division of a cake always exists. One might try to use this result as a black box to obtain a fair allocation for the indivisible problem on a path: Translate an indivisible instance with additive valuations into a divisible cake (where each item corresponds to a region of the cake), obtain an envy-free division of the cake, and round it to get an allocation of the items. \citet{Suksompong2017} followed this approach and showed that the result is an allocation where any agent $i$'s envy $u_i(A(j)) - u_i(A(i))$ is at most $2u_{\text{max}}$, where $u_{\text{max}}$ is the maximum valuation for a single item.

In this section, rather than using \citeauthor{Su1999}'s~\citeyearpar{Su1999} result as a black box, we directly apply Sperner's lemma to the indivisible problem. This allows us to  obtain a stronger fairness guarantee: We show that on paths (and on traceable graphs), there always exists an EF2 allocation.%
\footnote{\label{footnote:weak-ef1}To see that EF2 is a stronger property than bounding envy up to $2u_{\text{max}}$, consider a path of four items and two agents with additive valuations $1$--$10$--$2$--$2$. The allocation $(1,10$--$2$--$2)$ is not EF2, but the first agent has an envy of $13 < 20 = 2u_{\text{max}}$.}
An allocation is EF2 if any agent's envy can be avoided by removing up to two items from the envied bundle. Again, we only allow removal of items if this operation leaves a connected bundle. 
\begin{definition}[EF2: envy-freeness up to two outer goods]
	An allocation $A$ satisfies EF2 if, for any pair $i,j \in N$ of agents, either $|A(j)| \le 1$, or there are two goods $u, v \in A(j)$ such that $A(j) \setminus \{u,v\}$ is connected and $u_i(A(i)) \ge u_i(A(j)\setminus \{u,v\})$.
\end{definition}

Let us first give a high-level illustration with three agents of how Sperner's lemma can be used to find low-envy allocations.  

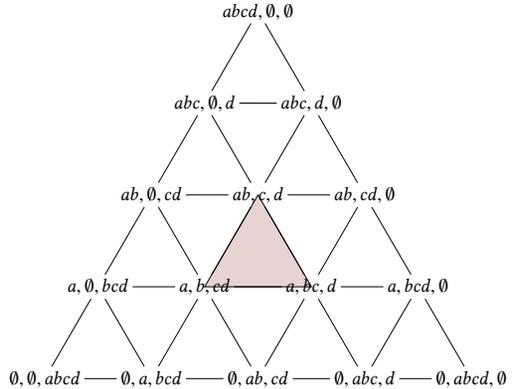
\begin{figure}[ht]
	\centering
	\scalebox{1}{ 
		\begin{tikzpicture}[xscale=2, yscale=1.7, every node/.style={fill=none, inner sep=0.2em}]
		
		\draw[fill=black!10!white!92!red] (1.5,1) -- (2.5,1) -- (2,2) -- cycle;
		
		\node (11abcd) at (0,0) {$\emptyset,\emptyset,abcd$};
		\node (1a1bcd) at (1,0) {$\emptyset,a,bcd$};
		\node (1ab1cd) at (2,0) {$\emptyset,ab,cd$};
		\node (1abc1d) at (3,0) {$\emptyset,abc,d$};
		\node (1abcd1) at (4,0) {$\emptyset,abcd,\emptyset$};
		
		\node (a11bcd) at (0.5,1) {$a,\emptyset,bcd$};
		\node (a1b1cd) at (1.5,1) {$a,b,cd$};
		\node (a1bc1d) at (2.5,1) {$a,bc,d$};
		\node (a1bcd1) at (3.5,1) {$a,bcd,\emptyset$};
		
		\node (ab11cd) at (1.0,2) {$ab,\emptyset,cd$};
		\node (ab1c1d) at (2.0,2) {$ab,c,d$};
		\node (ab1cd1) at (3.0,2) {$ab,cd,\emptyset$};
		
		\node (abc11d) at (1.5,3) {$abc,\emptyset,d$};
		\node (abc1d1) at (2.5,3) {$abc,d,\emptyset$};
		
		\node (abcd11) at (2.0,4) {$abcd,\emptyset,\emptyset$};
		
		\draw
		(11abcd) -- (1a1bcd)
		(1a1bcd) -- (1ab1cd)
		(1ab1cd) -- (1abc1d)
		(1abc1d) -- (1abcd1)
		
		(a11bcd) -- (a1b1cd)
		(a1b1cd) -- (a1bc1d)
		(a1bc1d) -- (a1bcd1)
		
		(ab11cd) -- (ab1c1d)
		(ab1c1d) -- (ab1cd1)
		
		(abc11d) -- (abc1d1)

		(11abcd) -- (a11bcd)
		(a11bcd) -- (ab11cd)
		(ab11cd) -- (abc11d)
		(abc11d) -- (abcd11)
		
		(1a1bcd) -- (a1b1cd)
		(a1b1cd) -- (ab1c1d)
		(ab1c1d) -- (abc1d1)
		
		(1ab1cd) -- (a1bc1d)
		(a1bc1d) -- (ab1cd1)
		
		(1abc1d) -- (a1bcd1)
		
		(abcd11) -- (abc1d1)
		(abc1d1) -- (ab1cd1)
		(ab1cd1) -- (a1bcd1)
		(a1bcd1) -- (1abcd1)
		
		(abc11d) -- (ab1c1d)
		(ab1c1d) -- (a1bc1d)
		(a1bc1d) -- (1abc1d)
		
		(ab11cd) -- (a1b1cd)
		(a1b1cd) -- (1ab1cd)
		
		(a11bcd) -- (1a1bcd)
		;
		
		\end{tikzpicture}}
	\caption{Connected partitions as a subdivided simplex}
	\label{fig:simplex}
\end{figure}
Given a path $P = (a,b,c,d)$, the family of connected partitions of $P$ can naturally be arranged as the vertices of a subdivided simplex, as in Figure~\ref{fig:simplex}.

For each of these partitions, each agent $i$ \textit{labels} the corresponding vertex by the index of a bundle from that partition that $i$ most-prefers. For example, the top vertex will be labelled as ``index 1'' by all agents, since they all most-prefer the leftmost bundle in $(abcd,\emptyset,\emptyset)$. Now, Sperner's lemma will imply that at least one of the simplices (say the shaded one) is ``fully-labeled'', which means that the first agent most-prefers the leftmost bundle at one vertex, the second agent most-prefers the middle bundle at another vertex, and the third agent most-prefers the rightmost bundle at the last vertex. Notice that the partitions at the corner points of the shaded simplex are all ``similar'' to each other (they can be obtained from each other by moving only one item). Hence, we can ``round'' the corner-partitions into a common allocation $A^*$, say by picking one of the corner partitions arbitrarily and then allocating bundles to agents according to the labels. The resulting allocation has the property that any agents' envy can be eliminated by moving at most one good.%
\footnote{One can generalize this argument to show that on paths, there exists an allocation $A$ satisfying a weak form of EF1: for any $i,j\in [n]$, we have $u_i(I_i \cup \{g_i\}) \ge u_i(I_j \setminus \{g_j\})$ for some items $g_i,g_j$ such that $I_i \cup \{g_i\}$ and $I_j \setminus \{g_j\}$ are connected. For additive valuations, this implies that envy is bounded by $u_i(g_i) + u_i(g_j) \le 2u_{\text{max}}$, which is the result of \citet{Suksompong2017}.}

The argument sketched above does not yield an EF1 nor even an EF2 allocation. Intuitively, the problem is that the connected partitions at the corners of the fully-labeled simplex are ``too far apart'', so that no matter how we round the corner partitions into a common allocation $A^*$, some agents' bundles will have changed too much, and so we cannot prevent envy even up to one or two goods. In the following, we present a solution to this problem, by considering a finer subdivision: we introduce $n-1$ knives which move in half-steps (rather than full steps), and which might `cover' an item so that it appears in none of the bundles. The result is that the partial partitions in the corners of the fully-labeled simplex are closer together, and can be successfully rounded into an EF2 allocation $A^*$.

In our approach, we use a specific triangulation (Kuhn's triangulation, \citealp{Kuhn1960}). This triangulation has the needed property that the partitions at the corners of sub-simplices are close together, and adjacent partitions can be obtained from each other in a natural way. While this type of triangulation has also been used in cake-cutting, e.g., by \citet{Deng2012}, there it was only used to speed up algorithms (compared to the barycentric subdivision used by \citet{Su1999}), not to obtain better fairness properties.

\subsection{Sperner's lemma}
We start by formally introducing Sperner's lemma \citep[cf.][]{Flegg1974}. Let $\conv(\vv_1,\vv_2,\ldots, \vv_k)$ denote the convex hull of $k$ vectors $\vv_1,\vv_2,\ldots, \vv_k$. An \emph{$n$-simplex} is an $n$-dimensional polytope which is the convex hull of its $n+1$ \emph{main vertices}. A \emph{$k$-face} of the $n$-simplex is the $k$-simplex formed by the span of any subset of $k+1$ main vertices. A \emph{triangulation} $T$ of a simplex $S$ is a collection of sub-$n$-simplices whose union is $S$ with the property that the intersection of any two of them is either the empty set, or a face common to both. Each of the sub-simplices $S^* \in T$ is called an \emph{elementary} simplex of the triangulation $T$. We denote by $V(T)$ the set of vertices of the triangulation $T$, i.e., the union of vertices of the elementary simplices of $T$.

Let $T$ be some fixed triangulation of an $(n-1)$-simplex $S=\conv(\vv_1,\vv_2,\ldots,$ $\vv_n)$.
A \emph{labeling function} is a function $L : V(T) \rightarrow [n]$ that assigns a number in $[n]$ (called a \emph{color}) to each vertex of the triangulation $T$.
A labeling function $L$ is called \emph{proper} if
\begin{itemize}
	\item For each main vertex $\vv_i$ of the simplex, $L$ assigns color $i$ to $\vv_i$: $L(\vv_i)=i$; and
	\item $L(\vv)\neq i$ for any vertex $\vv \in V(T)$ belonging to the $(n-2)$-face of $S$ not containing $\vv_i$.
\end{itemize}
Sperner's lemma states that if $L$ is a proper labeling function, then there exists an elementary simplex of $T$ whose vertices have all different labels.

We will consider a generalized version of Sperner's lemma, proved, for example, by \citet{Bapat1989}. In this version, there are $n$ labeling functions $L_1,\dots,L_n$, and we are looking for an elementary simplex that is \emph{fully-labeled} for some way of assigning labeling functions to vertices, where we must use each labeling function exactly once. The formal definition is as follows.

\begin{definition}[Fully-labeled simplex]
	Let $T$ be a triangulation of an $(n-1)$-simplex, and let $L_1,\dots,L_n$, be labeling functions. An elementary simplex $S^*$ of $T$ is \emph{fully-labeled} if we can write $S^*= \conv(\vv^*_1,\vv^*_2,\ldots, \vv^*_n)$ such that there exists a permutation $\phi:[n] \rightarrow [n]$ with
	\[ L_{i}(\vv^*_{i})=\phi(i) \quad\text{for each $i\in [n]$}. \]
\end{definition}

The generalized version of Sperner's lemma that we consider, taken from \citet{Bapat1989}, guarantees the existence of a fully-labeled simplex.

\begin{lemma}[Generalized Sperner's Lemma]
	\label{lem:sperner}
	Let $T$ be a triangulation of an $(n-1)$-simplex~$S$, and let $L_1,\dots,L_n$ be proper labeling functions.
	Then there is a fully-labeled simplex $S^*$ of $T$.
\end{lemma}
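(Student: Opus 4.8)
The plan is to reduce the generalized (multi-labeling) version to the classical single-labeling Sperner's lemma via a combinatorial argument on permutations, which is essentially the approach used by Bapat. First I would recall the setup: we have $n$ proper labeling functions $L_1,\dots,L_n$ on the vertices of the triangulation $T$ of the $(n-1)$-simplex $S=\conv(\vv_1,\dots,\vv_n)$, and we want an elementary simplex $S^*=\conv(\vv_1^*,\dots,\vv_n^*)$ together with a permutation $\phi$ so that $L_i(\vv_i^*)=\phi(i)$ for all $i$.

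The key idea is to pass to the \emph{barycentric subdivision} $T'$ of $T$ — or, more precisely, to a triangulation of $S$ whose vertices correspond to the barycenters of faces of elementary simplices of $T$ — and to define a single new labeling function on $V(T')$ using the data of all $n$ of the $L_i$. Concretely, each vertex $w$ of the barycentric subdivision is the barycenter of some face $\sigma$ of an elementary simplex of $T$, where $\sigma$ has some dimension $k$, i.e. $\sigma=\conv(\vv_{j_0}^{T},\dots,\vv_{j_k}^{T})$ for vertices of the triangulation. One then assigns to $w$ a label in $[n]$ built from the labels that the functions $L_1,\dots,L_{k+1}$ (or some canonical subset of size $k+1$ of the $L_i$) give to the vertices of $\sigma$; the crucial combinatorial fact — a Hall-type / system-of-distinct-representatives argument — is that one can always pick such a label so that the resulting single labeling $L'$ on $V(T')$ is \emph{proper} in the classical sense. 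Applying classical Sperner's lemma to $L'$ yields a rainbow elementary simplex of $T'$, and by the nested structure of the barycentric subdivision the vertices of this rainbow simplex are barycenters of a nested chain of faces $\sigma_0\subset\sigma_1\subset\dots\subset\sigma_{n-1}$ of a single elementary simplex $S^*$ of $T$; unwinding the definition of $L'$ on these barycenters produces exactly a fully-labeled simplex $S^*$ with the required permutation $\phi$.

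After stating the reduction I would fill in the two technical points: (1) that the barycentric subdivision is a genuine triangulation refining $T$, which is standard; and (2) the properness of $L'$, which is where the generalized-Sperner hypothesis (all $L_i$ proper) gets used — on the $(n-2)$-face opposite $\vv_i$, none of the $L_j$ ever outputs the forbidden color, so the selection process for $L'$ can avoid it too. I would also spell out the SDR/Hall argument that lets us select, for the barycenter of a $k$-face $\sigma$, a color not used "too much," so that at the top of the chain the $n$ labels assigned across $\sigma_0,\dots,\sigma_{n-1}$ are all distinct — this is what forces the permutation $\phi$ to exist. Alternatively, if a cleaner route is preferred, I would simply cite \citet{Bapat1989} for the generalized statement and give only the short deduction, since the lemma is used downstream only as a black box.

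The main obstacle is point (2): verifying that the induced single labeling $L'$ on the barycentric subdivision is proper, and simultaneously that a rainbow simplex of $L'$ really does decode to $n$ \emph{distinct} labels (not merely $n$ labels) across the nested chain of faces. This requires the careful Hall-type selection rule and a check that properness of all $L_i$ on the opposite faces of $S$ transfers to properness of $L'$; it is the one place where a genuine argument, rather than bookkeeping, is needed. Everything else — the structure of the barycentric subdivision, the application of classical Sperner, and the final unwinding into a permutation — is routine.
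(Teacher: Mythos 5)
The paper does not prove Lemma~\ref{lem:sperner} at all: it is used as a black box with a citation to \citet{Bapat1989}. So your fallback option (cite Bapat and give no argument) is exactly what the paper does. Your primary sketch, however, has a genuine gap, and it is not quite the one you flag: properness of the induced labeling $L'$ on the barycentric subdivision transfers without difficulty, and distinctness of the $n$ \emph{colors} on a rainbow simplex is automatic from classical Sperner. What fails is the decoding of a rainbow simplex of the subdivision into $n$ distinct \emph{vertices} of a single elementary simplex of $T$.

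Concretely: to define $L'$ you must assign to the barycenter of each $k$-face $\sigma$ a color of the form $L_{k+1}(\rho(\sigma))$ for some representative $\rho(\sigma)\in\sigma$, and this choice can depend only on $\sigma$, since that barycenter is shared by every flag through $\sigma$. No such rule can guarantee distinct representatives along every flag: already for $\sigma_0=\{a\}\subset\sigma_1=\{a,b\}$, whichever vertex $\rho(\sigma_1)$ is, some flag forces $\rho(\sigma_0)=\rho(\sigma_1)$. Nor does an after-the-fact Hall/SDR argument rescue it: if $L_1(a)=c_0$, $L_2(a)=c_1\neq c_0$ and $L_2(b)\neq c_1$, then the only vertex of $\sigma_1$ realizing the color $c_1$ under $L_2$ is $a$ itself, the admissible representative sets for $\sigma_0$ and $\sigma_1$ coincide, and Hall's condition fails. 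A rainbow simplex of the subdivision can thus decode to fewer than $n$ distinct vertices, which is not a fully-labeled simplex in the sense required. Bapat's actual proof avoids this by a direct path-following/parity argument on almost-fully-labeled simplices. If you want a short self-contained proof of the case the paper actually uses, the clean route is Su's ownership trick applied to $T$ itself rather than to a refinement: if the vertices of $T$ admit an $n$-coloring (``ownership'') under which every elementary simplex is rainbow --- true for the Kuhn triangulation $T_{\half}$, e.g.\ by coloring $\bfx$ with $2(x^1+\dots+x^{n-1})\bmod n$, since each step of \eqref{eq:sperner:kuhn-triangulation} increments this by one --- then $L'(v):=L_{\mathrm{owner}(v)}(v)$ is a single proper labeling whose rainbow simplices are exactly the fully-labeled simplices of the generalized statement, with distinctness of the $\vv^*_i$ built in because each elementary simplex has exactly one vertex per owner. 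For arbitrary triangulations $T$, as in the statement of the lemma, one still needs Bapat's theorem.
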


\subsection{Existence of EF2 allocations}
Suppose that our graph $G$ is a path $P=(1,2,\ldots, m)$, where the items are named by integers. We assume that $m\ge n$, so that there are at least as many items as agents (when $m < n$ it is easy to find EF1 allocations). Our aim is to cut the path $P$ into $n$ intervals (bundles) $I_*^{1},I_*^{2},\ldots,I_*^{n}$. Throughout the argument, we use superscripts to denote indices of bundles; index 1 refers to the leftmost bundle and index $n$ refers to the rightmost bundle.

\paragraph{Construction of the triangulation.}
Consider the $(n-1)$-simplex\footnote{The simplex $S_m$ is  affinely equivalent to the standard $(n-1)$-simplex $\Delta_{n-1} = \{ (l_1,\dots,l_n) \ge 0 : \sum l_i = 1 \}$ via $x_i = m \cdot (l_1 + l_2 + \cdots + l_i) + \frac12$. In these coordinates, $l_i$ is the length of the $i$-th piece (times $1/m$).}
\begin{align}\label{eq:simplex}
\textstyle
S_m=\{\, \bfx \in \mathbb R^{n-1} : \frac12 \leq x^1\leq x^2\leq \ldots\leq  x^{n-1}\leq m + \frac12 \,\}.
\end{align}
We construct a triangulation $T_{\half}$ of $S_m$ whose vertices $V(T_{\half})$ are the points $\bfx \in S_m$ such that each $x^j$ is either integral or half-integral, namely,
\[
\textstyle
V(T_{\half}) = \{ \bfx \in S_m : x^j \in \{ \frac12, 1, \frac32, 2, \frac52 \dots, m, m + \frac12 \} \text{ for all $j\in [n]$}   \}.
\]
For reasons that will become clear shortly, we call a vector $\bfx \in V(T_{\half})$ a \emph{knife position}.

Using Kuhn's triangulation \citep{Kuhn1960,Scarf1982,Deng2012}, we  construct $T_{\half}$ so we can write each elementary simplex $S' \in T_{\half}$ as $S' = \conv(\bfx_1,\bfx_2,\ldots \bfx_n)$ and there is a permutation $\pi:[n]\rightarrow [n]$ with
\begin{equation}
\label{eq:sperner:kuhn-triangulation}
\bfx_{i+1}=\bfx_i+\textstyle \frac12 \mathbf{e}^{\pi(i)} \quad \text{for each $i\in [n-1]$,}
\end{equation}
where $\mathbf{e}^j = (0,\dots,1,\dots,0)$ is the $j$-th unit vector. 

We give an interpretation of \eqref{eq:sperner:kuhn-triangulation} shortly. Each vertex $\bfx=(x^1,x^2,\ldots,x^{n-1}) \in V(T_{\half})$ of the triangulation $T_{\half}$ corresponds to a partial partition $A({\bfx})=(I^1(\bfx),I^2(\bfx),\ldots,I^n(\bfx))$ of $P$ where
$I^{j}(\bfx) := \{ y \in \{ 1, 2, \dots, m \}  : x^{j-1} < y < x^j \}$,
writing $x^0 = \frac12$ and $x^n = m+\frac12$ for convenience.
Intuitively, $\bfx$ specifies the location of $n-1$ knives that cut $P$ into $n$ pieces. If $x^j$ is integral, that is $x^j \in \{ 1,\dots,m \}$, then the $j$-th knife `covers' the item $x^j$, which is then part of neither $I^{j}(\bfx)$ nor $I^{j+1}(\bfx)$. This is why $A({\bfx})$ is a \emph{partial} partition. Since there are only $n-1$ knives but $m \ge n$ items, not all items are covered, so at least one bundle is non-empty.

Property \eqref{eq:sperner:kuhn-triangulation} means that, if we visit the knife positions $\bfx_1,\bfx_2,\ldots \bfx_n$ at the corners of an elementary simplex in the listed order, then at each step exactly one of the knives moves by half a step, and each knife moves only at one of the steps.

\paragraph{Construction of the labeling functions.}
We now construct, for each agent $i\in [n]$, a labeling function $L_i : V(T_\half) \to [n]$. The function $L_i$ takes as input a vertex $\bfx$ of the triangulation $T_{\half}$ (interpreted as the partial partition $A(\bfx)$), and returns a color in $[n]$. The color will specify the index of a bundle in $A(\bfx)$ that agent $i$ likes the most.
Formally,
\[ L_i(\bfx) \in \{ j\in [n] : u_i(I^j(\bfx)) \ge u_i(I^k(\bfx)) \text{ for all $k\in [n]$} \}. \]
If there are several most-preferred bundles in $A(\bfx)$, ties can be broken arbitrarily. However, we insist that the index $L_i(\bfx)$ always corresponds to a non-empty bundle (this can be ensured since $A(\bfx)$ always contains a non-empty bundle, and $u_i$ is monotonic).

The labeling functions $L_i$ are proper. For each $j\in [m]$, the main vertex $\vv_j$ of the simplex $S_m$ has the form $\vv_j = (\frac12,\dots,\frac12,m+\frac12,\dots,m+\frac12)$, where the first $j-1$ entries are $\frac12$ and the rest are $m+\frac12$.
In the partition $A(\vv_j)$, the bundle $I^j(\vv_j)$ contains all the items, so is most-preferred (since $u_i$ is monotonic and by our tie-breaking), and so $L_i(\vv_j) = j$. Further, any vertex $\bfx$ belonging to the $(n-2)$-face of $S_m$ not containing $\vv_j$ satisfies $x^{j-1} = x^j$, and thus in partition $A(\bfx)$, bundle $I^j(\bfx)$ is empty, hence is \emph{not} selected, and so $L_i(\bfx) \neq j$.

By the generalized version of Sperner's lemma (Lemma~\ref{lem:sperner}), there exists an elementary simplex $S^* = \conv(\bfx_1,\bfx_2,\ldots,\bfx_n)$ of the triangulation $T_{\half}$ which is fully-labeled, so that, for some permutation $\phi:[n] \rightarrow [n]$, we have $L_{i}(\bfx_{i})=\phi(i)$ for all $i\in [n]$.

\paragraph{Translation into partial partitions.}
The fully-labeled elementary simplex $S^*$ corresponds to a sequence $(A_1,A_2,\ldots, A_n)$ of partial partitions of $P$, which we call the \emph{Sperner sequence}, where $A_i= (I_i^1, \dots, I_i^n) := A(\bfx_i)$ for each $i \in [n]$.
An example of a Sperner sequence is shown in Figure~\ref{fig:sperner}.
From the labeling, for each agent $i\in [n]$, since $L_{i}(\bfx_{i})=\phi(i)$, the bundle with index $\phi(i)$ in the partition $A_i$ is a best bundle for $i$:
\begin{equation}
\label{eq:ef2:my-bundle-is-best}
u_i (I_i^{\phi(i)}) \ge u_i (I_i^j) \quad\text{for each $j\in[n]$}.
\end{equation}

Now, for each $j \in [n]$, we define the \emph{basic bundle} $B^j := I_1^j \cap \cdots \cap I_n^j$ to be the bundle of items that appear in the $j$-th bundle of every partition in the Sperner sequence. The set of basic bundles is a partial partition. Let us analyze the items between basic bundles.

From \eqref{eq:sperner:kuhn-triangulation}, each of the $n-1$ knives moves exactly once, by half a step, while passing through the Sperner sequence $(A_1,A_2,\ldots, A_n)$. Thus, the numbers $x_1^j,\dots,x_n^j$ take on two different values, one of which is integral and the other half-integral. We write $y^j$ for the integral value (so $y^j = x_i^j$ for some $i\in [n]$), and call $y^j$ a \emph{boundary item}. The $j$-th knife covers the item $y^j$ in some, but not all, of the partial partitions in the Sperner sequence. Now, there are two cases:
\begin{enumerate}
	\item[(a)] $x_1^j = \dots = x_i^j = y^j - \frac12$ and $x_{i+1}^j = \dots = x_n^j = y^j$ for some $i\in [n]$, so that $y^j$ never occurs in the $j$-th bundle in the Sperner sequence but sometimes occurs in the $(j+1)$-th bundle, or
	\item[(b)] $x_1^j = \dots = x_i^j = y^j$ and $x_{i+1}^j = \dots = x_n^j = y^j + \frac12$ for some $i\in [n]$, so that $y^j$ sometimes occurs in the $j$-th bundle in the Sperner sequence but never occurs in the $(j+1)$-th bundle.
\end{enumerate}
Since $y^j$ is sometimes covered by a knife, it is not part of any basic bundle.
Note that
\begin{equation}
\label{eq:basic-bundle-is-basic}
B^j \subseteq I^{j}_i \subseteq \{ y^{j-1} \} \cup B^j \cup \{ y^j \} \quad \text{for every $i,j \in [n]$.}
\end{equation}

\begin{figure}[t]
	\centering
	\begin{tikzpicture}[scale=0.9, transform shape, every node/.style={minimum size=5mm, inner sep=1pt}]
	
	\def\rulery{1}
	\draw (-0.6, \rulery) -- (13.8, \rulery);
	\foreach \x/\xtext in {-0.6/\frac12, 0/1, 0.6/\frac32, 1.2/2, 1.8/\frac52, 2.4/\cdots, 12.6/\quad\cdots,13.8/m+\frac12}
	\draw[shift={(\x,\rulery)}] (0pt,2.5pt) -- (0pt,-2.5pt) node[above=5.5pt] {$\strut\xtext$};
	
	\foreach \x in {3.0,3.6,4.2,4.8,5.4,6.0,6.6,7.2,7.8,8.4,9.0,9.6,10.2,10.8,11.4,12.0,12.6,13.2,13.8}
	\draw[shift={(\x,\rulery)}] (0pt,2.5pt) -- (0pt,-2.5pt);
	
	\draw[shift={(-0.6,\rulery)}] (0pt,3.5pt) -- (0pt,-3.5pt);
	\draw[shift={(13.8,\rulery)}] (0pt,3.5pt) -- (0pt,-3.5pt);
	
	\draw (-0.6,0.4) rectangle (3.0,-0.4);
	\draw[fill=black!10] (3.0,0.4) rectangle (6.6,-0.4);
	\draw (6.6,0.4) rectangle (10.8,-0.4);
	\draw (10.8,0.4) rectangle (13.8,-0.4);
	
	\node at (-1.1,0) {$A_1$};
	\node[draw, circle](1) at (0,0) {};
	\node[draw, circle](2) at (1.2,0) {};
	\node[draw, circle](3) at (2.4,0) {};
	\node[draw, circle](4) at (3.6,0) {$y_1$};
	\node[draw, circle](5) at (4.8,0) {};
	\node[draw, circle](6) at (6,0) {};
	\node[draw, circle](7) at (7.2,0) {$y_2$};
	\node[draw, circle](8) at (8.4,0) {};
	\node[draw, circle](9) at (9.6,0) {};
	\node[draw, circle](10) at (10.8,0) {$y_3$};
	\node[draw, circle](11) at (12.0,0) {};
	\node[draw, circle](12) at (13.2,0) {};
	
	\draw[-, >=latex,thick]  (1)--(2) (2)--(3) (3)--(4) (4)--(5) (5)--(6) (6)--(7) (7)--(8) (8)--(9) (9)--(10) (10)--(11) (11)--(12);
	
	\begin{scope}[shift={(0,-1.5)}]
	
	\draw[fill=black!10, draw=none] (-0.6,0.4) rectangle (3.0,-0.4);
	\draw (-0.6,0.4) rectangle (3.6,-0.4);
	\draw (3.6,0.4) rectangle (6.6,-0.4);
	\draw (6.6,0.4) rectangle (10.8,-0.4);
	\draw (10.8,0.4) rectangle (13.8,-0.4);
	
	\node at (-1.1,0) {$A_2$};
	\node[draw, circle](1) at (0,0) {};
	\node[draw, circle](2) at (1.2,0) {};
	\node[draw, circle](3) at (2.4,0) {};
	\node[draw, circle](4) at (3.6,0) {$y_1$};
	\node[draw, circle](5) at (4.8,0) {};
	\node[draw, circle](6) at (6,0) {};
	\node[draw, circle](7) at (7.2,0) {$y_2$};
	\node[draw, circle](8) at (8.4,0) {};
	\node[draw, circle](9) at (9.6,0) {};
	\node[draw, circle](10) at (10.8,0) {$y_3$};
	\node[draw, circle](11) at (12.0,0) {};
	\node[draw, circle](12) at (13.2,0) {};
	
	\draw[-, >=latex,thick]  (1)--(2) (2)--(3) (3)--(4) (4)--(5) (5)--(6) (6)--(7) (7)--(8) (8)--(9) (9)--(10) (10)--(11) (11)--(12);
	
	\end{scope}
	
	\begin{scope}[shift={(0,-3)}]
	\draw[fill=black!10, draw=none] (11.4,0.4) rectangle (13.8,-0.4);
	\draw (-0.6,0.4) rectangle (3.6,-0.4);
	\draw (3.6,0.4) rectangle (7.2,-0.4);
	\draw (7.2,0.4) rectangle (10.8,-0.4);
	\draw (10.8,0.4) rectangle (13.8,-0.4);
	
	\node at (-1.1,0) {$A_3$};
	\node[draw, circle](1) at (0,0) {};
	\node[draw, circle](2) at (1.2,0) {};
	\node[draw, circle](3) at (2.4,0) {};
	\node[draw, circle](4) at (3.6,0) {$y_1$};
	\node[draw, circle](5) at (4.8,0) {};
	\node[draw, circle](6) at (6,0) {};
	\node[draw, circle](7) at (7.2,0) {$y_2$};
	\node[draw, circle](8) at (8.4,0) {};
	\node[draw, circle](9) at (9.6,0) {};
	\node[draw, circle](10) at (10.8,0) {$y_3$};
	\node[draw, circle](11) at (12.0,0) {};
	\node[draw, circle](12) at (13.2,0) {};
	
	\draw[-, >=latex,thick]  (1)--(2) (2)--(3) (3)--(4) (4)--(5) (5)--(6) (6)--(7) (7)--(8) (8)--(9) (9)--(10) (10)--(11) (11)--(12);
	
	\end{scope}
	
	\begin{scope}[shift={(0,-4.5)}]
	
	\draw[fill=black!10, draw=none] (7.8,0.4) rectangle (11.4,-0.4);
	\draw (-0.6,0.4) rectangle (3.6,-0.4);
	\draw (3.6,0.4) rectangle (7.2,-0.4);
	\draw (7.2,0.4) rectangle (11.4,-0.4);
	\draw (11.4,0.4) rectangle (13.8,-0.4);
	
	\node at (-1.1,0) {$A_4$};
	\node[draw, circle](1) at (0,0) {};
	\node[draw, circle](2) at (1.2,0) {};
	\node[draw, circle](3) at (2.4,0) {};
	\node[draw, circle](4) at (3.6,0) {$y_1$};
	\node[draw, circle](5) at (4.8,0) {};
	\node[draw, circle](6) at (6,0) {};
	\node[draw, circle](7) at (7.2,0) {$y_2$};
	\node[draw, circle](8) at (8.4,0) {};
	\node[draw, circle](9) at (9.6,0) {};
	\node[draw, circle](10) at (10.8,0) {$y_3$};
	\node[draw, circle](11) at (12.0,0) {};
	\node[draw, circle](12) at (13.2,0) {};
	
	\draw[-, >=latex,thick]  (1)--(2) (2)--(3) (3)--(4) (4)--(5) (5)--(6) (6)--(7) (7)--(8) (8)--(9) (9)--(10) (10)--(11) (11)--(12);
	\draw [decorate,decoration={brace,mirror,amplitude=6pt},xshift=-4pt,yshift=0pt](-0.2,-0.5) -- (2.9,-0.5) node [black,midway,yshift=-0.6cm] {$B_1$};
	\draw [decorate,decoration={brace,mirror,amplitude=6pt},xshift=-4pt,yshift=0pt](4.6,-0.5) -- (6.5,-0.5) node [black,midway,yshift=-0.6cm] {$B_2$};
	\draw [decorate,decoration={brace,mirror,amplitude=6pt},xshift=-4pt,yshift=0pt](8.2,-0.5) -- (10.1,-0.5) node [black,midway,yshift=-0.6cm] {$B_3$};
	\draw [decorate,decoration={brace,mirror,amplitude=6pt},xshift=-4pt,yshift=0pt](11.9,-0.5) -- (13.6,-0.5) node [black,midway,yshift=-0.6cm] {$B_4$};
	
	\end{scope}
	
	\begin{scope}[shift={(0,-6.5)}]
	
	\draw (-0.6,0.4) rectangle (3.0,-0.4);
	\draw (3.0,0.4) rectangle (6.6,-0.4);
	\draw (6.6,0.4) rectangle (11.4,-0.4);
	\draw (11.4,0.4) rectangle (13.8,-0.4);
	
	\node at (-1,0) {$A_*$};
	\node[draw, circle](1) at (0,0) {};
	\node[draw, circle](2) at (1.2,0) {};
	\node[draw, circle](3) at (2.4,0) {};
	\node[draw, circle](4) at (3.6,0) {$y_1$};
	\node[draw, circle](5) at (4.8,0) {};
	\node[draw, circle](6) at (6,0) {};
	\node[draw, circle](7) at (7.2,0) {$y_2$};
	\node[draw, circle](8) at (8.4,0) {};
	\node[draw, circle](9) at (9.6,0) {};
	\node[draw, circle](10) at (10.8,0) {$y_3$};
	\node[draw, circle](11) at (12.0,0) {};
	\node[draw, circle](12) at (13.2,0) {};
	
	\draw[-, >=latex,thick]  (1)--(2) (2)--(3) (3)--(4) (4)--(5) (5)--(6) (6)--(7) (7)--(8) (8)--(9) (9)--(10) (10)--(11) (11)--(12);
	
	\end{scope}
	\end{tikzpicture}
	\caption{Example of the Sperner sequence $A_1,\dots,A_4$ for $n=4$, as well as the derived partition $A_*$. Vertical lines indicate the positions $x_i^1, x_i^2, x_i^3$ of the knives, $i = 1,\dots,4$. Shaded in gray, for $i = 1,\dots,4$, is the bundle $\smash{I_i^{\phi(i)}}$ selected by agent $i$ as their favorite bundle in $A_i$.
		\label{fig:sperner}
	}
\end{figure}
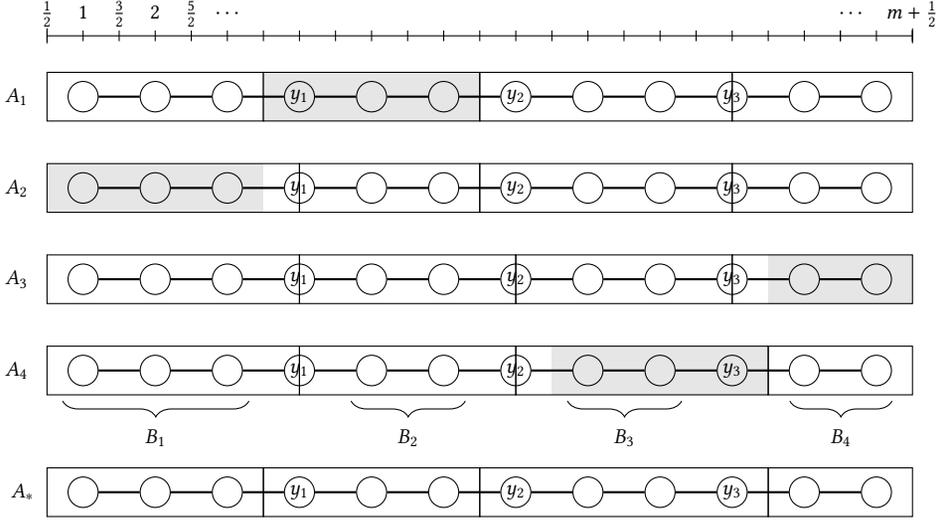

\paragraph{Rounding into a complete partition.}
We now construct a complete partition of the path $P$ into the bundles $(I_*^{1},I_*^{2},\ldots,I_*^{n})$ which are defined as follows:
\[ I_*^j := I_1^j \cup \cdots \cup I_n^j \quad\text{for each $j\in[n]$}.\]
Thus, the bundle $I_*^j$ contains the basic bundle $B^j$, plus all of the boundary items $y^{j-1}$ or $y^j$ that occur in the $j$-th bundle at some point of the Sperner sequence. 
Precisely, for each boundary item $y^j$, $j\in [n-1]$, the item $y^j$ is placed in bundle $I_*^{j+1}$ in case (a) above, and it is placed in bundle $I_*^j$ in case (b).
Thus, every item is allocated to exactly one bundle.

\paragraph{An EF2 allocation.}
We first show that the partition $(I_*^{1},I_*^{2},\ldots,I_*^{n})$ is such that agents' expectations about the value of the bundles $I_*^j$ are approximately correct (up to two items):
\begin{equation}
\label{eq:ef2:approx-correct}
u_i(I_*^j) \ge u_i (I_i^j) \ge u_i(B^j) \quad\text{ for every agent $i\in [n]$ and every $j\in [n]$.}
\end{equation}
This follows by monotonicity of $u_i$, since $I_*^j = I_1^j \cup \cdots \cup I_n^j \supseteq I_i^j \supseteq B^j$ by \eqref{eq:basic-bundle-is-basic}.

Now, based on the partition, we define an allocation $A_*$ by $A_*(i) = I_*^{\phi(i)}$ for each agent $i\in [n]$. Then $A_*$ satisfies EF2:
For any pair $i,j \in [n]$ of agents, we have
\begin{alignat*}{2}
	u_i(A_*(i))
	= u_i(I_*^{\phi(i)})
	&\ge u_i (I_i^{\phi(i)})  &&\text{by \eqref{eq:ef2:approx-correct}}
	\\
	&\ge u_i (I_i^{\phi(j)}) &&\text{by \eqref{eq:ef2:my-bundle-is-best}}
	\\
	&\ge u_i(B^{\phi(j)}) &&\text{by \eqref{eq:ef2:approx-correct}}
	\\
	&= u_i(A_*(j) \setminus \{ y^{j-1}, y^j \}). \qquad\qquad &&\text{by \eqref{eq:basic-bundle-is-basic}}
\end{alignat*}
Hence, we have proved the main result of this section:
\begin{theorem}\label{thm:EF2}
	On a path, for any number of agents with monotone valuation functions, a connected EF2 allocation exists.
\end{theorem}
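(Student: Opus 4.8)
The plan is to establish existence non-constructively by applying the generalized Sperner's lemma (Lemma~\ref{lem:sperner}) to a space of ``partial partitions'' of the path, mimicking Su's argument for divisible cake-cutting, but over a triangulation fine enough that the final rounding step distorts each bundle by at most two outer items. First I would dispose of the trivial case $m<n$: give one agent $\emptyset$ and at most one item to each of the others; any such allocation is already EF1, hence EF2. So assume $m\ge n$ and work with the $(n-1)$-simplex $S_m$ of \eqref{eq:simplex}, whose points encode $n-1$ ordered knife positions along $P=(1,\dots,m)$, together with Kuhn's triangulation $T_{\half}$ whose vertices are the knife positions with all coordinates in $\frac12\mathbb{Z}$. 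The crucial structural property is \eqref{eq:sperner:kuhn-triangulation}: along the vertices $\bfx_1,\dots,\bfx_n$ of any elementary simplex, the knives move one at a time, each by exactly one half-step. Each vertex $\bfx$ is read as the partial partition $A(\bfx)=(I^1(\bfx),\dots,I^n(\bfx))$, where an integral coordinate $x^j\in\{1,\dots,m\}$ means that knife $j$ covers item $x^j$, so that item belongs to neither $I^j(\bfx)$ nor $I^{j+1}(\bfx)$; since there are only $n-1$ knives and $m\ge n$ items, at least one bundle is always nonempty.

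Next I would define, for each agent $i$, the labeling function $L_i(\bfx)$ to be the index of some bundle that $i$ most prefers in $A(\bfx)$, with ties broken so that the chosen index always names a nonempty bundle (possible by monotonicity and the previous remark). Each $L_i$ is proper: at a main vertex $\vv_j$ one bundle holds all of $P$, forcing $L_i(\vv_j)=j$, while on the opposite $(n-2)$-face of $S_m$ bundle $j$ is empty, so $L_i(\bfx)\ne j$ there. Lemma~\ref{lem:sperner} then yields a fully-labeled elementary simplex $S^{*}=\conv(\bfx_1,\dots,\bfx_n)$ and a permutation $\phi$ with $L_i(\bfx_i)=\phi(i)$ for all $i$, i.e.\ \eqref{eq:ef2:my-bundle-is-best}. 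Writing $A_i:=A(\bfx_i)$ for the resulting Sperner sequence, I would then set up the combinatorial core: the basic bundle $B^j:=\bigcap_i I_i^j$, and, using that each of the $n-1$ knives moves exactly once by a half-step across the sequence, the unique integral boundary item $y^j$ whose coverage by knife $j$ varies along the sequence. The key estimate is the sandwich $B^j\subseteq I_i^j\subseteq\{y^{j-1}\}\cup B^j\cup\{y^j\}$ of \eqref{eq:basic-bundle-is-basic}, which holds precisely because the only knives bounding bundle $j$ are knives $j-1$ and $j$, each moving only a half-step, so the only items whose membership in the $j$-th bundle can change are $y^{j-1}$ and $y^j$.

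Finally I would round: put $I_*^j:=\bigcup_i I_i^j$; each boundary item $y^j$ lies in exactly one of $I_*^j,I_*^{j+1}$ according to which side of the half-step it sat on, so $(I_*^1,\dots,I_*^n)$ is a genuine partition of $P$ into subpaths. Define $A_*(i):=I_*^{\phi(i)}$. For EF2, fix agents $i,j$: monotonicity together with $I_i^{\phi(i)}\subseteq I_*^{\phi(i)}$ gives $u_i(A_*(i))=u_i(I_*^{\phi(i)})\ge u_i(I_i^{\phi(i)})$, which by \eqref{eq:ef2:my-bundle-is-best} is in particular $\ge u_i(I_i^{\phi(j)})$, which by the sandwich is $\ge u_i(B^{\phi(j)})=u_i\big(A_*(j)\setminus\{y^{\phi(j)-1},y^{\phi(j)}\}\big)$; moreover $A_*(j)\setminus\{y^{\phi(j)-1},y^{\phi(j)}\}=B^{\phi(j)}$ is a connected interval, so removing these at-most-two outer items witnesses EF2.

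I expect the main obstacle to be the choice and analysis of the triangulation rather than the inequality chain. The naive translation of Su's theorem uses an integral subdivision, which only guarantees a four-item spread per bundle and so fails EF2 (and even fails the weaker $2u_{\text{max}}$ bound on paths, cf.\ footnote~\ref{footnote:weak-ef1}); getting the spread down to just the two boundary items is exactly what the half-step knives and the ``thick knife covers an item'' device buy. Making this precise requires (i) verifying that Kuhn's triangulation on the half-integral lattice really has property \eqref{eq:sperner:kuhn-triangulation}, and (ii) carefully bookkeeping the boundary items so that the rounded partition assigns every item exactly once and keeps every bundle connected; once that is in place, properness of the labelings and the EF2 verification are routine.
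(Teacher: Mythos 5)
Your proposal is correct and follows essentially the same route as the paper's own proof: the half-integral Kuhn triangulation of $S_m$ with covering knives, proper labelings by most-preferred non-empty bundles, the generalized Sperner lemma producing a fully-labeled simplex, the sandwich $B^j\subseteq I_i^j\subseteq\{y^{j-1}\}\cup B^j\cup\{y^j\}$, and rounding via $I_*^j=\bigcup_i I_i^j$ followed by the same three-inequality chain. The points you flag as needing care (property \eqref{eq:sperner:kuhn-triangulation} of Kuhn's triangulation and the bookkeeping of boundary items) are exactly the ones the paper verifies.
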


\section{EF1 existence for four agents}\label{sec:four}
We have seen that Sperner's lemma can be used to show EF2 existence for any number of agents.
Why does our proof in the previous section only establish EF2, and not EF1?
The reason is that agents' expectations about the contents of a bundle might differ by up to \emph{two} goods from what the bundle will actually contain. In the notation of the previous section, an agent $i$ may be presented with a partial partition $I_i$ where the $j$-th bundle $I_i^j$ is the basic bundle, i.e., $I_i^j = B^j$. The agent then selects their favorite bundle from $I_i$, implicitly assuming that the $j$-th bundle in the rounded partition $I_*$ will also equal $B^j$, i.e., that $I_*^j = B^j$. However, it may happen that in fact $I_*^j = \{ y^{j-1} \} \cup B^j \cup \{ y^j \}$, and then $i$ envies the agent who receives bundle $j$ by a margin of two goods.

For four agents, we can adapt our argument to achieve EF1. To do this, we both change the way we round the Sperner sequence into an allocation, and define new labeling functions that better anticipate how a partial partition will be rounded into the final allocation. In this way, agents' expectations about bundles can only be wrong up to one good. In crude terms, agents will expect that each of the two interior bundles will be assigned at least one of the boundary items, and the rounding method ensures that this will indeed happen.

Let $n=4$.
Formally, to define the labeling function, for each agent $i\in [n]$ we construct a \emph{virtual valuation function} $\hat u_i(\bfx, j)$ which assigns a value to each bundle $j\in [n]$ of a partial allocation as specified by a vertex $\bfx \in V(T_\half)$. The way these virtual valuations are defined differs based on the index $j$; in particular, end bundles ($j=1,4$) are treated differently from interior bundles ($j=2,3$). The virtual valuations are defined as follows, for each $\bfx \in V(T_\half)$ and each $i\in [n]$, where the middle row \eqref{eq:virtual23} applies to $j = 2$ and $j = 3$:
\begin{align}
\label{eq:virtual1}
\hat u_i(\bfx, 1) &=
\begin{cases}
u_i(\{1,\dots, x^1 - 1 \}) & \text{if $x^1\in \mathbb Z$,} \\
u_i(\{1,\dots, x^1 - \frac32 \}) & \text{if $x^1\not\in \mathbb Z$.} \\
\end{cases} \\
\label{eq:virtual23}
\hat u_i(\bfx, j) &=
\begin{cases}
u_i^-(\{x^{j-1}, \dots, x^j\}) \:\: & \text{if $x^{j-1} \in \mathbb Z$ and $x^j \in \mathbb Z$,} \\
u_i(I^j(\bfx)) & \text{otherwise}. \\
\end{cases} \\
\label{eq:virtual4}
\hat u_i(\bfx, 4) &=
\begin{cases}
u_i(\{x^3 + 1, \dots, m \}) & \text{if $x^3\in \mathbb Z$,} \\
u_i(\{x^3 + \frac32, \dots, m \}) & \text{if $x^3\not\in \mathbb Z$.} \\
\end{cases}
\end{align}
Thus, for an interior bundle $j=2,3$, if both the items $x^{j-1}$ and $x^j$ to either side of the bundle are covered by a knife, an agent expects that one of these items (the less-valuable one) will be put into bundle $I_*^j$ of the final rounded allocation (recall the definition of $u_i^-$ in equation \eqref{eq:up-to-one-valuation}). For exterior bundles, $j=1$ (resp.\ $j=4$), if the item $x^1$ (resp.\ $x^3$) is not covered by a knife, the agent does not expect the interior item (next to the knife) to belong to the final bundle $I_*^j$, even though it belongs to the observed bundle $I_i^j$. Otherwise, the virtual allocations are equal to $u_i(I^j(\bfx))$, so the agent expects that $I_*^j = I_i^j$. Later, we show that these expectations are correct up to one item.

Using these virtual valuations, we define labeling functions $\hat L_i : V(T_\half) \to [n]$ so that
\[ \hat L_i(\bfx) \in \{ j\in [n] : {\hat u}_i(\bfx, j) \ge {\hat u}_i(\bfx, k) \text{ for all $k\in [n]$} \}. \]
One can check that these valuation functions are still proper.

Again, by Sperner's lemma, there exists an elementary simplex $S^* = \conv(\bfx_1,\bfx_2,\ldots,\bfx_n)$ of the triangulation $T_{\half}$ which is fully-labeled according to our new labeling function: there is a permutation $\phi:[n] \rightarrow [n]$, with $\hat L_{i}(\bfx_{i})=\phi(i)$ for all $i\in [n]$. Again, this elementary simplex induces a Sperner sequence $(A_1, \dots, A_n)$ of partial partitions.

To shorten a case distinction, we assume that $y^2 \in I^2_1 \cup  I^2_2 \cup  I^2_3 \cup I^2_4$, i.e., that the boundary item $y^2$ appears in the second but not in the third bundle in the Sperner sequence. This assumption is without loss of generality, since by the left-right symmetry of the definition of virtual valuations, if necessary we can reverse the path $P$ and consider the same elementary simplex with vertices ordered in reverse ($\bfx_4,\bfx_3, \bfx_2,\bfx_1$); it will still be fully-labeled.

With this assumption made throughout the rest of the argument, we now round the Sperner sequence into a complete partition $(I_*^1,I_*^2,I_*^3,I_*^4)$ of $P$ defined as follows:
\begin{align*}
I_*^1 := I^1_1 \cup \cdots \cup I^1_4, \qquad
I_*^2 := I^2_1 \cup \cdots \cup I^2_4, \qquad
I_*^3 := B^3 \cup \{ y^3 \}, \qquad
I_*^4 := B^4.
\end{align*}
Depending on the placement of the boundary item $y^1$, we will either have $I_*^1 = B^1$ or $I_*^1 = B^1 \cup \{ y^1 \}$; and either $I_*^2 = \{ y^1 \} \cup B^2 \cup \{ y^2 \}$ or $I_*^2 = B^2 \cup \{y^2\}$. With these choices, each interior bundle ($j=2,3$) receives at least one of the boundary items adjacent to it.

The main part of showing that the partition $(I_*^1,I_*^2,I_*^3,I_*^4)$ can be made into an EF1 allocation is an analogue of \eqref{eq:ef2:approx-correct}, which shows that agents' expectations about their bundle are approximately correct. The following analogous proposition is proved by case analysis.
\begin{proposition}
	\label{prop:ef1:approx-correct}
	For each $i\in [n]$ and each $j\in [n]$, we have
	$u_i(I_*^j) \ge \hat u_i(\bfx_i, j) \ge u_i^-(I_*^j)$.
\end{proposition}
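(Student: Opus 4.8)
The plan is to establish the two inequalities $u_i(I_*^j) \ge \hat u_i(\bfx_i,j)$ and $\hat u_i(\bfx_i,j) \ge u_i^-(I_*^j)$ separately for each $j \in \{1,2,3,4\}$, using the explicit description of the Sperner sequence recalled in Section~\ref{sec:EF2}. For each $j \in [n-1]$ the coordinates $x_1^j,\dots,x_n^j$ take exactly two consecutive values, one integral (namely $y^j$) and one half-integral, so that $B^j = \{y^{j-1}+1,\dots,y^j-1\}$ (with the conventions $y^0=0$, $y^4=m+1$, and $\{a,\dots,b\}=\emptyset$ when $a>b$), and $\{y^{j-1}\}\cup B^j\cup\{y^j\}=\{y^{j-1},\dots,y^j\}$. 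Unwinding the rounding rule together with the standing assumption $y^2\in I_1^2\cup\cdots\cup I_4^2$, one gets the explicit forms
\[
 I_*^1 \in \{B^1,\ B^1\cup\{y^1\}\}, \qquad
 I_*^2 \in \{B^2\cup\{y^2\},\ \{y^1\}\cup B^2\cup\{y^2\}\}, \qquad
 I_*^3 = B^3\cup\{y^3\}, \qquad
 I_*^4 = B^4,
\]
where $I_*^1 = B^1\cup\{y^1\}$ precisely when $y^1$ occurs in the first bundle of the Sperner sequence (equivalently, $I_*^2=\{y^1\}\cup B^2\cup\{y^2\}$ when it does not). Two elementary observations about $u_i^-$ on intervals will be used throughout: $u_i^-(W)\le u_i(W)$ by monotonicity; and if $W'$ is obtained from an interval $W$ by deleting one endpoint of $W$, then $u_i^-(W)\ge u_i^-(W')$, because each of the two connected sets minimized over for $W'$ is contained in a corresponding one for $W$, so monotonicity applies term by term.

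\textbf{Exterior bundles $j=1,4$.} These are short. Inspecting \eqref{eq:virtual1} and the two possible values of $x_i^1$, one checks that $\hat u_i(\bfx_i,1)=u_i(\{1,\dots,t\})$ for some $t\in\{s-1,s\}$, where $s$ is the right endpoint of the interval $I_*^1=\{1,\dots,s\}$ (the empty case being trivial). Hence $\hat u_i(\bfx_i,1)\le u_i(\{1,\dots,s\})=u_i(I_*^1)$, and $\hat u_i(\bfx_i,1)\ge u_i(\{1,\dots,s-1\})=u_i(I_*^1\setminus\{s\})\ge u_i^-(I_*^1)$ since $I_*^1\setminus\{s\}$ is connected. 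The bundle $j=4$ is handled symmetrically via \eqref{eq:virtual4}, deleting the \emph{left} endpoint of $I_*^4=B^4$.

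\textbf{Interior bundles $j=2,3$.} Here I would split on whether both neighbouring coordinates $x_i^{j-1}$ and $x_i^j$ are integral. If both are integral, then $\hat u_i(\bfx_i,j)=u_i^-(\{y^{j-1},\dots,y^j\})$ by \eqref{eq:virtual23}, and $I_*^j$ is either $\{y^{j-1},\dots,y^j\}$ itself or this interval with one endpoint deleted; in the former case the claim reads $u_i(W)\ge u_i^-(W)\ge u_i^-(W)$ for $W=I_*^j$, and in the latter case the upper bound holds because $I_*^j$ is one of the two sets minimized over in $u_i^-(\{y^{j-1},\dots,y^j\})$, while the lower bound is exactly the endpoint-deletion inequality above. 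If not both coordinates are integral, then $\hat u_i(\bfx_i,j)=u_i(I^j(\bfx_i))$, and the key step is to check — running through the (at most four) remaining combinations of coordinate values, and crucially using the assumption on $y^2$, which forces $x_i^2\in\{y^2,\,y^2+\tfrac12\}$ — that $I^j(\bfx_i)$ equals either $I_*^j$ or $I_*^j$ with exactly one endpoint deleted. In either subcase $u_i(I^j(\bfx_i))\le u_i(I_*^j)$ by monotonicity and $u_i(I^j(\bfx_i))\ge u_i^-(I_*^j)$ since $I^j(\bfx_i)$ is connected.

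\textbf{Main obstacle.} The delicate point is this last verification: when $\hat u_i(\bfx_i,j)=u_i(I^j(\bfx_i))$, one must rule out that the observed interval $I^j(\bfx_i)$ is two items short of the rounded bundle $I_*^j$. A priori this could happen (e.g.\ $I^2(\bfx_i)=B^2$ while $I_*^2=\{y^1,\dots,y^2\}$), but having $I^2(\bfx_i)=B^2$ together with $I_*^2=\{y^1,\dots,y^2\}$ forces \emph{both} $x_i^1$ and $x_i^2$ to be integral, which is the other branch of the split, where the formula for $\hat u_i$ is the $u_i^-$ one rather than $u_i(I^j(\bfx_i))$. So the bad case is excluded precisely because the virtual valuation switches to $u_i^-$ exactly when both neighbouring items are covered, and the rounding rule always hands each interior bundle one of its neighbouring boundary items; checking this exclusion in each coordinate combination is the only genuine work. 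The asymmetric treatment of $j=2$ and $j=3$ in the rounding is harmless, since the $y^2$-assumption was introduced exactly so that the relevant coordinate constraints line up on both sides.
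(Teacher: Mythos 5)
Your proposal is correct and follows essentially the same route as the paper's proof: a case analysis over $j$ and over the integrality of the neighbouring knife coordinates, reducing everything to the observation that the virtual bundle and the rounded bundle $I_*^j$ differ by at most one outer item (with the wlog assumption on $y^2$ ruling out the two-items-short scenario exactly as you describe). Your factoring out of the two monotonicity facts about $u_i^-$ on intervals is a slightly cleaner packaging of the same case checks the paper performs explicitly.
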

\begin{proof}
	We consider each bundle $j = 1,2,3,4$ separately.
	\begin{itemize}[leftmargin=23pt]
		\item Suppose $j = 1$.
		\begin{itemize}
			\item Suppose $x_i^1 \in \mathbb Z$. Then $y^1 = x_i^1$ and $B^1 = \{ 1,\dots, x_i^1 - 1 \}$. Thus $u_i(I_*^1) \ge \hat u_i(\bfx_i, 1) = u_i(B^1) \ge u_i^-(I_*^1)$, since $I_*^1$ is either $B^1$ or $B^1 \cup \{y^1\}$.
			\item Suppose $x_i^1 \not\in \mathbb Z$. Then $\hat u_i(\bfx_i, 1) = u_i(\{ 1, \dots, x_i^1 - \frac32 \})$. Now, either
			\begin{itemize}
				\item $y^1 = x_i^1 - \frac12$ so that $y^1 \in I_i^1$, and so $I_*^1 = B^1 \cup \{ y^1 \} = \{ 1, \dots, x_i^1 - \frac12 \}$, or
				\item $y^1 = x_i^1 + \frac12$ so that $y^1 \not\in I_*^1$, and so $I_*^1 = B^1 = \{ 1, \dots, x_i^1 - \frac12 \}$.
			\end{itemize}
			In either case, $I_*^1 = \{ 1, \dots, x_i^1 - \frac32, x_i^1 - \frac12 \}$, so $u_i(I_*^1) \ge \hat u_i(\bfx_i, 1) \ge u_i^-(I_*^1)$.
		\end{itemize}
		\item Suppose $j = 2$, and suppose that $I_*^2 = B^2 \cup \{ y^2 \}$
		\begin{itemize}
			\item Suppose $x_i^1 \in \mathbb Z$ and $x_i^2 \in \mathbb Z$. So $x_i^1 = y^1$ and $x_i^2 = y^2$. Then $\hat u_i(\bfx_i, 2) = u_i^-(\{y^1, \dots, y^2\})$.  Thus $u_i(I_*^2) \ge \hat u_i(\bfx_i, 2) \ge u_i^-(I_*^2)$ since $I_*^2 = B^2 \cup \{y^2\}$.
			\item Otherwise $\hat u_i(\bfx_i, 2) = u_i(I_i^2(\bfx))$. Since $y^1 \not\in I_1^2(\bfx)$ (because $y^1 \in I^1_*$), we have that $I_i^2(\bfx)$ is either $B^2$ or $B^2 \cup \{y^2\}$.  So $u_i(I_*^2) \ge \hat u_i(\bfx_i, 2) = u_i(I_i^2(\bfx)) \ge u_i^-(I_*^2)$ since $I_*^2 = B^2 \cup \{y^2\}$.
		\end{itemize}
		\item Suppose $j = 2$, and suppose that $I_*^2 = \{ y^1 \} \cup B^2 \cup \{ y^2 \}$.
		\begin{itemize}
			\item Suppose $x_i^1 \in \mathbb Z$ and $x_i^2 \in \mathbb Z$. So $x_i^1 = y^1$ and $x_i^2 = y^2$. Then $\hat u_i(\bfx_i, 2) = u_i^-(\{y^1, \dots, y^2\})$.  Thus $u_i(I_*^2) \ge \hat u_i(\bfx_i, 2) = u_i^-(I_*^2)$ since $I_*^2 = \{y^1\} \cup B^2 \cup \{y^2\}$.
			\item Otherwise $\hat u_i(\bfx_i, 2) = u_i(I_i^2(\bfx))$. First note that $I_i^2(\bfx) \neq B^2$: this is because both $y^1$ and $y^2$ appear in the second bundle of the Sperner sequence (by the case and the wlog assumption), so that $x_i^1 \le y^1$ and $y^2 \le x_i^2$. Since at least one of $x_i^1$ or $x_i^2$ is not integral, at least one of $y^1$ or $y^2$ must be in $I_i^2(\bfx)$. Hence $I_i^2(\bfx)$ is either $\{ y^1 \} \cup B^2 \cup \{ y^2 \}$ or $\{ y^1 \} \cup B^2$ or $B^2 \cup \{ y^2 \}$. In each case, $u_i(I_*^2) \ge \hat u_i(\bfx_i, 2) = u_i(I_i^2(\bfx)) \ge u_i^-(I_*^2)$ since $I_*^2 = \{y^1\} \cup B^2 \cup \{y^2\}$.
		\end{itemize}
		\item Suppose $j = 3$.
		\begin{itemize}
			\item Suppose $x_i^2 \in \mathbb Z$ and $x_i^3 \in \mathbb Z$. So $x_i^2 = y^2$ and $x_i^3 = y^3$. Then $\hat u_i(\bfx_i, 3) = u_i^-(\{y^2, \dots, y^3\})$.  Thus $u_i(I_*^3) \ge \hat u_i(\bfx_i, 3) \ge u_i^-(I_*^3)$ since $I_*^3 = B^3 \cup \{y^3\}$.
			\item Otherwise, since $y^2$ does not appear in $I_1^3(\bfx)$ (by our wlog assumption), we have that $I_i^3(\bfx)$ is either $B^3$ or $B^3 \cup \{y^3\}$.  Now $u_i(I_*^3) \ge \hat u_i(\bfx_i, 3) = u_i(I_i^3(\bfx)) \ge u_i^-(I_*^3)$ since $I_*^3 = B^3 \cup \{y^3\}$.
		\end{itemize}
		\item Suppose $j = 4$.
		\begin{itemize}
			\item Suppose $x_i^3 \in \mathbb Z$. Then $y^3 = x_i^3$ and $B^4 = \{ x_i^3 + 1, \dots, m \}$. Thus $u_i(I_*^4) \ge \hat u_i(\bfx_i, 4) = u_i(B^4) \ge u_i^-(I_*^4)$, since $I_*^4 = B^4$.
			\item Suppose $x_i^3 \not\in \mathbb Z$. Then $\hat u_i(\bfx_i, 4) = u_i(\{ x_i^3 + \frac32, \dots, m \})$. Now, either
			\begin{itemize}
				\item $y^3 = x_i^3 + \frac12$ so $I_*^4 = B^4 = \{ x_i^3 + \frac32, \dots, m \}$, or
				\item $y^3 = x_i^3 - \frac12$ so $I_*^4 = B^4 = \{ x_i^3 + \frac12, \dots, m \}$.
			\end{itemize}
			In either case, $u_i(I_*^4) \ge \hat u_i(\bfx_i, 4) = u_i(\{ x_i^3 + \frac32, \dots, m \}) \ge u_i^-(I_*^4)$.\qedhere
		\end{itemize}
	\end{itemize}
\end{proof}

Now again, based on the partition, we can define an allocation $A_*$ by $A_*(i) = I_*^{\phi(i)}$ for each agent $i\in [n]$. Thus, each agent $i$ receives the bundle in the complete partition corresponding to $i$'s most-preferred index $\phi(i)$.
We prove that $A_*$ satisfies EF1:
For any pair $i,j \in [n]$ of agents, we have
\begin{alignat*}{2}
u_i(A_*(i))
= u_i(I_*^{\phi(i)})
&\ge \hat u_i (\bfx_i, \phi(i)) &&\text{by Proposition~\ref{prop:ef1:approx-correct}}
\\
&\ge \hat u_i (\bfx_i, \phi(j)) &&\text{since $\smash{\hat L_{i}}(\bfx_{i})=\phi(i)$}
\\
&\ge u_i^-(\smash{I_*^{\phi(j)}}) = u_i^-(A_*(j)). \qquad \quad &&\text{by Proposition~\ref{prop:ef1:approx-correct}}
\end{alignat*}
Hence, we have proved the main result of this section:
\begin{theorem}\label{thm:EF1-4agents}
	On a path, for four agents with monotone valuation functions, a connected EF1 allocation exists.
\end{theorem}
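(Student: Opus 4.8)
The plan is to reuse the Sperner's-lemma machinery developed for the EF2 theorem in Section~\ref{sec:EF2}, but with two modifications tailored to $n=4$: a refined family of labeling functions built from \emph{virtual valuations}, and a refined rounding of the Sperner sequence into a complete partition. The purpose of both changes is to guarantee that each agent's expectation about the contents of a bundle is off by at most \emph{one} item rather than two -- precisely the gap between EF2 and EF1. Throughout I assume $m \ge n = 4$, the case $m < 4$ being trivial.

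First I would build the simplex $S_m$ and the half-integral Kuhn triangulation $T_\half$ exactly as in Section~\ref{sec:EF2}. For each agent $i$ I would then define the virtual valuation $\hat u_i(\bfx,j)$ by equations~\eqref{eq:virtual1}--\eqref{eq:virtual4}: an interior bundle $j\in\{2,3\}$ flanked by two covered items is valued at $u_i^-$ of the bundle-plus-both-endpoints (anticipating that exactly one endpoint will survive into the rounded bundle); an exterior bundle $j\in\{1,4\}$ whose inner knife item is \emph{uncovered} is discounted by that inner item (anticipating it will \emph{not} survive); and in all remaining cases the agent simply uses $u_i$ of the observed bundle. I would set $\hat L_i(\bfx)$ to be an index maximizing $\hat u_i(\bfx,\cdot)$ (breaking ties toward a non-empty bundle). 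A short check, analogous to the properness argument in Section~\ref{sec:EF2}, shows each $\hat L_i$ is proper: a main vertex $\vv_j$ puts all items in bundle $j$, and a vertex on the face opposite $\vv_j$ has $x^{j-1}=x^j$, hence $\hat u_i(\bfx,j)=0$, so its color is not $j$. Generalized Sperner (Lemma~\ref{lem:sperner}) then gives a fully-labeled elementary simplex $S^*=\conv(\bfx_1,\dots,\bfx_4)$ and a permutation $\phi$ with $\hat L_i(\bfx_i)=\phi(i)$, inducing a Sperner sequence $(A_1,\dots,A_4)$ with basic bundles $B^j$ and boundary items $y^1,y^2,y^3$.

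Next, using the left-right symmetry of the definition of $\hat u_i$ (reverse the path and the vertex order $\bfx_4,\dots,\bfx_1$), I would assume without loss of generality that $y^2$ appears in the second bundle of the Sperner sequence, and then round to the complete partition $I_*^1 := I_1^1\cup\cdots\cup I_4^1$, $I_*^2 := I_1^2\cup\cdots\cup I_4^2$, $I_*^3 := B^3\cup\{y^3\}$, $I_*^4 := B^4$, whose crucial feature is that each interior bundle ($j=2,3$) is forced to contain at least one adjacent boundary item ($y^2$ for $j=2$, $y^3$ for $j=3$). The heart of the proof is Proposition~\ref{prop:ef1:approx-correct}: for all $i$ and $j$, $u_i(I_*^j)\ge \hat u_i(\bfx_i,j)\ge u_i^-(I_*^j)$. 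I would prove it by a case analysis over $j\in\{1,2,3,4\}$, with a sub-split at $j=2$ according to whether $y^1\in I_*^2$; in each case one reads off $B^j$ and the relevant boundary items from whether $x_i^{j-1},x_i^j$ are integral, pins down the (few) possibilities for the observed bundle $I_i^j(\bfx_i)$ from the wlog assumption, and checks the sandwich inequality directly. With the proposition established, setting $A_*(i):=I_*^{\phi(i)}$ and chaining
\[
u_i(A_*(i)) = u_i(I_*^{\phi(i)}) \ge \hat u_i(\bfx_i,\phi(i)) \ge \hat u_i(\bfx_i,\phi(j)) \ge u_i^-(I_*^{\phi(j)}) = u_i^-(A_*(j))
\]
-- using the proposition for the outer two inequalities and $\hat L_i(\bfx_i)=\phi(i)$ for the middle one -- yields EF1.

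I expect the main obstacle to be making Proposition~\ref{prop:ef1:approx-correct} go through, which is really the problem of \emph{co-designing} the virtual valuations and the rounding rule so that all of the following hold at once: the labeling stays proper; the rounding can consistently place one adjacent boundary item in each interior bundle (with $I_*^2$ possibly also absorbing $y^1$); and in every case the observed bundle $I_i^j(\bfx_i)$ is squeezed between $B^j$ (up to the exterior discounting) and the final bundle $I_*^j$ (up to a single item). This is also where the restriction to $n=4$ bites: with only the two interior bundles $2,3$ and the three boundary items, one wlog choice about $y^2$ suffices to route a boundary item into each interior bundle, whereas for $n\ge 5$ there are interior bundles flanked by two ``contested'' boundary items and no such clean routing, so a genuinely new idea would be required.
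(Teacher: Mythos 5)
Your proposal is correct and follows essentially the same route as the paper: the half-integral Kuhn triangulation, the virtual valuations of \eqref{eq:virtual1}--\eqref{eq:virtual4}, the wlog assumption on $y^2$, the asymmetric rounding with $I_*^3 = B^3 \cup \{y^3\}$ and $I_*^4 = B^4$, and the sandwich inequality of Proposition~\ref{prop:ef1:approx-correct} are exactly the paper's argument. Your closing remark about why the construction does not extend to $n \ge 5$ also matches the paper's own discussion.
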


For five or more agents, we were not able to construct labeling functions and a rounding scheme which ensure that agents' expectations are correct up to one item. In the four-agent case, each interior bundle is adjacent to an exterior bundle (which helps in the construction), but for five agents, there is a middle bundle whose neighboring bundles are also interior.

\section{EF1 existence for identical valuations}\label{sec:identical}
A special case of the fair division problem is the case of \emph{identical valuations}, where all agents have the same valuation for the goods: for all agents $i,j\in N$ and every bundle $I \in \calC(V)$, we have $u_i(I)=u_j(I)$. We then write $u(I)$ for the common valuation of bundle $I$. The case of identical valuations often allows for more positive results and an easier analysis. Indeed, we can prove that, for identical valuations and \textit{any} number of agents, an EF1 allocation connected on a path is guaranteed to exist and can be found in polynomial time.

Now, one might guess that in the restricted case of identical valuations, egalitarian allocations are EF1. However, the leximin-optimal connected allocation may fail EF1: Consider a path with five items and additive valuations 1--3--1--1--1 shared by three agents. The unique leximin allocation is (1,~3,~1--1--1), which induces envy even up to one good. The same allocation also uniquely maximizes Nash welfare, so the Nash optimum also does not guarantee EF1. In contrast, when requiring bundles to satisfy matroid constraints (rather than connectivity constraints), the Nash optimum is EF1 with identical valuations \citep{Biswas2018}.

Maximizing an egalitarian objective seemed promising because it ensures that no-one is too badly off, and therefore has not much reason to envy others. The problem is that some bundles might be too desirable. To fix this, we could try to reallocate items so that no bundle is too valuable. This is exactly the strategy of our algorithm: It starts with a leximin allocation, and then moves items from high-value bundles to lower-value bundles, until the result is EF1. In more detail, the algorithm identifies one agent $i$ who is worst-off in the leximin allocation, and then adjusts the allocation so that $i$ does not envy any other bundle up to one good. The algorithm does this by going through all bundles in the allocation, outside-in, and if $i$ envies a bundle $I^j$ even up to one good, it moves one item from $I^j$ inwards (in $i$'s direction), see Figure~\ref{fig:leximin-ef1}. As we will show, a key invariant preserved by the algorithm is that the value of $I^i$ never increases, and $i$ remains worst-off. Thus, since $i$ does not envy others up to one good, the allocation at the end is EF1.

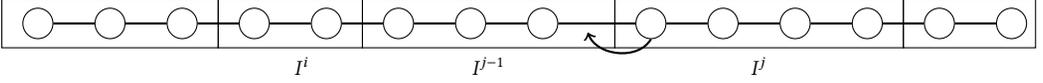
\begin{figure}[ht]
	\centering
	\begin{tikzpicture}[scale=0.8, transform shape, every node/.style={minimum size=5mm, inner sep=1pt}]	
	\node (i)   at (4.4,-0.7) {$I^i$};
	\node (j-1) at (7.5,-0.7) {$I^{j-1}$};
	\node (j)   at (12.0,-0.7) {$I^j$};
	
	\draw (-0.6,0.4) rectangle (3.0,-0.4);
	\draw (3.0,0.4) rectangle (5.4,-0.4);
	\draw (5.4,0.4) rectangle (9.6,-0.4);
	\draw (9.6,0.4) rectangle (14.4,-0.4);
	\draw (14.4,0.4) rectangle (16.6,-0.4);
	
	\node[draw, circle](1) at (0,0) {};
	\node[draw, circle](2) at (1.2,0) {};
	\node[draw, circle](3) at (2.4,0) {};
	\node[draw, circle](4) at (3.6,0) {};
	\node[draw, circle](5) at (4.8,0) {};
	\node[draw, circle](6) at (6,0) {};
	\node[draw, circle](7) at (7.2,0) {};
	\node[draw, circle](8) at (8.4,0) {};
	\begin{scope}[shift={(0.6,0)}]
	\node[draw, circle](9) at (9.6,0) {};
	\node[draw, circle](10) at (10.8,0) {};
	\node[draw, circle](11) at (12.0,0) {};
	\node[draw, circle](12) at (13.2,0) {};
	\node[draw, circle](13) at (14.4,0) {};
	\node[draw, circle](14) at (15.6,0) {};
	\end{scope}

	\draw [->, thick] (9.south) to [out=-120,in=-70] (9.15,-0.15);
	
	\draw[-, >=latex,thick]  (1)--(2) (2)--(3) (3)--(4) (4)--(5) (5)--(6) (6)--(7) (7)--(8) (8)--(9) (9)--(10) (10)--(11) (11)--(12) (12)--(13) (13)--(14);
	\end{tikzpicture}
	\vspace{-8pt}
	\caption{If $i$ envies $j$ even up to one good, Algorithm~\ref{alg:leximin-ef1} takes an item out of bundle $I^j$ and moves it in $i$'s direction.
		\label{fig:leximin-ef1}
	}
\end{figure}

Formally, a \emph{leximin allocation} is an allocation which maximizes the lowest utility of an agent; subject to that it maximizes the second-lowest utility, and so on. In particular, if the highest achievable minimum utility is $u_L$, then the leximin allocation is such that every agent has utility at least $u_L$, and the number of agents with utility exactly $u_L$ is minimum.

\begin{algorithm}[ht]
	\caption{Find a connected EF1 allocation of a path $P$ with identical and monotonic valuations}
	\label{alg:leximin-ef1}
	\begin{algorithmic}[1]
		\REQUIRE a path $P=(v_1,v_2,\ldots, v_m)$, and $n$ agents with identical and monotonic valuations $u$
		\ENSURE an EF1 connected allocation of $P$
		\STATE Let $A = (I^1,\dots,I^n)$ be a leximin allocation of $P$
		\STATE Fix an agent $i$ with minimum utility in $A$, i.e., $u(I^i) \le u(I^j)$ for all $j\in [n]$
		\FOR{$j = 1,\dots,i-1$}
		\IF{$i$ envies $I^j$ even up to one good, i.e., $u(I^i) < u^-(I^j)$}
		\STATE repeatedly delete the rightmost item of $I^j$ and add it to $I^{j+1}$ until $u(I^i) \ge u^-(I^j)$
		\ENDIF
		\ENDFOR
		\FOR{$j = n,\dots,i+1$}
		\IF{$i$ envies $I^j$ even up to one good, i.e., $u(I^i) < u^-(I^j)$}
		\STATE repeatedly delete the leftmost item of $I^j$ and add it to $I^{j-1}$ until $u(I^i) \ge u^-(I^j)$
		\ENDIF
		\ENDFOR
		\RETURN $A$
	\end{algorithmic}
\end{algorithm}

\begin{theorem}
	\label{thm:ef1-identical}
	For identical valuations on a path, Algorithm~\ref{alg:leximin-ef1} finds an EF1 allocation.
\end{theorem}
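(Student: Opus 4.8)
The plan is to prove correctness of Algorithm~\ref{alg:leximin-ef1} by tracking two invariants through the execution of the two \texttt{for} loops: (I) the bundle $I^i$ of the designated agent never changes, and (II) after the loop body for index $j$ has finished processing, agent $i$ no longer envies $I^j$ up to one good and, crucially, this remains true for all bundles already processed. Since valuations are identical, once we show that $u(I^i) \ge u^-(I^j)$ holds simultaneously for every $j\neq i$ at termination (and trivially $u(I^i)\ge u^-(I^i)$), the resulting allocation is EF1: indeed $u(A(k)) = u(I^k)$ and for any pair of agents occupying bundles $I^a$ and $I^b$, we need $u(I^a)\ge u^-(I^b)$, which will follow because $i$ is and remains a minimum-utility agent, so $u(I^a)\ge u(I^i)\ge u^-(I^b)$ for every $a$ and every $b$.

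First I would verify the loop is well-defined: when the first loop processes $j<i$ and $i$ envies $I^j$ even up to one good ($u(I^i) < u^-(I^j)$), we repeatedly move the right-most item of $I^j$ into $I^{j+1}$; since $u^-(\emptyset)=0\le u(I^i)$ and $u^-$ of a bundle is non-increasing as we strip its right-most item (removing an outer item keeps it connected and can only lower $u$, hence lower $u^-$), the process terminates with $u(I^i)\ge u^-(I^j)$ before $I^j$ becomes empty — actually I should check it stops while $I^j$ is still non-empty, or argue that emptying $I^j$ is fine too since $u^-(\emptyset)=0$. All bundles stay connected (sub-paths) throughout, so the output is a connected allocation. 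The symmetric argument handles the second loop. Note that $I^i$ is untouched: the first loop only adds items moving rightward among bundles $I^1,\dots,I^{i}$, but it moves the right-most item of $I^j$ into $I^{j+1}$ for $j\le i-1$, so the last such move could add an item to $I^i$ — wait, this is the subtle point. I need to recheck: when $j=i-1$, moving an item from $I^{i-1}$ into $I^{i}$ would change $I^i$ and could increase $u(I^i)$.

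So the main obstacle, and the heart of the argument, is showing that the loop body for $j=i-1$ (and symmetrically $j=i+1$) never actually fires — i.e.\ that by the time we reach $j=i-1$, agent $i$ already does not envy $I^{i-1}$ up to one good. Here is where leximin-optimality of the starting allocation must be invoked, together with the effect of the earlier iterations. The claim I would aim to establish is: at the moment the first loop is about to process index $j$ (for any $j\le i-1$), we have $u(I^i)\ge u(I^j)$ — that is, $i$ remains a minimum-utility agent among $\{I^1,\dots,I^j\}\cup\{I^i\}$ — and moreover a sharper statement bounding $u^-(I^{i-1})$. The intuition: the leximin allocation cannot have $u^-(I^{i-1}) > u(I^i)$, because otherwise we could move $I^{i-1}$'s right-most outer item $v$ into $I^i$; this keeps $I^{i-1}$ connected with value still $\ge u^-(I^{i-1})$ before the move... hmm, this needs care, since after the move $u(I^{i-1}\setminus v)$ might dip below $u(I^i)$. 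The right formulation is a local exchange argument: if $u^-(I^{i-1}) > u(I^i)$ in a leximin allocation, transferring the appropriate outer item from $I^{i-1}$ to $I^i$ yields both bundles with value $\ge$ the quantity $\min\{u^-(I^{i-1}), u(I^i\cup\{v\})\}$, and one shows this is a leximin improvement (the multiset of utilities improves in the leximin order), contradicting optimality. I would also need to argue this property is preserved by the earlier iterations of the loop (which only decrease values of $I^1,\dots,I^{i-1}$ and increase values of their right neighbors up to and including $I^{i-1}$, but never $I^i$), so that when we reach $j=i-1$ the test $u(I^i)<u^-(I^{i-1})$ fails. Once this ``$I^i$ is never modified and stays minimum'' invariant is nailed down, the rest is the bookkeeping of invariant (II): processing index $j$ establishes $u(I^i)\ge u^-(I^j)$, and later iterations only move items into $I^j$ from $I^{j-1}$ (in the first loop) — but they only fire when $u(I^i)<u^-(I^{j-1})$, and one checks the transferred item's addition to $I^j$ does not push $u^-(I^j)$ above $u(I^i)$, because... this monotonicity claim is the last thing to verify and I expect it to require that we stop moving items as soon as the threshold is reached, so the bundle $I^{j-1}$ being drained never over-fills $I^j$. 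Handling this interaction cleanly — earlier-processed bundles receiving items from later-processed neighbors — is the second delicate point, and I would state it as an explicit invariant maintained inductively over the loop.
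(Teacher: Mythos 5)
Your plan hinges on the claim that the iterations $j=i-1$ and $j=i+1$ never fire, so that $I^i$ is never modified. That claim is false, and the exchange argument you sketch to establish it breaks down in exactly the case that matters. Consider the path with identical additive values 2--2--0--3--0 and three agents. The allocation $I^1=\{v_1\}$, $I^2=\{v_2\}$, $I^3=\{v_3,v_4,v_5\}$ has utility vector $(2,2,3)$ and is leximin-optimal (no connected allocation has minimum utility above $2$, and this one has only two agents at the minimum). Taking $i=2$, the second loop reaches $j=3=i+1$ with $u^-(I^3)=\min\{u(\{v_4,v_5\}),\,u(\{v_3,v_4\})\}=3>2=u(I^2)$, so the loop body fires and moves $v_3$ into $I^i$. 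Your proposed leximin contradiction fails here because transferring $v_3$ yields $(\{v_1\},\{v_2,v_3\},\{v_4,v_5\})$ with the \emph{same} utility multiset $(2,2,3)$: when the boundary item has zero marginal value for $I^i$, the exchange is not a leximin improvement, and that is precisely the situation in which the dangerous iteration can fire.

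The paper's proof therefore takes a different route: it does not prevent items from entering $I^i$, but shows that $u(I^i)$ cannot increase. The invariant is that throughout the algorithm the minimum utility $u_L^*$ of the initial leximin allocation \emph{and the set of agents attaining it} are both preserved. This uses both components of leximin optimality (first maximize the minimum, then minimize the number of agents at the minimum): after each transfer the donating bundle still has value at least $u^-(I^j)>u_L^*$ by the until-clause, no other agent becomes worse off, so the loser set can only shrink --- and leximin optimality forbids it from shrinking, which forces $u(I^i)=u_L^*$ throughout even as $I^i$ gains items. Your remaining worry about ``earlier-processed bundles receiving items from later-processed neighbours'' is a non-issue once the loop direction is observed: each loop processes bundles moving toward $i$ and pushes items only into the not-yet-processed neighbour, so a bundle is never touched after its own iteration completes. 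To repair your argument, replace the ``never fires'' claim and the invariant ``$I^i$ never changes'' with the preserved-loser-set invariant above; the rest of your bookkeeping then goes through.
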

\begin{proof}
	For an allocation $A = (I^1,\dots,I^n)$, write $u_L(A) := \min_{j\in N} u(I^j)$ for the minimum utility obtained in $A$, and write $L(A) := \{ j \in [n] : u(I^j) = u_L(A) \}$ for the set of agents (\textit{losers}) who obtain this utility. For the leximin allocation $A_{\text{leximin}}$ obtained at the start of the algorithm, write $u_L^* := u_L(A_{\text{leximin}})$ and $L^* := L(A_{\text{leximin}})$. Note that by leximin-optimality, for every allocation $A$ we must have $u_L(A) \le u_L^*$, and if $u_L(A) = u_L^*$ then $|L(A)| \ge |L^*|$. Let $i\in L^*$ be the agent fixed at the start of the algorithm.
	
	\textit{Claim 1.} Throughout the algorithm, $u_L(A) = u_L^*$ and $L(A) = L^*$.
	
	The claim is true before we start the for-loops. Suppose the claim holds up until some iteration of the first for-loop, and we now move an item from $I^j$ to $I^{j+1}$, obtaining the new bundles $I_\text{new}^j$ and $I_\text{new}^{j+1}$ in the new allocation $A_\text{new}$. Then $u(I_\text{new}^j) \ge u^-(I^j) > u(I^i) = u_L^*$, where the strict inequality holds by the if- and until-clauses. Since no agent other than $j$ has become worse-off in $A_\text{new}$, it follows that $u_L(A_\text{new}) \ge u_L(A) = u_L^*$. As noted, by optimality of $u_L^*$, we have $u_L(A_\text{new}) \le u_L^*$. Hence $u_L(A_\text{new}) = u_L^*$. Thus, by optimality of $L^*$, we have $|L(A_\text{new})| \ge |L^*|$. Because agent $j$ has not become a loser (since $\smash{u(I_\text{new}^j)} > u_L^*$ as shown before) and no other agent has become a loser, we have $L(A_\text{new}) \subseteq L(A) = L^*$. Thus $L(A_\text{new}) = L^*$, as required. The second for-loop is handled similarly.
	
	\textit{Claim 2.} After both for-loops terminate, agent $i$ does not envy any agent up to one good.
	
	For any $j\neq i$, agent $i$ does not envy $j$ up to one good immediately after the relevant loop has handled $j$, and at no later stage of the algorithm does $I^j$ change.
	
	It follows that the allocation $A$ returned by the algorithm is EF1: By Claim 1, we have $i\in L(A)$, so that $u(I^j) \ge u(I^i)$ for all $j\in [n]$. By Claim $2$, agent $i$ does not envy any other agent up to one good, so that $u(I^i) \ge u^-(I^k)$ for all $k\in [n]$. Hence, for all $j,k\in [n]$, we have $u(I^j) \ge u^-(I^k)$, that is, no agent envies another agent up to one good.
\end{proof}

Algorithm~\ref{alg:leximin-ef1} can be implemented to run in polynomial time, because with identical valuations, one can use dynamic programming to find a leximin allocation in time $O(m^2n^2)$, and the remainder of Algorithm~\ref{alg:leximin-ef1} takes time $O(mn)$, as each item is moved at most $n$ times. A slight speed-up can be achieved by observing that the proof of Theorem~\ref{thm:ef1-identical} only needed that the initial allocation optimizes the egalitarian welfare $u_L$ and minimizes the cardinality of the set $L$ of losers. Such an allocation can be found by dynamic programming in time $O(m^2n)$, and, after some refinements on the implementation of the dynamic programming approach, the running time can be lowered to $O(mn)$ (see Algorithm \ref{alg:stronglyMMS} in the appendix). 

The reallocation stage of our algorithm bears some similarity to \citeauthor{Suksompong2017}'s \citeyearpar[Thm.~2]{Suksompong2017} proof that a $u_\text{max}$-equitable allocation exists.
\citet[Lem.~C.2]{Oh2018} proved independently, using an inductive argument, that EF1 allocations on a path exist for identical valuations, and can be found in polynomial time.
More recently, \citet{misra2021equitable} presented another algorithm for this task in the context of aiming for connected allocations satisfying \emph{equitability up to one good} (EQ1).

\section{Concluding remarks}
We have studied the existence of EF1 allocations under connectivity constraints imposed by an undirected graph. We have shown that for two, three, or four agents, an EF1 allocation exists if the graph is traceable and if the agents have monotone valuations. For any number of agents, we also proved that traceable graphs guarantee the existence of an EF2 allocation. The latter two results are proved using Sperner's lemma, which has been used many times in economics and game theory to show the existence of equilibria and fair allocations \citep{Scarf1982,Su1999}. Unusually, in our application we were able to use Sperner's lemma in a setting with indivisibilities. 
We leave as an open question whether EF1 allocations on a path exist for five and more agents.

Our procedures for identical valuations as well as for the cases of two or three agents can be efficiently implemented, so that we can find EF1 allocations in polynomial time.
For our results based on Sperner's lemma, it is not clear how to compute EF1 and EF2 allocations efficiently.
On the other hand, as is the case with the computation of other structures whose existence follows from Sperner's lemma (such as Nash equilibrium or more broadly PPAD problems), our proof based on Sperner's lemma allows for a “path-following” algorithm through the subdivided simplex. This type of algorithm has been observed to be practically efficient in other contexts \citep{scarf1967approximation}, and may also be efficient in our allocation setting on practical instances. Formally, we do not know of a computational complexity result for the problem of finding EF1 or EF2 allocations on a path. For divisible cake-cutting, it is PPAD-complete to find an $\varepsilon$-approximate envy-free allocation \citep{Deng2012}, implying that it is unlikely that there is an algorithm that runs in time polynomial in $n$ and $\log \frac{1}{\varepsilon}$.
However, the PPAD-hardness proof of \citet{Deng2012} uses non-monotone valuations, and thus does not easily extend to our setting, where we assume monotone valuations.
For general graphs, \citet[Thm.~2]{parameterizedFair} recently showed that deciding whether an EF1 allocation exists on a given instance is NP-hard. Their result applies when the underlying graph is a star, and holds even if the $n$ agents have binary additive valuations.

Regarding strategic aspects, existing results from the literature imply that there are no EF1 allocation rules which are strategyproof.%
\footnote{In this paragraph, we follow the exposition of \citet[Chapter~12]{peters2019fair}.}
\citet{amanatidis2017truthful} characterized all strategyproof allocation mechanisms when there are $n = 2$ agents with additive valuations over indivisible items (with no connectivity constraints). They then proved that no mechanism in their class guarantees EF1 \citep[Sec.~4.2]{amanatidis2017truthful} for $m \ge 5$ items. It follows that there is also no strategyproof EF1 mechanism that respects connectivity constraints. We can also obtain such a result by reduction from divisible cake-cutting. Fix some $\epsilon > 0$, and suppose we had a mechanism for allocating a path of $M$ items among $n$ agents while being strategyproof and EF1. Then we can use this mechanism as a mechanism for cake-cutting: Given continuous agent valuations over the interval $[0,1]$, approximate these by additive valuations over the path of $M$ items and run the mechanism on this instance. For sufficiently large $M$, the resulting mechanism for cake-cutting will be $\epsilon$-strategyproof (in the sense that a misreport can increase utility by at most $\epsilon$) and $\epsilon$-envy-free (in the sense that envy is bounded by $\epsilon$). However, the literature on cake-cutting contains impossibilities about strategyproofness and envy-freeness when requiring connected pieces (\citealp{bei2017cake}, Theorem 1, \citealp{bei2018truthful}, Theorem 3), and the proofs also establish impossibility for the $\epsilon$-versions of these properties for small enough $\epsilon$. Hence, for large enough $M$, no strategyproof EF1 mechanism for the indivisible setting can exist. By using the result of \citet{bei2018truthful}, we can obtain an impossibility for $n = 2$ and even for binary additive valuations where each agent approves an interval of items beginning with the left-most item. Finally, \citet[Chapter~12]{peters2019fair} gives a simple direct proof that there are no strategyproof EF1 mechanisms, even for $n = 2$ agents and $m = 5$ items on a line.

We gave a forbidden minor type characterization of all graphs that guarantee the existence of EF1 allocations for two agents.
It is natural to also consider the case of more than two agents. However, there are several difficulties in extending the characterization result beyond two agents. 
First, one cannot generalize the notion of a bipolar ordering in a meaningful way; indeed, a \emph{tripolar ordering}, requiring each initial, middle, and last segment to be connected in a given graph, reduces to the notion of a Hamiltonian path because every consecutive pair of such an ordering must be connected. Second, our two-agent characterization heavily depends on having a simple envy-free protocol: the cut-and-choose procedure. Unfortunately, for three agents, the known protocol becomes much more complex (see Section~\ref{sec:three}), and for four agents, there is no known explicit protocol that constructs an envy-free division. Nevertheless, \citet{ZwickerIgarashi2021} recently proposed a forbidden minor type conjecture for the continuous variant of our problem. It would be interesting to explore the discrete version of their conjecture.

In the setting without connectivity constraints, it is possible to achieve efficiency and fairness simultaneously: the maximum Nash welfare solution yields an allocation that is both EF1 and Pareto-optimal \citep{CKM+16a}. In our model, this is unfortunately impossible, since on a path there are instances where there is no connected allocation which is EF1 and Pareto-optimal, and it is NP-hard to decide whether such an allocation exists \citep{Igarashi2018}.

In this paper, we have only considered \emph{goods}, with monotonic valuations. The setting where some or all items are undesirable (so-called \emph{chores}) is also of interest \citep{AzizCI18,BMSY16a,MeSh18a,Halevi2018,bouveret2019chore,hohne2021allocating}. On a path, a connected allocation satisfying proportionality up to one good (PROP1) always exists \citep{AzizCI18}, but the existence of EF1 or EF2 allocations in this domain is open. For cake-cutting, when agents consider some parts of the cake undesirable, Sperner's lemma does not directly produce a connected envy-free allocation \citep{Halevi2018}, but other methods can prove the existence of such allocations in most cases \citep{Halevi2018,MeSh18a}.

\bibliographystyle{plainnat}

\clearpage
\appendix
\section{Appendix}

\subsection{Maximin share allocations}
\label{sec:mms}
The \emph{maximin share guarantee} of an agent $i\in N$
is
\[
\MMS_i : = \max_{(P^1,P^2,\dots, P^n)\in\Pi_n}\min_{j\in [n]} u_i(P^j),
\]
where $\Pi_n$ denotes the space of all partitions of $V$ into $n$ connected bundles. An allocation $A$ is a \emph{maximin share (MMS) allocation} if $u_{i}(A(i)) \ge \MMS_{i}$ for each agent $i \in N$. (Note that the maximum is taken only over connected partitions, so the MMS value could be lower than the standard definition from the model without connectivity constraints.) \citet{Bouveret2017} showed that an MMS allocation exists if the underlying graph $G$ is a tree. On the other hand, MMS allocations need not exist on a cycle \citep{Bouveret2017,Lonc2018} or on a complete graph \citep{Kurokawa2018}. The computational complexity of determining the existence of MMS allocations under several graph constraints has also been investigated \citep{Greco2020}.

Since we have seen that EF1 or EF2 allocations are guaranteed to exist on a path, it is natural to ask whether we can additionally require MMS: on a path, does there always exist an allocation that satisfies EF1 and MMS?

First, let us note that not every EF1 allocation is also MMS. For 3--1--1--1--3 and three agents, the MMS value is 3 via the partition (3,~1--1--1,~3), but the EF1 allocation (3--1,~1,~1--3) gives the middle agent a utility of only 1. In fact, one can show that this example is worst possible, for subadditive valuations. Valuations $u_i$ are \emph{subadditive} if, for any bundles $I,I'$, we have $u_i(I\cup I') \le u_i(I) + u_i(I')$.
An allocation satisfies $\alpha$-MMS for some $\alpha > 0$ if $u_{i}(A(i)) \ge \alpha \cdot \MMS_{i}$ for each agent $i \in N$. As MMS allocations need not exist in general, $\alpha$-MMS allocations have been widely investigated \citep{Amanatidis2017, Amanatidis2018, Kurokawa2018}. 

\begin{proposition}
	\label{prop:ef1-1-3-mms}
	For subadditive valuations, an EF1-allocation on a path guarantees 1/3-MMS.
\end{proposition}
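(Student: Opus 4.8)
The plan is to fix an agent $i$, put $v := u_i(A(i))$, and establish $\MMS_i \le 3v$; the instance 3--1--1--1--3 discussed above shows the factor $3$ is best possible. Write $A=(I^1,\dots,I^n)$ with the bundles ordered from left to right along the path. The only consequence of EF1 that I will use is that $u_i(A(i)) \ge u^-_i(I^j)$ for $j\neq i$, while $u^-_i(I^i)\le u_i(I^i)=v$ trivially, so $u^-_i(I^j)\le v$ for every $j\in[n]$: deleting the left or the right endpoint of \emph{any} bundle of $A$ leaves a set of value at most $v$ to agent $i$. Everything else comes from this inequality together with subadditivity.

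The technical core will be the following lemma, proved by induction on $q$: \emph{if an interval $W$ of the path is partitioned into intervals $J^1,\dots,J^p$ with $u^-_i(J^t)\le v$ for all $t$, and also into intervals $K^1,\dots,K^q$ (listed left to right) with $u_i(K^r)>3v$ for $r<q$ and $u_i(K^q)>v$, then $q\le p$.} The base case $q=1$ is immediate. For the step I compare the last pieces $J^p$ and $K^q$ (both suffixes of $W$, so one contains the other) and peel off a suffix of $W$ deleting exactly one $J$-piece and one $K$-piece; using $u^-_i(J^p)\le v$ to identify the ``cheap'' endpoint of $J^p$, together with subadditivity, I can check that the truncated last $J$-piece still satisfies $u^-_i\le v$ and the truncated last $K$-piece is still worth more than $v$, so the induction hypothesis applies with $q-1$ and at most $p-1$ pieces. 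A recurring point is that a heavy ($>3v$) $K$-piece can never lie in the interior of a $J$-piece, since then deleting either endpoint of that $J$-piece would leave a set containing the $K$-piece, forcing $u^-_i>3v$.

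Granting the lemma, I would argue by contradiction. Suppose $\MMS_i>3v$ and fix a connected partition $(Q^1,\dots,Q^n)$ of the path into intervals with $u_i(Q^l)>3v$ for all $l$. By the interior remark above, no $Q^l$ lies in the interior of any $I^j$; in particular $B:=A(i)=I^k$ contains no whole $Q$-block (such a block would be worth at most $u_i(B)=v$), so $B$ meets at most two consecutive blocks $Q^s$, $Q^{s+1}$. Let $L:=I^1\cup\dots\cup I^{k-1}$ and $R:=I^{k+1}\cup\dots\cup I^n$. Since $u_i(Q^s)>3v$ but the part of $Q^s$ inside $B$ is worth at most $v$, subadditivity forces the overhang $Q^s_L$ of $Q^s$ into $L$ --- and, in the straddling case, also the overhang of $Q^{s+1}$ into $R$ --- to be worth more than $v$; mirroring the path if needed, assume $u_i(Q^s_L)>v$. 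Applying the lemma to $W=L$, taking the $I$-bundles of $L$ as $J$-pieces and $Q^1,\dots,Q^{s-1},Q^s_L$ as $K$-pieces, gives $s\le k-1$. Applying the lemma to $W=R$ in the mirrored orientation --- taking $Q^n,\dots,Q^{s+1}$ together with the overhang in $R$ as $K$-pieces, after merging that overhang into the neighbouring block should it be worth at most $v$ --- gives $s\ge k$; and in the degenerate cases ($s=n$, or $k\in\{1,n\}$) the contradiction already appears on the $L$-side, since then $L$ or $R$ is empty yet is required to contain a nonempty $K$-piece. Thus $s\le k-1$ and $s\ge k$, a contradiction.

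The main obstacle is the inductive proof of the lemma: several configurations arise for how the cut between $K^{q-1}$ and $K^q$ sits relative to the cut between $J^{p-1}$ and $J^p$ (the last $A$-type piece may contain, be contained in, or equal the last $K$-piece), and in each one the cheap endpoint of $J^p$ must be identified correctly for both invariants to survive the truncation. The surrounding bookkeeping in the contradiction step (contained versus straddling, heavy versus light overhang, and the boundary values $s\in\{1,n\}$) is routine but needs to be laid out carefully.
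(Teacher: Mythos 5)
Your argument is correct, but it takes a genuinely different and considerably heavier route than the paper's. The paper's proof is a single pigeonhole step: for each $j\neq i$ fix an item $g_j\in A(j)$ witnessing EF1 for agent $i$; since an MMS partition has $n$ bundles but there are only $n-1$ items $g_j$, some MMS bundle $P^k$ contains none of them; $P^k$ can meet at most two bundles $A(j)$ with $j\neq i$ (a third would trap the middle one, and hence its $g_j$, inside $P^k$), so $P^k\subseteq \bigl(A(j_1)\setminus\{g_{j_1}\}\bigr)\cup A(i)\cup\bigl(A(j_2)\setminus\{g_{j_2}\}\bigr)$ and subadditivity finishes in one line. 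You instead assume every bundle of an MMS partition is worth more than $3v$ and derive $s\le k-1$ and $s\ge k$ from a two-sided interval-counting lemma proved by induction. I checked that your lemma is true and that the induction closes: comparing $J^p$ with $K^q$, either $J^p\subseteq K^q$ (delete $J^p$ and merge the leftover of $K^q$ into $K^{q-1}$), or $K^q\subsetneq J^p$, in which case $J^p=K^{q-1}\cup K^q$ is impossible (both endpoint deletions of $J^p$ would then retain a piece worth more than $v$) and otherwise $K^{q-1}$ overhangs $J^p$ to the left; there the inequality $u_i(J^p\setminus\{\text{left end}\})\ge u_i(K^q)>v$ forces the right end of $J^p$ to be the cheap one, so the truncated $K^{q-1}$ keeps value at least $3v-v>v$ and the induction hypothesis applies with $p-1$ and $q-1$ pieces. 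Your ``interior'' observation is exactly the paper's ``middle interval'' observation; the extra length in your version comes from invoking subadditivity repeatedly inside the induction rather than once at the end, and from the left/right bookkeeping around $A(i)$ (straddling versus contained, light overhangs, and the boundary cases $k\in\{1,n\}$, all of which do work out as you indicate). Both arguments yield the same constant and the same tight example, so the approach buys nothing extra; its cost is that the inductive step you leave as a sketch is precisely where all the delicate case analysis lives, and a complete write-up would be several times longer than the paper's ten-line proof.
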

\begin{proof}
	Let $A$ be an EF1 allocation, write $I^j = A(j)$ for all $j\in [n]$, and fix some agent $i$. For each $j \in [n] \setminus \{i\}$, let $g_j\in I^j$ be an item such that $u_i(I^i) \ge u_i(I^j \setminus \{g_j\})$. We show that $u_i(I^i) \ge \frac13 \MMS_i$.
	
	Let $P = (P^1,\dots,P^n)$ be a partition of the items into $n$ bundles such that $u_i(P^j) \ge \MMS_i$ for each $j\in [n]$. Since there are $n$ bundles in $P$ but only $n-1$ items $g_j$, there must be some bundle $P^k$ such that $g_j\not\in P^k$ for all $j \in [n] \setminus \{i\}$; we show that $u_i(P^k) \le 3\cdot u_i(I^i)$.
	
	Suppose for a contradiction that there are three distinct agents $j_1,j_2,j_3 \in [n] \setminus \{i\}$ such that $P^k\cap I^{j_r} \neq \emptyset$ for $r=1,2,3$. Since $P^k$ and the $I^{j_r}$'s are all intervals of a path, the middle interval must be completely contained in $P^k$, that is, $I^{j_r} \subseteq P^k$ for some $r$. Hence $g_{j_r} \in P^k$, contradicting the choice of $P^k$. So $P^k$ intersects at most two bundles from $A$ other than $I^i$. Thus, for some $j_1,j_2 \in [n] \setminus \{i\}$, we have $P^k \subseteq I^{j_1} \cup I^i \cup I^{j_2} \setminus \{g_{j_1}, g_{j_2} \}$, and thus by subadditivity,
	\[ u_i(P^k) \le u_i(I^{j_1} \setminus \{g_{j_1}\}) + u_i(I^i) + u_i(I^{j_2} \setminus \{g_{j_2} \}) \le 3\cdot u_i(I^i).  \]
	
	Hence, we have $u_i(I^i) \ge \frac13 u_i(P^k) \ge \frac13 \MMS_i$, as required.
\end{proof}

Interestingly, using a similar proof, one can show that the two agents receiving the outer bundles of the path both get at least half of their MMS value. This is also tight; consider 1--1--2--2--1--1 for four agents, and the EF1 allocation (1,1--2,2--1,1).

If we do not restrict valuations to be subadditive, then EF1 does not guarantee $\alpha$-MMS for any $\alpha > 0$: Consider a path $P=(v_1,v_2,v_3)$ of three items, and two agents with identical valuations $u$ defined so that $u(I) = 1$ if $I \supseteq \{v_1,v_2\}$ or $I \supseteq \{v_3\}$, and $u(I) = 0$ otherwise. Then the MMS value is 1 via the partition ($v_1$--$v_2$, $v_3$), but the allocation ($v_1$, $v_2$--$v_3$) is EF1 and gives the left agent utility 0.

For graphs that are not paths, Proposition~\ref{prop:ef1-1-3-mms} does not hold. For a complete graph (i.e., in the absence of connectivity constraints), EF1 only implies $1/n$-MMS \citep{CKM+16a,Amanatidis2018}.

While we have seen that EF1 on a path does not immediately imply MMS, it does imply MMS in many cases.
The following lemma will be useful to show that the allocations produced by our arguments in the main text all satisfy the MMS guarantee.

\begin{lemma}
	\label{lem:mms-n-1}
	Suppose there are $n\ge 2$ agents, and the items are arranged on a path. Take any $n-1$ items $y^1 < \dots < y^{n-1}$, and define the bundles $B^1, \dots, B^n$ as follows:\small
	\[ B^1 = L(y^1), \:
	B^2 = P(y^1 + 1, y^2 - 1), \dots,
	B^{n-1} = P(y^{n-2}+1, y^{n-1}-1), \:
	B^n = R(y^{n-1}). \]\normalsize
	Then for any agent $i$, there is some $r \in [n]$ such that $u_i(B^r) \ge \MMS_i$.
\end{lemma}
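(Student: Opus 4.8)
The plan is to combine a pigeonhole count with the interval structure of the path. Fix an agent $i$. By definition, $\MMS_i$ is attained by some connected partition $(Q^1,\dots,Q^n)$ of the vertex set, so that $u_i(Q^s) \ge \MMS_i$ for every $s \in [n]$; each $Q^s$ is a (possibly empty) subpath. The goal is to exhibit an index $r$ such that $B^r$ contains one of the pieces $Q^s$ entirely, because then monotonicity yields $u_i(B^r) \ge u_i(Q^s) \ge \MMS_i$, which is exactly the claim.

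First I would record a purely combinatorial observation about the $B$-bundles: setting $y^0$ and $y^n$ to be virtual endpoints just before $v_1$ and just after $v_m$, we have $B^t = P(y^{t-1}+1,\, y^t-1)$ for all $t\in[n]$, i.e. the bundles $B^1,\dots,B^n$ are exactly the connected components of the path obtained by deleting the $n-1$ items $y^1,\dots,y^{n-1}$. Consequently, any connected subset $S$ of $V$ that avoids all of $y^1,\dots,y^{n-1}$ lies inside a single $B^t$: writing $S$ as an interval $P(v_a,v_b)$, one takes $t=k+1$, where $k$ is the largest index with $y^k$ occurring strictly before $v_a$, and checks that $y^{k}$ lies before $S$ while $y^{k+1}$ lies after $S$ (using that $S$ contains no $y^j$). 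This is routine bookkeeping about orderings along the path.

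Next comes the pigeonhole step. Each of the $n-1$ items $y^1,\dots,y^{n-1}$ lies in exactly one piece of the partition $(Q^1,\dots,Q^n)$, so those items occupy at most $n-1$ of the $n$ pieces; hence there is a piece $Q^s$ that contains none of $y^1,\dots,y^{n-1}$. By the observation above, $Q^s \subseteq B^r$ for some $r\in[n]$, and monotonicity gives $u_i(B^r) \ge u_i(Q^s) \ge \MMS_i$, completing the argument.

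The one degenerate situation worth a line is when the chosen piece $Q^s$ happens to be empty. Then $\min_{s'} u_i(Q^{s'}) = u_i(\emptyset) = 0$, so $\MMS_i = 0$ (valuations are non-negative, since $u_i$ is monotonic with $u_i(\emptyset)=0$), and the inequality $u_i(B^r)\ge 0 = \MMS_i$ holds trivially for every $r$. So there is no serious obstacle here: the heart of the proof is the pigeonhole count $n-1 < n$, and the only thing requiring minor care is verifying that a $y$-free subpath is swallowed by one of the $B^t$.
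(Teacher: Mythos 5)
Your proof is correct and follows essentially the same route as the paper's: both use the pigeonhole count ($n-1$ cut items versus $n$ pieces of an MMS-witnessing connected partition) to find a piece avoiding all $y^j$, show that piece is contained in a single $B^r$, and conclude by monotonicity. The extra remark about the empty piece is harmless but unnecessary, since the containment and the inequality hold trivially in that case.
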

\begin{proof}
	Let $P = (P^1,\dots,P^n)$ be a connected partition of the items (ordered left-to-right) so that $u_i(P^j) \ge \MMS_i$ for all $j\in [n]$. Since there are $n$ bundles in $P$ but only $n-1$ items $y^1,\dots,y^{n-1}$, there exists a bundle $P^k$ in $P$ that does not contain any $y^j$.
	Writing $Y = \{y^1,\dots,y^{n-1}\}$, we see that there is some $r \in [n]$ such that\small
	\[ (P^1 \cup \dots \cup P^{k-1}) \cap Y = \{y^1,\dots,y^{r-1}\}
		\quad \text{and} \quad
		(P^{k+1} \cup \dots \cup P^n) \cap Y = \{y^r,\dots,y^{n-1}\}.
	\]\normalsize
	Thus, we have $P^k \subseteq P(y^{r-1}+1, y^{r}-1) = B^r$ so that $u_i(B^r) \ge u_i(P^k) \ge \MMS_i$.
\end{proof}

\begin{theorem}
	For a path, the EF1 allocations constructed by any of our methods guarantee MMS.
\end{theorem}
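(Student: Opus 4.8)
The plan is to reduce everything to Lemma~\ref{lem:mms-n-1}. For each of our constructions I would identify $n-1$ items $y^1<\dots<y^{n-1}$ whose removal cuts the path into the $n$ intervals $B^1,\dots,B^n$ of that lemma, and then show that the bundle $A(i)$ handed to each agent~$i$ satisfies $u_i(A(i))\ge u_i(B^r)$ for every $r\in[n]$ — in fact it is enough to establish this for the single index $r$ produced inside the proof of Lemma~\ref{lem:mms-n-1}. Since that lemma supplies some $r$ with $u_i(B^r)\ge\MMS_i$, the inequality $u_i(A(i))\ge\MMS_i$ follows. So the whole argument is a bookkeeping exercise: for each protocol, pick the right ``boundary items'' $y^j$ and verify the domination inequality.

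Next I would dispatch the four routine cases. \emph{Cut-and-choose, $n=2$:} take $y^1$ to be Alice's lumpy tie $v_j$, so $B^1=L(v_j)$ and $B^2=R(v_j)$; Bob receives $\max\{u_B(B^1),u_B(B^2)\}$ by choosing, and Alice's bundle ($L(v_j)\cup\{v_j\}$ or $R(v_j)\cup\{v_j\}$) has value $\ge\max\{u_A(B^1),u_A(B^2)\}$ by the lumpy-tie inequalities \eqref{eq:def-lumpy-tie-1} and monotonicity. \emph{Moving knife, $n=3$:} in every terminating branch a shouter gets a left initial segment $L'$ and the other two agents split the remainder around $v_r$, each receiving (via $\Lumpy$, or the explicit choice) a bundle weakly better than both pieces flanking $v_r$ (Lemma~\ref{lem:lumpy}); taking $y^1$ to be the last item of $L'$ and $y^2=v_r$, the shouter beats $B^1\subseteq L'$ by monotonicity and beats $B^2,B^3$ by definition of shouter, while every other agent $i$ beats $B^2,B^3$ by Lemma~\ref{lem:lumpy} and beats $B^1$ because $u_i(B^1)\le u_i(L'\text{ minus its last item})<\max\{u_i(B^2),u_i(B^3)\}$, the last inequality holding since $i$ did not shout at the step immediately before termination; I would check Steps 2, 3, 4(a), 4(b) in turn, mirroring the EF1 analysis in the proof of Theorem~\ref{thm:EF1-3agents}. \emph{EF2 via Sperner, any $n$ (and the four-agent EF2 guarantee):} take $y^1,\dots,y^{n-1}$ to be the construction's boundary items, so $B^r$ is the basic bundle of index $r$; then $u_i(A_*(i))\ge u_i(I_i^{\phi(i)})\ge u_i(I_i^r)\ge u_i(B^r)$ for all $r$, using $A_*(i)\supseteq I_i^{\phi(i)}$, \eqref{eq:ef2:my-bundle-is-best}, and $I_i^r\supseteq B^r$. \emph{Identical valuations:} here $\MMS$ coincides with the optimal egalitarian value $u_L^*$ (both are the maximum over connected partitions of the minimum bundle value), and Claim~1 in the proof of Theorem~\ref{thm:ef1-identical} guarantees that every bundle of the returned allocation has value at least $u_L^*$, so every agent gets at least $\MMS$.

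The genuine obstacle is the four-agent EF1 allocation of Section~\ref{sec:four}. Here I would take $y^1,y^2,y^3$ to be the rightmost items of $I_*^1,I_*^2,I_*^3$ (so $y^2,y^3$ are the construction's boundary items and $B^4=I_*^4$), and I would use $u_i(A_*(i))\ge\hat u_i(\bfx_i,\phi(i))\ge\hat u_i(\bfx_i,r)$ (Proposition~\ref{prop:ef1:approx-correct} together with $\hat L_i(\bfx_i)=\phi(i)$) to reduce the task to checking $\hat u_i(\bfx_i,r)\ge u_i(B^r)$ for each $r$, plus the free bound $u_i(A_*(i))\ge u_i(B^{\phi(i)})$. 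The delicate point is that the virtual valuation of an \emph{end} bundle can undercount the true bundle by one item: choosing $y^1$ to be the last item of $I_*^1$ is precisely what forces $\hat u_i(\bfx_i,1)\ge u_i(B^1)$ in all sub-cases, and for $r=4$ I would use that $I_*^4=B^4$ exactly, that the agent with $\phi(i)=4$ receives it in full, and that when $\hat u_i(\bfx_i,4)$ undercounts by the item $y^3+1$ the companion value $\hat u_i(\bfx_i,3)$ simultaneously rises to $u_i(B^3\cup\{y^3\})$. If a single fixed choice of boundary items still failed to close every sub-case, the fallback I would try is to sharpen the use of Lemma~\ref{lem:mms-n-1}: it suffices to dominate the specific bundle $P^k$ of the MMS partition appearing in its proof, which is an interval confined to a single slot $B^r$, rather than all of $B^r$. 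I expect this case analysis — in particular nailing the end-bundle discounts — to be where essentially all the work lies; the remaining cases are mechanical once Lemma~\ref{lem:mms-n-1} is in place.
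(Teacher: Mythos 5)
Your overall strategy is exactly the paper's: reduce every construction to Lemma~\ref{lem:mms-n-1} by exhibiting $n-1$ cut items and showing that each agent's bundle dominates every resulting interval $B^r$. Your treatments of cut-and-choose, identical valuations, and the EF2 Sperner argument are correct and match the paper. However, in the two hardest cases your concrete choice of cut items is wrong, and the paper's proof turns precisely on choosing them differently.

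For the three-agent protocol, taking $y^1$ to be the \emph{last item of the shouter's bundle} fails at Steps 3 and 4(a). There $\shoutleft$ receives $L=\{v_1,\dots,v_\ell\}$, so your rule gives $y^1=v_\ell$ and $B^2=\{v_{\ell+1}\}\cup M$, a strict superset of the middle bundle $M$ that the shout condition refers to; ``beats $B^2$ by definition of shouter'' is therefore unjustified, and at Step 3 it is provably false: $\shoutleft$ did not shout at the immediately preceding Step 2 (when the middle bundle was $\{v_{\ell+1}\}\cup M$) yet satisfies $u_{\shoutleft}(L)\ge u_{\shoutleft}(R)$, forcing $u_{\shoutleft}(\{v_{\ell+1}\}\cup M)>u_{\shoutleft}(L)$. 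The paper instead takes $y^1=v_{\ell+1}$, the covered item just \emph{outside} the shouter's bundle, so that $B^1=L$, $B^2=M$, $B^3=R$ are exactly the bundles compared when shouting. For the four-agent Sperner construction, the uniform choice ``$y^j=$ rightmost item of $I_*^j$'' breaks for $r=2$ in the sub-case $I_*^2=\{y^1\}\cup B^2\cup\{y^2\}$ with $x_i^1,x_i^2\in\mathbb Z$: your lemma-interval is then $\{y^1,\dots,y^2-1\}$, while $\hat u_i(\bfx_i,2)=u_i^-(\{y^1,\dots,y^2\})$ may equal $u_i(\{y^1+1,\dots,y^2\})$, which is arbitrarily smaller if all the value sits on $y^1$. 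The paper avoids this by choosing the cut items \emph{per agent} as $y^1=\lfloor x_i^1\rfloor$, $y^2=\lfloor x_i^2\rfloor$, $y^3=\lceil x_i^3\rceil$ (legitimate, since Lemma~\ref{lem:mms-n-1} is invoked separately for each $i$), after which $\hat u_i(\bfx_i,j)\ge u_i(B^j)$ holds in every sub-case and $u_i(A_*(i))\ge\hat u_i(\bfx_i,\phi(i))=\max_j\hat u_i(\bfx_i,j)\ge\max_j u_i(B^j)\ge\mathrm{MMS}_i$. Your proposed fallback of dominating only the interval $P^k$ from the lemma's proof does not rescue either case, since $P^k$ can coincide with the entire problematic $B^r$.
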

\begin{proof}
\textit{Discrete cut-and-choose protocol for two agents.}
Suppose Alice's lumpy tie is $v_j$. Then, using the definition of lumpy tie, a connected partition witnessing Alice's MMS value is either $P_1 = (L(v_j), R(v_j) \cup \{ v_j \})$ or $P_2 = (L(v_j) \cup \{ v_j \}, R(v_j))$. At the end of the procedure, Alice receives either $L(v_j) \cup \{ v_j \}$ or $R(v_j) \cup \{ v_j \}$. For either of these options, there is a bundle in $P_1$ and a bundle in $P_2$ which are weakly worse. So Alice receives a bundle that satisfies her MMS value. For Bob, he receives his preferred bundle among $L(v_j)$ or $R(v_j)$. These two bundles are of the shape described in Lemma~\ref{lem:mms-n-1} with $y^1 = v_{j}$, so Bob's choice satisfies his MMS value.

\textit{Moving-knife protocol for three agents.}
The allocation returned by the algorithm of Theorem~\ref{thm:EF1-3agents} guarantees MMS. To see this, first suppose the algorithm terminates in Step 3 or Step 4(a). At that step, the bundles $L$, $M$, and $R$ are of the shape described in Lemma~\ref{lem:mms-n-1} with $y^1 = v_{\ell+1}$ and $y^2 = v_r$. The agent $\shoutleft$ who receives $L$ thinks that $L$ is best among $L,M,R$ (since he shouted), so by Lemma~\ref{lem:mms-n-1} agent $\shoutleft$ receives his MMS value. The proof of Theorem~\ref{thm:EF1-3agents} shows that no other agent envies $\shoutleft$ (even without removing an item), so that every other shouter receives value at least $u_i(L)$, meaning that player $\shouter$ receives at least her MMS value (since she shouted, she finds $L$ weakly better than $M$ and $R$). Finally, agent $\chooser$ does not envy any other agent (even without removing an item), and so automatically receives at least his MMS value.
Next, suppose the algorithm terminates in Step 2. As the proof of Theorem~\ref{thm:EF1-3agents} shows, the agent $\shoutleft$ does not envy either of the other players, and hence receives at least his MMS value. Apply Lemma~\ref{lem:mms-n-1} with $y^1 = v_{\ell}$ and $y^2 = v_r$. It follows that for every agent, one of $L \setminus \{ v_\ell \}$, $M$, or $R$ provides at least the MMS value. But we know from the proof that any agent prefers either $M$ or $R$ to $L \setminus \{ v_\ell \}$, and we know that all agents $i \neq \shoutleft$ receive a bundle weakly preferred to both $M$ and $R$, so $i$ receives his MMS value.
Finally, suppose the algorithm terminates in Step 4(b). Similarly to the argument for Step 2, apply Lemma~\ref{lem:mms-n-1} with $y^1 = v_{\ell+1}$ and $y^2 = v_r$. Then, each player thinks that one of $L$, $M$, $R$ provides the MMS value. By the proof of Theorem~\ref{thm:EF1-3agents}, agent $\shoutleft$ receives a bundle weakly preferred to each of $L,M,R$, and agents $i \neq \shoutleft$ prefer either $M$ or $R$ to $L$, and receive a bundle that is weakly preferred to both $M$ and $R$, so they also receive their MMS value.

\textit{Identical valuations.} Algorithm~\ref{alg:leximin-ef1} gives each agent a utility of at least $u_L^*$. By their definitions, the MMS-value is the same as the optimal egalitarian welfare under identical valuations.

\emph{EF2 via Sperner's lemma.} For each agent $i$, by Lemma~\ref{lem:mms-n-1}, there exists a basic bundle whose value is at least $\MMS_i$. We showed that the allocation $A_*$ is such that agent $i$ weakly prefers the bundle $i$ receives in $A_*$ to any basic bundle. Hence, $A_*$ is an MMS allocation.

\emph{EF1 for four agents via Sperner's lemma.} For each vertex $\bfx_i$ of the full-labeled simplex $S^*$, invoke Lemma~\ref{lem:mms-n-1} with $y^1 = \lfloor x_i^1 \rfloor$, $y^2 = \lfloor x_i^2 \rfloor$, $y^3 = \lceil x_i^3 \rceil$. By case-analysis one can check that $\hat u_i(\bfx_i, j) \ge u_i(B^j)$ for each $j = 1,2,3,4$, where the $B^j$'s are defined like in Lemma~\ref{lem:mms-n-1}. By Proposition~\ref{prop:ef1:approx-correct}, we have that $u_i(A_*(i)) \ge \hat u_i(\bfx_i, \phi(i)) = \max_{j\in [n]} \hat u_i(\bfx_i, j) = \max_{j\in [n]} u_i(B^j) \ge \MMS_i$.
\end{proof}

\subsection{Example of an instance with no EFX allocation}
We define EFX as follows.
\begin{definition}[EFX]
An allocation $A$ satisfies \emph{EFX (Envy-freeness up to any outer good)} if the envy is bounded up to the least valuable outer good, i.e., for any pair $i,j \in N$ of agents, and for every good $u \in A(j)$ such that $A(j) \setminus \{u\}$ is connected, we have $u_i(A(i)) \ge u_i(A(j)\setminus \{u\})$.
\end{definition}

\begin{example}\label{ex:EFX}
Consider the instance 2--3--1--3 for three agents. This instance admits no connected EFX allocation: It is clear that no allocation in which some bundle is empty satisfies EFX. In (2, 3, 1--3), the left agent envies the right agent even after removing the outer good of value 1; in (2, 3--1,3), the left agent envies the middle agent even after removing the outer good of value 1; and in (2--3, 1,3), the middle agent envies the left agent even after removing the outer good of value 3. One can also consider the instance 1--1--3--3 for two agents. \qed
\end{example}

\subsection{Example of a non-traceable graph that guarantees EF1}
\begin{example}\label{ex:EF1:nonHamiltonian}
Consider a star with three leaves. We will divide the graph among three agents. Consider the allocation where each agent chooses the most favorite leaf-vertex among the unallocated vertices in order, with the last agent in that order being assigned to the central vertex of the star. The resulting allocation satisfies EF1, since the envy towards agents allocated to a single item can be bounded up to one good, and the first and second agent do not envy the third agent if one removes the central vertex from his bundle.
\qed
\end{example}

\subsection{Efficient computation of SMMS allocations for identical valuations}\label{app:dyn}
In this section, we discuss how to compute the initial allocation that is needed for Algorithm~\ref{alg:leximin-ef1} to obtain an EF1 allocation under identical valuations (Theorem~\ref{thm:ef1-identical}).

Given a path $P=(v_1,v_2,\ldots,v_m)$, $n$ agents with identical  monotonic valuations $u$, and an allocation $A = (I^1,\dots,I^n)$, write $u_L(A) := \min_{j\in N} u(I^j)$ to denote the \emph{egalitarian welfare} of $A$, i.e., the minimum valuation in $A$ among all agents, and $L(A) := \{ j \in [n] : u(I^j) = u_L(A) \}$ for the set of agents who obtain this minimum utility. We refer to the agents in $L(A)$ as the \textit{losers}.
\begin{definition}[SMMS allocation]
	An allocation of a path for agents with identical monotonic valuations satisfies \emph{strong maximin share (SMMS)} if it minimizes the number of losers among all allocations maximizing the egalitarian welfare. 
\end{definition}
In the following, we provide an efficient dynamic programming algorithm for computing an SMMS allocation. We first present a simple algorithm that runs in time $O(m^2n)$, and then we refine it to improve the running time to $O(mn)$.

\subsubsection*{A suboptimal algorithm.}

In order to use a dynamic programming approach, we start by deriving a recurrence relation characterizing the SMMS allocations. 

Given two partial allocations $A,A'$, we write $A\succeq A'$ if either $u_L(A)>u_L(A')$ holds, or both $u_L(A)=u_L(A')$ and $|L(A)|\leq |L(A')|$ hold; furthermore, we write $A\sim A'$ if both $A\succeq A'$ and $A\preceq A'$ hold. We observe that an allocation $A$ of path $P$ for $n$ agents is SMMS iff it is ``optimal'' according to the ordering relation $\succeq$, i.e., iff $A\succeq A'$ for any allocation $A'$ (of path $P$ for $n$ agents).  

For any $h\in [m+1]$ and $j\in [m]$, let $u[h,j]:=u(P(v_h,v_j))$ be the utility assigned by $u$ to the path segment from $v_h$ to $v_j$. If $h > j$ then we use the convention that $P(v_h,v_j)=\emptyset$ and $u[h,j]=0$. Given $i\in [n]$ and $j\in [m]\cup\{0\}$, let  $A[i,j]$ be an SMMS allocation of the subpath $P(v_1,v_j)$ for $i$ agents. Also, let $\Egal[i,j]$ and $L[i,j]$ denote the egalitarian welfare and the number of losers of the SMMS allocation $A[i,j]$. Note that our ultimate aim is to find $A[n,m]$.

For any $i\in [n]\setminus\{1\}$, $j\in [m]\cup \{0\}$ and $h\in [j+1]$, let $A[i,h,j]$ be an allocation of the subpath $P(v_1,v_j)$ for $i$ agents that is optimal according to $\succeq$ after constraining the $i$-th bundle to be equal to the subpath $P(v_h,v_j)$; furthermore, let $\Egal[i,h,j]$ and $L[i,h,j]$ denote the egalitarian welfare and the number of losers of allocation $A[i,h,j]$. Observe that we allow $h$ to reach the value $j+1$ in order to model the case in which the $i$-th agent gets an empty bundle. By definition, it holds that
\begin{align}
	\Egal[i,h,j]&=\min\{\Egal[i-1,h-1],u[h,j]\},\label{recur_1M}\\
	L[i,h,j]&=
	\begin{cases}L[i-1,h-1]&\text{ if }\Egal[i-1,h-1]<u[h,j],\\
		L[i-1,h-1]+1&\text{ if }\Egal[i-1,h-1]=u[h,j],\\
		1&\text{ if }\Egal[i-1,h-1]>u[h,j],
	\end{cases}\label{recur_1L}
\end{align}
where $\Egal[1,h-1]=u[1,h-1]$ and $L[1,h-1]=1$.

One can easily observe that, for any fixed $i\in [n]$ and $j\in [m]\cup\{0\}$, an optimal allocation $A[i,j]$ of subpath $P(v_1,v_j)$ for $i$ agents can be computed according to the following recurrence relation:
\begin{equation}\label{recur_1A}
	A[i,j]:=\begin{cases}
		\emptyset &\text{if }j=0,\\
		P(v_1,v_j) &\text{if }i=1,\\
		\text{the best allocation in } \{A[i,h,j]:h\in [j+1]\}&\text{otherwise},
	\end{cases}
\end{equation}
where the quality of each allocation $A[i,h,j]$ depends on $\Egal[i,h,j]$ and $L[i,h,j]$ only. 

The recurrence relation \eqref{recur_1A} can be used to design a dynamic programming algorithm that computes the SMMS allocation $A[n,m]$ in time $O(m^2n)$. To do this, we will iteratively compute an integer $k[i,j]$ such that $A[i,j]\sim A[i,k[i,j],j]$ for all $i\in [n]$ and for all $j\in [m]$. 

To do this, in each round $(i,j)$, we identify the allocations $A[i,h,j]$ for each $h\in [j+1]$, and compute the corresponding values $\Egal[i,h,j]$ and $L[i,h,j]$ using \eqref{recur_1M} and \eqref{recur_1L}. Using the computed values $\Egal[i,h,j]$ and $L[i,h,j]$ we can then use \eqref{recur_1A} to find the index $k[i,j]$ such that $A[i,k[i,j],j] = A[i,j]$ is optimal, and we store the resulting values $\Egal[i,j]$ and $L[i,j]$ (which will be used in the subsequent rounds). Then we proceed to the next round. Finally, at the end of the last round $(n,m)$, we can recursively reconstruct the optimal allocation $A[n,m]$ by using the indices of type $k[i,j]$ previously stored.

By \eqref{recur_1A}, the resulting algorithm returns an SMMS allocation, and its time complexity is $O(m^2n)$, given by  the number of rounds (which is the number of pairs $(i,j)$ which is $O(nm)$) multiplied by the complexity of each round (checking each value of $h$ which is in $O(m)$).

\subsubsection*{An improved algorithm.}
By exploiting the monotonicity properties of the valuation functions, we can improve the above algorithm and  lower its running time. In particular, we will see how to execute each round $(i,j)$ in constant amortized time, thus lowering the overall time complexity to $O(mn)$. 

The problem with the existing algorithm is we need to check all possible values of $k[i,j]$. Instead we will introduce three quantities $k_1[i,j]$, $k_2[i,j]$, and $k_3[i,j]$, each of which can be computed quickly, and prove that one of the three values provides a suitable value of $k[i,j]$.  Specifically, for any $i\in [n]$ and $j\in [m]$, let\footnote{In the definition of $k_1[i,j]$, if the set $D:=\{h\geq 1:\Egal[i-1,h-1]<u[h,j]\}$ is empty, taking the maximum between $\sup D$ and $1$ guarantees that $k_1[i,j]$ is equal to $1$ in this extreme case.}
\begin{align*}
	&k_1[i,j]:=\max\{\sup\{h\geq 1:\Egal[i-1,h-1]<u[h,j]\},1\},\nonumber\\
	&k_2[i,j]:=\max\{h\geq 1:\Egal[i-1,h-1]\leq u[h,j]\},\nonumber\\
	&k_3[i,j]:=k_2[i,j]+1.
\end{align*}
By the monotonicity of the utility function, we have that, for any fixed value of $h$, $u[h,j]$ is non-decreasing in $j$. This implies that for $t = 1,2,3$ and fixed $i \in [n]$, the value $k_t[i,j]$ is non-decreasing in $j\in [m]$. Hence the integers of type $k_t[i,j]$ can be recursively written as
\begin{align}
	&k_1[i,j]=\max\{\sup\{h\geq k_1[i,j-1]:\Egal[i-1,h-1]<u[h,j]\},k_1[i,j-1]\},\nonumber\\
	&k_2[i,j]=\max\{h\geq k_2[i,j-1]:\Egal[i-1,h-1]\leq u[h,j]\},\nonumber\\
	&k_3[i,j]=k_2[i,j]+1,\label{k1}
\end{align}
where $k_t[i,0]:=1$ for $t = 1,2,3$. 

In Lemma \ref{lem:alg2} we will show that $A[i,j]$ can be set equal to the best allocation among the three allocations of type $A[i,k_t[i,j],j]$ (with $t=1,2,3$). We first outline some preliminary properties in Lemma \ref{lem:alg1}.
\begin{lemma}\label{lem:alg1}
	For any $i\in [n]$ and $j\in [m]$, we have
	\begin{enumerate}
		\item[(i)] $A[i,j]\succeq A[i,j-1]$
		\item[(ii)] $\Egal[i,h,j]$ is non-decreasing in $h\leq k_1[i,j]$, it is constant in $k_1[i,j]<h\leq k_2[i,j]$, and it is non-increasing in $h>k_2[i,j]$.
	\end{enumerate}
\end{lemma}
\begin{proof}
	We first show (i). Given $i\in [n]$ and $j\in [m]$, let $A'[i,j]$ be the allocation obtained from $A[i,j-1]$ by adding item $j$ to the last bundle of $A[i,j-1]$. By the optimality of $A[i,j]$, we have that $A[i,j]\succeq A'[i,j]\succeq A[i,j-1]$, and this shows (i). 
	
	Now, we show (ii). We have the following properties: (a) $\Egal[i-1,h-1]$ is non-decreasing in $h$ (by (i)), and (b) $u[h,j]$ is non-increasing in $h$ (by the monotonicity of the valuation function). Thus, we get the following additional properties, that immediately imply (ii):
	\begin{itemize}
		\item $\Egal[i-1,h-1]<u[h,j]$ for any $h\leq k_1[i,j]$ (by definition of $k_1[i,j]$ and because of (a) and (b)), and then $\Egal[i,h,j]=\Egal[i-1,h-1]$ is non-decreasing in $h\leq k_1[i,j]$ (by (a));
		\item $u[h,j]<\Egal[i-1,h-1]$ for any $h> k_2[i,j]$ (by definition of $k_2[i,j]$ and because of (a) and (b)), and then $\Egal[i,h,j]=u[h,j]$ is non-increasing in $h> k_2[i,j]$ (by (b));
		\item $\Egal[i,h,j]=\Egal[i-1,h-1]=u[h,j]$ in $k_1[i,j]<h\leq k_2[i,j]$ (by definition of both $k_1[i,j]$ and $k_2[i,j]$), and then $\Egal[i,h,j]$ is necessarily constant in $k_1[i,j]<h\leq k_2[i,j]$ (by (a) and (b)). \qedhere
	\end{itemize}
\end{proof}
\begin{lemma}[Characterization of SMMS Allocations]\label{lem:alg2}
	Given $i\in [n]$ and $j\in [m]$, at least one index $k[i,j]\in \{k_1[i,j],k_2[i,j],k_3[i,j]\}$ guarantees that $A[i,k[i,j],j]$ is an SMMS allocation of subpath $P(v_1,v_j)$ for $i$ agents (i.e., $A[i,j]\sim A[i,k[i,j],j]$). 
\end{lemma}
\begin{proof}
	Let $i\in [n]$, $j\in [m]$, and let $k\in [j+1]$ be an index such that $A[i,k,j]$ is optimal (i.e., SMMS).  We consider  three cases, depending on the value of $k$.
	\begin{enumerate}
		\item[(a)] Suppose $k\leq k_1[i,j]$. By exploiting the monotonicity properties of Lemma \ref{lem:alg1} and the definition of $k_1[i,j]$, we will show that $A[i,k_1[i,j],j]\succeq A[i,k,j]$. By Lemma \ref{lem:alg1}, the set $H_1$ of integers $h\leq k_1[i,j]$ such that $\Egal[i,h,j]=\Egal[i,j]$ is an integer interval having $k_1[i,j]$ as its maximum, thus both $k$ and $k_1[i,j]$ belong to $H_1$.
		Furthermore, for any $h\in H_1$, the egalitarian welfare of each allocation $A[i,h,j]$ is equal to that of allocation $A[i-1,h-1]$, and the set of losers in $A[i,h,j]$ is the same as in $A[i-1,h-1]$ (indeed, agent $i$ is not a loser since   $\Egal[i-1,h-1]<u[h,j]$). Thus, since $A[i-1,k_1[i,j]-1]\succeq A[i-1,h-1]$ (by Lemma \ref{lem:alg1}(i)), we necessarily have that $A[i,k_1[i,j],j]\succeq A[i,h,j]$ for any $h\in H_1$, and this shows the optimality of allocation $A[i,k_1[i,j],j]$. 
		\item[(b)] Suppose $k_1[i,j]<k\leq k_2[i,j]$. Then we can analogously show that $A[i,k_2[i,j],j]\succeq A[i,k,j]$. The set $H_2$ of values $h$ with $k_1[i,j]<h\leq k_2[i,j]$ and $\Egal[i,h,j]=\Egal[i,j]$ is an integer interval having $k_2[i,j]$ as its maximum, thus both $k$ and $k_2[i,j]$ belong to $H_2$. For any $h\in H_2$, the valuation $u[h,j]$ for the $i$-th bundle of $A[i,h,j]$ is equal to the egalitarian welfare  $\Egal[i,h-1]$ of the allocation restricted to the first $i-1$ bundles (by definition of $k_1[i,j]$ and $k_2[i,j]$, and by Lemma~\ref{lem:alg1}(ii)). Thus, the number of losers $L[i,h,j]$ in allocation $A[i,h,j]$ is equal to $L[i-1,h-1]$ (i.e., the quantity of losers among the first $i-1$ agents) plus $1$ (i.e., agent $i$). As $A[i-1,k_2[i,j]-1]\succeq A[i-1,h-1]$ (by Lemma \ref{lem:alg1}(i)) and $\Egal[i-1,k_2[i,j]-1]=\Egal[i-1,h-1]$, we necessarily have that $L[i,k_2[i,j],j]\leq L[i,h,j]$ for any $h\in H_2$. We conclude that the best allocation of type $A[i,h,j]$ with $h\in H_2$ is achieved by $h:=k_2[i,j]$, and this shows the optimality of $A[i,k_2[i,j],j]$.
		\item[(c)] Suppose $k_3[i,j]\leq k$. Then we can also show that $A[i,k_3[i,j],j]\succeq A[i,k,j]$. The set $H_3$ of values $h\geq k_3[i,j]$ with $\Egal[i,h,j]=\Egal[i,j]$ is an interval whose minimum is $k_3[i,j]$, thus both $k$ and $k_3[i,j]$ belong to $H_3$. As $u[k_3[i,j],j]<\Egal[i-1,k_3[i,j]-1]$ (by definition of $k_3[i,j]$), we have that agent $i$ is a loser in $A[i,k_3[i,j],j]$, and it is the only one. Thus, allocation $A[i,k_3[i,j],j]$ must be necessarily optimal. \qedhere
	\end{enumerate}
\end{proof}
By exploiting \eqref{recur_1A} and the characterization of optimal allocations given in Lemma \ref{lem:alg2}, we can derive an efficient dynamic programming algorithm (Algorithm \ref{alg:stronglyMMS}) that computes an SMMS allocation.  
\begin{theorem}\label{thm:stronglyMMS}
	Algorithm \ref{alg:stronglyMMS}  computes an SMMS allocation in $O(mn)$ time.
\end{theorem}
\begin{algorithm}[p]
	\caption{Find an SMMS allocation of a path $P$ with identical and monotonic valuations}
	\label{alg:stronglyMMS}
	\begin{algorithmic}[1]
		\REQUIRE a path $P(v_1,v_m)$, $n$ agents with identical and monotonic valuations $u$ \{$u[h,j]$ denotes  valuation $u(P(v_h,v_j))$, with the convention that $P(v_h,v_j)=\emptyset$ and $u[h,j]=0$ if $h>j$.\}
		\ENSURE an SMMS allocation of $P$
		\FOR{$j = 0,\dots,m$}
		\STATE $\Egal[1,j]\leftarrow u[1,j]$, $L[1,j]\leftarrow 1$
		\ENDFOR
		\FOR{$i = 1,\dots,n$}
		\STATE $\Egal[i,0]\leftarrow 0$, $L[i,0]\leftarrow 0$
		\STATE $k_t[i,0]\leftarrow 1$ for $t=1,2,3$
		\ENDFOR
		\COMMENT{$\Egal[i,j]$ and $L[i,j]$ will be computed in such a way to be respectively the egalitarian welfare and the number of losers of an optimal allocation of subpath $P(v_1,v_j)$ for $i$ agents; each $k_t[i,j]$ will be  defined as in \eqref{k1}, and will be used to compute an index $k[i,j]$ such that $A[i,k[i,j],j]$ is an optimal allocation of subpath $P(v_1,v_j)$ for $i$ agents.}
		\FOR{$i = 2,\dots,n$}
		\FOR{$j = 1,\dots,m$}
		\STATE $k_1[i,j]\leftarrow \max\{\sup\{h\geq k_1[i,j-1]:\Egal[i-1,h-1]<u[h,j]\},k_1[i,j-1]\}$
		\STATE $k_2[i,j]\leftarrow \max\{h\geq k_2[i,j-1]:\Egal[i-1,h-1]\leq u[h,j]\}$
		\COMMENT{By the monotonicity of the utility function, to compute $k_1[i,j]$, with $t=1$  (resp. $t=2$), it is sufficient to analyze in increasing order all the integers $h>k_t[i,j-1]$ until a value $h$ with $\Egal[i-1,h-1]\geq u[h,j]$ (resp. $\Egal[i-1,h-1]>u[h,j]$) is reached; then, $k_t[i,j]$ can be set equal to $h-1$. We conclude that $k_t[i,j]$ can be found in $k_t[i,j]-k_t[i,j-1]$ iterations (for any fixed $i\in [n]$, $j\in [m]$, and $t = 1,2$).}
		\STATE $k_3[i,j]\leftarrow k_2[i,j]+1$
		\STATE $M_t\leftarrow \min\{\Egal[i-1,k_t[i,j]-1],u[k_t[i,j],j]\}$ for $t = 1,2,3$
		\STATE $L_t\leftarrow \begin{cases}L[i-1,h-1]&\text{ if }\Egal[i-1,h-1]<u[h,j],\\
			L[i-1,h-1]+1&\text{ if }\Egal[i-1,h-1]=u[h,j],\\
			1&\text{ if }\Egal[i-1,h-1]>u[h,j].
		\end{cases}$ \\
		\COMMENT{We observe that $M_t$ and $L_t$ are respectively the egalitarian welfare and the number of losers of allocation $A[i,k_t[i,j],j]$.}
		\STATE $t^*\leftarrow$ the index $t$ maximizing $M_t$, and then, minimizing $L_t$ 
		\STATE $\Egal[i,j]\leftarrow M_{t^*}$, $L[i,j]\leftarrow L_{t^*}$, $k[i,j]\leftarrow k_{t^*}[i,j]$
		\COMMENT{$t^*$ is computed in such a way that $A[i,k_{t^*}[i,j],j]$ is the optimal allocation among those of type $A[i,k_t[i,j],j]$, and by Lemma~\ref{lem:alg2}, $A[i,k[i,j],j]$ is an optimal allocation for subpath $P(v_1,v_j)$ and $i$ agents.}
		\ENDFOR
		\ENDFOR
		\STATE $I^n\leftarrow P(v_{k[n,m]},v_{m})$
		\STATE $j\leftarrow k[n,m]-1$
		\FOR{$i=n-1,\ldots,2,1$}
		\STATE $I^i\leftarrow P(v_{k[i,j]},v_{j})$
		\STATE $j \leftarrow k[i,j]-1$
		\ENDFOR
		\RETURN $A=(I^1,\ldots, I^n)$
	\end{algorithmic}
\end{algorithm}
\begin{proof}
	We first show that the output of Algorithm~\ref{alg:stronglyMMS} is an SMMS allocation. In lines 1--5 of the algorithm  we initialize, for $i=1$ or $j=0$, the maximum egalitarian welfare $\Egal[i,j]$ and the number of losers $L[i,j]$ of the optimal allocation $A[i,j]$ (of path $P(v_1,v_j)$ for $i$ agents). By using the characterization provided in Lemma \ref{lem:alg2}, in lines 6--14 we iteratively compute an index $k[i,j]$ such that $A[i,k[i,j],j]$ is an SMMS allocation of subpath $P(v_1,v_j)$ for $i$ agents, and we compute the corresponding maximum egalitarian welfare $\Egal[i,j]$ and number of losers $L[i,j]$. Finally, in lines 15--20 we recursively reconstruct the optimal allocation $A[n,m]$ that is returned as output. 

	Now, we show that the time complexity of Algorithm \ref{alg:stronglyMMS} is $O(m n)$. Observe that the body of the nested for-loops in lines 6--14 can be performed in time $T(i,j)=c\sum_{t=1}^2(k_t[i,j]-k_t[i,j-1])$, where $c$ is a constant that does not depend on $i$ and $j$. Indeed, this running time depends on the computation in lines 8--9 of each index $k_t[i,j]$ for $t = 1,2$, and to compute it we can simply analyze all the indices from $k_t[i,j-1]+1$ to $k_t[i,j]+1$ only. We conclude that the time complexity $T$ of the nested for-loops in lines 6--14 satisfies 
	\begin{align*}
		T&=\sum_{i=1}^n\sum_{j=1}^m T(i,j)=\sum_{i=1}^n\sum_{j=1}^m c\sum_{t=1}^2(k_t[i,j]-k_t[i,j-1])=\sum_{t=1}^2c\sum_{i=1}^n\sum_{j=1}^m (k_t[i,j]-k_t[i,j-1])\\
		&=\sum_{t=1}^2c\sum_{i=1}^n(k_t[i,m]-k_t[i,0])\leq 2c\sum_{i=1}^n m=2cnm\in O(mn).
	\end{align*}
	Since the time complexity of the other parts of the algorithm is clearly $O(mn)$, it follows that Algorithm~\ref{alg:stronglyMMS} terminates in $O(mn)$ time.
\end{proof}

\end{document}